\crefname{enumi}{}{}
\tikzstyle{overbrace text style}=[font=\tiny, above, pos=.5, yshift=5pt]
\tikzstyle{overbrace style}=[decorate,decoration={brace,raise=5pt,amplitude=3pt}]
\definecolor{cadmiumgreen}{rgb}{0.0, 0.42, 0.24}
\newtheorem{theorem}{Theorem}[section]
\newtheorem{corollary}[theorem]{Corollary}
\newtheorem{lemma}[theorem]{Lemma}
\newtheorem{claim}[theorem]{Claim}
\theoremstyle{definition}
\newtheorem{remark}[theorem]{Remark}
\newtheorem{observation}[theorem]{Observation}
\newcommand{\SC}{\text{\normalfont SC}}
\newcommand{\MC}{\text{\normalfont MC}}
\newcommand{\aris}[1]{{\color{blue}[Aris: #1]}}
\newcommand{\tas}{\textrm{TAS}}
\newcommand{\ord}{\textrm{ORD}}
\newcommand{\dis}{\textrm{DIST}}
\begin{document}

\allowdisplaybreaks

\title{\bf Improved Metric Distortion via Threshold Approvals}

\author[1]{Elliot Anshelevich}
\author[2]{Aris Filos-Ratsikas}
\author[1]{Christopher Jerrett}
\author[3]{Alexandros A. Voudouris}

\affil[1]{Department of Computer Science, Rensselaer Polytechnic Institute, USA}
\affil[2]{School of Informatics, University of Edinburgh, UK}
\affil[3]{School of Computer Science and Electronic Engineering, University of Essex, UK}

\renewcommand\Authands{ and }
\date{}

\maketitle

\begin{abstract}
We consider a social choice setting in which agents and alternatives are represented by points in a metric space, and the cost of an agent for an alternative is the distance between the corresponding points in the space. The goal is to choose a single alternative to (approximately) minimize the social cost (cost of all agents) or the maximum cost of any agent, when only limited information about the preferences of the agents is given. Previous work has shown that the best possible distortion one can hope to achieve is $3$ when access to the ordinal preferences of the agents is given, even when the distances between alternatives in the metric space are known. We improve upon this bound of $3$ by designing deterministic mechanisms that exploit a bit of cardinal information. We show that it is possible to achieve distortion $1+\sqrt{2}$ by using the ordinal preferences of the agents, the distances between alternatives, and a threshold approval set per agent that contains all alternatives for whom her cost is within an appropriately chosen factor of her cost for her most-preferred alternative. We show that this bound is the best possible for any deterministic mechanism in general metric spaces, and also provide improved bounds for the fundamental case of a line metric. 
\end{abstract}

\section{Introduction}
\textit{Social choice theory} is concerned with aggregating the heterogeneous preferences of individuals over a set of outcomes into a single decision \citep{brandt2016handbook}. Besides its many applications in traditional domains, such as political elections or voting for policy issues, social choice theory has also been at the epicenter of areas such as \textit{multi-agent systems} \citep{ephrati1991clarke}, \textit{recommendation systems }\citep{ghosh1999voting}, \textit{search engines }\citep{dwork2001rank}, and \textit{crowdsourcing} applications \citep{mao2012social}. Aggregation methods inspired from social choice theory have also been used in relation to machine learning, such as for \textit{regression} and \textit{estimation} tasks \citep{caragiannis2016truthful,chen2018strategyproof,chen2020truthful,kahng2020strategyproof}, as well as \textit{virtual democracy} \citep{noothigattu2018voting,kahng2019statistical,peters2020explainable}.

A central theme underpinning many of these applications is the interplay between the amount of available information and the efficiency of the implemented decisions \citep{xia2022group,zhao2019learning,mandal2019thrifty,amanatidisdon,boutilier2015optimal}. Indeed, it is often the case that access to the preferences of the participants (or \emph{agents}) is restricted due to lack of enough statistical data, or due to computational or cognitive limitations in the elicitation process. This leads to the natural question of whether we can design aggregation rules (or \emph{mechanisms}) that achieve high efficiency even when presented with informational restrictions.

This is the main question studied in the literature of \emph{distortion} \citep{procaccia2006distortion,anshelevich2021distortion}, which measures the deterioration of an aggregate social objective due to limited information about the preferences of the agents. In its original setup (e.g., see \citep{boutilier2015optimal}), limited information is interpreted as having access only to the \emph{ordinal preference rankings} over the possible outcomes, instead of a complete \emph{cardinal utility} structure that fully captures the intensity of the preferences. Over the years, the notion of distortion has been refined to capture different kinds of informational limitations, including restricted ordinal information \citep{gross2017vote,kempe2020communication}, communication complexity \citep{mandal2019thrifty,mandal2020optimal}, and query complexity \citep{amanatidis2021peeking,amanatidisdon}.

The literature on \emph{metric distortion} considers scenarios where the agents and the outcomes exist in a (possibly high-dimensional) metric space, and the costs (rather than utilities) are given by distances that satisfy the triangle inequality. The economic interpretation of these distances is typically in terms of the proximity for political or ideological issues along different axes (e.g., liberal to conservative, or libertarian to authoritarian). The metric distortion of social choice mechanisms, firstly studied in the works of \citet{anshelevich2018approximating} and \citet{anshelevich2017randomized}, is one of the most well-studied settings in the context of this literature, with many variants of the main setting being considered in recent years (see Section 3 of the survey of \citet{anshelevich2021distortion} for an overview). Since its inception in 2015 (the conference version of \citep{anshelevich2018approximating}), the ``holy grail'' of this literature was a mechanism with a metric distortion of $3$ for the \emph{social cost objective} (i.e., the sum of costs of all agents), which would match the corresponding lower bound shown by \citet{anshelevich2018approximating}. The upper bound was finally established by \citet{gkatzelis2020resolving} via the \textsc{Plurality Matching} mechanism, and then later also by \citet{pluralityveto2022} via the much simpler \textsc{Plurality Veto} mechanism. For the other most natural objective, that of minimizing the \emph{maximum cost} of any agent, the best possible bound can also easily be seen to be $3$. 

Crucially, these tight bounds of $3$ are achieved by mechanisms without access to any information of a cardinal nature, i.e., they work by using only the ordinal preference rankings of the agents. It can also be observed (e.g., see \citep{anshelevich2021ordinal}) that if one also assumes access to the distances between the outcomes in the metric space, then $3$ is still the best bound one can hope for. At the same time, a growing recent literature advocates to take the natural next step, and study the distortion when a small amount of cardinal information about the preferences is also available, which, in many cases, is reasonable to elicit \citep{abramowitz2019awareness,amanatidis2021peeking,amanatidis2022few,amanatidisdon,benade2021preference,bhaskar2018truthful}. This motivates the following question:

\begin{quote}
\emph{
    Is it possible to beat the barrier of \ $3$ on the metric distortion if we also have access to limited information of a cardinal nature?}
\end{quote}

\subsection{Our Contribution}
We consider a setting in which a set $N$ of agents and a set $A$ of alternatives are positioned on a metric space, and the preferences of the agents are captured by the distances between their positions and that of the different alternatives. We are interested in the distortion of deterministic mechanisms in terms of the social cost and the maximum cost objectives, i.e., the ratio of the cost of the alternative chosen by the mechanism over the smallest possible cost over all alternatives, taken over all possible instances. Note that with full cardinal information, achieving an optimal distortion of $1$ is trivial.

We answer the main question posed above in the affirmative, by assuming access to the ordinal preferences, the distances between the alternatives in the metric space, and some additional limited cardinal information about the costs of the agents. In particular, for each agent $i \in N$, we have access to an \emph{$\alpha$-threshold approval set} ($\alpha$-TAS) $A_i$, which contains all the alternatives for which the agent has cost at most a factor $\alpha$ times the cost for her most-preferred alternative, with $\alpha$ being a value that can be chosen by the mechanism designer. For general metric spaces, we design mechanisms that achieve a distortion of $1+\sqrt{2}$ in terms of the social cost and the maximum cost objectives, thus beating the aforementioned barrier of $3$. We complement these results with lower bounds on the distortion that \emph{any} deterministic mechanism can achieve when using this amount of information.

For the fundamental special case of a line metric, we provide refined tight bounds on the distortion for both objectives, showing an even larger separation from the bound of $3$ that holds in the absence of any cardinal information and even when the metric is a line. More generally, we fine-tune our analysis to the presence/absence of all three types of information (i.e., only $\alpha$-TAS, $\alpha$-TAS + ordinal preferences, $\alpha$-TAS + known alternative distances, or $\alpha$-TAS + ordinal preferences + known alternative distances) and show that, in most cases, the best possible distortion bounds can be obtained even when only using two types of information. Our results are summarized in Table~\ref{tab:overview}.

\renewcommand{\arraystretch}{1.3}
\begin{table}[t]
\centering
\begin{tabular}{c|c|c|c|c|c}
                                &           & $\tas\cap\ord\cap\dis$ & $\tas\cap\ord$ & $\tas\cap\dis$ & $\tas$ \\
\hline
\multirow{2}{*}{Social cost}    & General   & $[2,1+\sqrt{2}]$  & $1+\sqrt{2}$  & $[1+\sqrt{2},3]$ &  \multirow{2}{*}{$\Theta(n)$} \\
\cline{3-5}
                                & Line      & $2\sqrt{2}-1$     & $2$           & $2\sqrt{2}-1$ \\
\hline
\multirow{2}{*}{Max cost}       & General   & $1+\sqrt{2}$      & \multicolumn{2}{c|}{$[1+\sqrt{2},3]$} & \multirow{2}{*}{$3$}\\
\cline{3-5}
                                & Line      & \multicolumn{3}{c|}{$1+\sqrt{2}$} \\
\hline
\end{tabular}
\caption{An overview of our distortion bounds for deterministic mechanisms that use different types of information. $\tas$ is used to refer to the class of mechanism that have access to the ordinal preferences of the agents, $\dis$ is used for mechanisms that have access to the distances between alternatives, and $\tas$ is used for the mechanism with access to the $\alpha$-threshold approval sets.}
\label{tab:overview}
\end{table}

\subsection{Related Work and Discussion}
The distortion literature is rather extensive, focusing primarily on the utilitarian normalized setting (e.g., see \citep{boutilier2015optimal,caragiannis2017subset,caragiannis2022truthful,FMV20,ebadian2022optimized,filos2014truthful,mandal2019thrifty,mandal2020optimal,bhaskar2018truthful,benade2021preference}) or the metric setting that we consider in this paper (e.g., see \citep{abramowitz2017utilitarians,abramowitz2019awareness,anshelevich2016blind,anshelevich2016truthful,anshelevich2017randomized,anshelevich2017tradeoffs,anshelevich2018approximating,anshelevich2021ordinal,anshelevich2022distortion,CSV22,caragiannis2022truthful2,fain2019random,feldman2016voting,FV21,FKVZ23,ghodsi2019abstention,gkatzelis2020resolving,pluralityveto2022,munagala2019improved,V23}). For more details, we refer the reader to the recent survey of \citet{anshelevich2021distortion}. Below we discuss the works that are mostly relevant to the setting that we study here. 

The term ``approval threshold'' (in fact, ``approval threshold vote'') was introduced in the conference version of the paper of \citet{benade2021preference} (and was later also used by \citet{bhaskar2018truthful}) to refer to an elicitation device that returns a set of alternatives that an agent values higher than a given threshold. \citet{amanatidis2021peeking} introduced a setting where limited cardinal information is elicited via a set of \emph{value} or \emph{comparison queries}; an agent is asked to provide her value for a given alternative, or indicate whether she prefers one alternative over another by more than a given factor. An approval threshold in their setting can be constructed by a single value query (for the agent's most-preferred alternative), and then a number of comparison queries that is logarithmic in the number of alternatives. 

The $\alpha$-TAS that we consider are closely related to the comparison query model of \citet{amanatidis2021peeking}, in that they do not encode information about any absolute value for the cost of the agents, but rather only information that is \emph{relative} to their cost for their most-preferred alternatives. In this sense, these sets are \emph{relative}, and can be viewed as the outcomes of a single relative threshold query, or, equivalently, the outcome of logarithmically-many comparison queries. From a cognitive standpoint, this is an even more conceivable elicitation device since the agents are only asked to perform (cardinal) comparisons rather than to come up with absolute cost numbers. Such comparisons are motivated by and rooted in the ideas of the Von Neumann-Morgenstern utility theory \citep{von2007theory}; see also \citep{amanatidis2021peeking}. 

A common characteristic of all aforementioned works is that they do not consider the distortion in the metric setting. \citet{abramowitz2019awareness} studied the metric distortion of mechanisms that use limited cardinal information, namely information about whether the ratio between the costs of an agent for any two candidates is larger or smaller than a chosen threshold $\tau$. Their elicitation method is different from ours, and requires information about the relative costs for all pairs of alternatives. In contrast, our $\alpha$-TAS only require aggregate information about the set $A_i$ of alternatives. The two settings coincide only in the case where there are just $2$ alternatives overall, which allows us to obtain a lower bound of $2\sqrt{2}-1 \approx 1.83$ for the social cost for the line metric from their work, for which we also show a matching upper bound. 

\section{Preliminaries}
We consider a single-winner social choice setting with a set $N$ of $n$ {\em agents} and a set $A$ of $m$ {\em alternatives}. Agents and alternatives are represented by points in a metric space. For any $x,y \in N \cup A$, let $d(x,y)$ denote the {\em distance} between the points representing $x$ and $y$. The distances satisfy the standard conditions for metric spaces, namely that $d(x,y) = d(y,x)$, $d(x,y)=0$ if $x=y$, and $d(x,y) \leq d(x,z) + d(z,y)$; the last condition is called the \emph{triangle inequality}. We will use the tuple $I=(N,A,d)$ to denote an \emph{instance} of our setting, and will use $\mathcal{I}$ to denote the set of all instances. 

\paragraph{Input information.} 
A (deterministic) \emph{mechanism} $M$ takes as input some information related to the distances between agents and alternatives, and outputs a single alternative as the {\em winner}. 
We will consider combinations of the following three types of information:
\begin{enumerate}[label=Inf.\arabic*]
    \item The \emph{ordinal preferences} $\left(\succ_i\right)_{i \in N}$ of the agents which are induced by the distances, where $\succ_i$ is a complete ordering over the elements of $A$, and $x \succ_i y$ implies that $d(i,x) \leq d(i,y)$ for any $i \in N$ and $x,y \in A$. \label{enum1}
    
    \item The distances $d(x,y)$ between any pair of alternatives $x,y \in A$. \label{enum2}
    
    \item A set of {\em $\alpha$-threshold approval sets} ($\alpha$-TAS) $(A_i)_{i \in N}$ such that $d(i,x) \leq \alpha \cdot d(i,o_i)$ for any $x \in A_i$, where $o_i$ is the most-preferred alternative of agent $i$ (i.e., $o_i \succ_i y$ for any $y \in A$). We will say that agent $i$ {\em approves} alternative $x$ in case $x \in A_i$. \label{enum3}
\end{enumerate}
Clearly, all of \Cref{enum1,enum2,enum3} can be inferred from an instance $I$. We will use the term $\text{Inf}(I)$ to denote the information that is available to a mechanism for instance $I$. For ease of reference, we classify mechanisms by the type(s) of information that they use at inputs. In particular, we will let $\ord$, $\dis$, and $\tas$ denote the classes of mechanisms that have access to the ordinal preferences of the agents (\Cref{enum1}), the distances between alternatives (\Cref{enum2}), and the $\alpha$-TAS (\Cref{enum3}), respectively. If a mechanism has access to more than one types of information, then it is at the intersection of the corresponding classes. For example, a mechanism in $\ord \cap \tas$ has access to the ordinal preferences and the $\alpha$-TAS, but not to the distances between alternatives.  

As we already explained in the Introduction, the bulk of the metric distortion literature considers mechanisms that are only in $\ord$, which results in a best-possible distortion of $3$ \citep{anshelevich2018approximating,gkatzelis2020resolving,pluralityveto2022} (see below for the formal definition). For mechanisms in $\ord \cap \dis$, the best possible distortion is still $3$ \citep{anshelevich2021ordinal}. Our main results are for the case of having access to all types of information (i.e., mechanisms in $\ord \cap \dis \cap \tas$), but we also provide results for mechanisms in $\ord \cap \tas$ and $\dis \cap \tas$ independently, as well as $\tas$ in isolation. 

\paragraph{Objectives and Distortion.} 
We will consider the two well-known minimization objectives, the \emph{social cost} $\SC$ and the {\em maximum cost} $\MC$, which for an alternative $x$ are defined as $\SC(x) = \sum_{i\in N}d(i,x)$ and $\MC(x) = \max_{i \in N} d(i,x)$.
Given an objective $\text{F} \in \{\SC,\MC\}$, the {\em distortion} of a mechanism $M$ for $\text{F}$ is defined as 
\begin{align*}
\sup_{I \in \mathcal{I}} \frac{\text{F}(M(\text{Inf}(I))|I)}{\min_{j \in A} \text{F}(j|I)},    
\end{align*}
where $M(\text{Inf}(I))$ denotes the alternative chosen by the mechanism when given as input the information $\text{Inf}(I)$ for instance $I$. Clearly, the distortion is at least $1$ for any mechanism, and our goal is to determine the best possible distortion when given combinations of \Cref{enum1,enum2,enum3}.


\section{Social Cost}
In this section, we show bounds on the distortion for the social cost objective for mechanisms in $\ord \cap \dis \cap \tas$. Our main result is a mechanism, coined $(1+\sqrt{2})$-\textsc{Minisum-TAS-Distance}, which achieves a distortion of $1+\sqrt{2} \approx 2.42$, thus breaking the barrier of $3$, which is the best possible without access to the $\alpha$-TAS. In fact, this mechanism does not require access to the ordinal preferences of the agents, i.e., it is in $\dis \cap \tas$. We complement this result with a lower bound of $2$ on the distortion of any deterministic mechanism $M \in \ord \cap \dis \cap \tas$. 

For mechanisms in $\ord \cap \tas$ and $\dis \cap \tas$, we provide stronger lower bounds of $1+\sqrt{2}$; this establishes that $(1+\sqrt{2})$-\textsc{Minisum-TAS-Distance} is the best possible among mechanisms in $\dis \cap \tas$. We also prove that the distortion of any mechanism in $\tas$ is bound to be bad, namely $\Theta(n)$. For the special case of the line metric, we provide more refined \emph{tight} bounds depending on the amount of information available in the input. 


\subsection{Results for General Metric Spaces}\label{sec:sc-general}
We start by showing an upper bound of $1+\sqrt{2}$ for general metric spaces using a mechanism to which we refer as {\sc $\alpha$-Minisum-TAS-Distance}. This mechanism is quite intuitive and does not require any information about the ordinal preferences, i.e., it is in the class $\dis \cap \tas$. The mechanism chooses as the winner an alternative that minimizes the sum of distances from the $\alpha$-TAS of the agents, where the distance between an alternative $x$ and a set $A_i$ is defined as the distance of $x$ from its the closest alternative in $A_i$. 
See Mechanism~\ref{mech:minisum-tas-distance} for a description using pseudocode. 

\begin{algorithm}[t]
\SetNoFillComment
\caption{\sc $\alpha$-Minisum-TAS-Distance}
\label{mech:minisum-tas-distance}
{\bf Input:} Distances between alternatives, $\alpha$-TAS\;
{\bf Output:} Winner $w$\;
$w \in \arg\min_{x \in A} \bigg\{ \sum_{i \in N} \min_{j \in A_i} d(j,x) \bigg\}$\;
\end{algorithm}


\begin{theorem}\label{thm:sc-general-upper}
In general metric spaces, the distortion of {\sc $\alpha$-Minisum-TAS-Distance} for the social cost objective is at most $\max\left\{\alpha,2+\frac{1}{\alpha}\right\}$.
\end{theorem}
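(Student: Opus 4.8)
The plan is to compare the winner $w$ chosen by the mechanism against a social-cost-minimizing alternative $o$ by exploiting the single inequality the mechanism guarantees. Writing $f(x) = \sum_{i \in N} \min_{j \in A_i} d(j,x)$ for the objective that the mechanism minimizes, the only structural fact available for free is $f(w) \le f(o)$. Everything else must come from the triangle inequality and the defining property of the $\alpha$-TAS, namely that $d(i,x) \le \alpha\, d(i,o_i)$ for every $x \in A_i$ and that $o_i \in A_i$.

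First I would bound the cost of the winner in terms of $f(w)$. For each agent $i$, pick $j_i \in A_i$ attaining $\min_{j \in A_i} d(j,w)$; the triangle inequality gives $d(i,w) \le d(i,j_i) + d(j_i,w) \le \alpha\, d(i,o_i) + \min_{j \in A_i} d(j,w)$, where the second step uses that $j_i$ lies in the $\alpha$-TAS. Summing over agents and then invoking the optimality $f(w) \le f(o)$ yields $\SC(w) \le \alpha \sum_{i \in N} d(i,o_i) + f(o)$.

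The crux is to bound the right-hand side against $\SC(o) = \sum_{i \in N} d(i,o)$ agent-by-agent rather than globally, since bounding $\sum_i d(i,o_i)$ by $\SC(o)$ on its own only gives the weaker constant $\alpha + 2$. So I would regroup the bound as $\SC(w) \le \sum_{i \in N} g_i$ with $g_i = \alpha\, d(i,o_i) + \min_{j \in A_i} d(j,o)$, and split on whether the optimal alternative is approved. If $o \in A_i$, then the second term vanishes and $g_i = \alpha\, d(i,o_i) \le \alpha\, d(i,o)$. If $o \notin A_i$, then using $o_i \in A_i$ and the triangle inequality gives $\min_{j\in A_i} d(j,o) \le d(o_i,o) \le d(i,o_i) + d(i,o)$, while the failure of approval means exactly $d(i,o) > \alpha\, d(i,o_i)$, i.e. $d(i,o_i) < d(i,o)/\alpha$; substituting gives $g_i < (2 + 1/\alpha)\, d(i,o)$. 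In both cases $g_i \le \max\{\alpha, 2 + 1/\alpha\}\, d(i,o)$, and summing over $i$ completes the proof.

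The main obstacle is this regrouping step: the two natural global inequalities ($\sum_i d(i,o_i) \le \SC(o)$ and $f(o) \le 2\,\SC(o)$) are each tight in different regimes, and combining them loses the interpolation. The per-agent case distinction on membership in the $\alpha$-TAS is what couples the $\alpha\, d(i,o_i)$ term to the $d(i,o)$ scale correctly, turning the non-approval condition $d(i,o) > \alpha\, d(i,o_i)$ from a liability into the source of the $1/\alpha$ savings.
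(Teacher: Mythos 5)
Your proof is correct and follows essentially the same route as the paper's: both use the mechanism's optimality inequality $\sum_i \min_{j\in A_i} d(j,w) \le \sum_i \min_{j\in A_i} d(j,o)$, the triangle inequality through the approved alternative nearest to $w$, and the same per-agent case split on whether $o \in A_i$, with the non-approval condition $d(i,o) > \alpha\, d(i,o_i)$ supplying the $1/\alpha$ term. The only difference is presentational (your $f$ and $g_i$ notation versus the paper's explicit $x_i, y_i$), not mathematical.
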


\begin{proof}
Recall that $o_i$ is the most-preferred alternative of agent $i$; hence, $d(i,j) \leq \alpha \cdot d(i,o_i)$ for any $j \in A_i$.
Let $o$ be the optimal alternative (i.e., the alternative with the smallest social cost) and let $w$ be the alternative chosen by the mechanism. 
For any agent $i$, denote by $x_i \in A_i$ the alternative that is closest to $w$, and by $y_i \in A_i$ the alternative that is closest to $o$. By the definition of $w$, we have that $\sum_{i \in N} d(x_i,w) \leq \sum_{i \in N} d(y_i,o)$.
Using the triangle inequality, we now have:
\begin{align*}
    \SC(w) = \sum_{i \in N} d(i,w) 
        &\leq \sum_{i \in N} d(i,x_i) + \sum_{i \in N} d(x_i,w) \\
        &\leq \sum_{i \in N} d(i,x_i) + \sum_{i \in N} d(y_i,o).
\end{align*}
We make the following observations:
\begin{itemize}
    \item[-] For every agent $i$ such that $o \in A_i$, $y_i=o$, and thus $d(y_i,o)=0$, as well as $d(i,o_i) \leq d(i,o)$.
    \item[-] For every agent $i$ such that $o \not\in A_i$, $d(y_i,o) \leq d(o_i,o)$ and $d(i,o_i) < \frac{1}{\alpha}\cdot d(i,o)$. 
\end{itemize}
Combining the above with the fact that $d(i,x_i) \leq \alpha \cdot d(i,o_i)$ and using the triangle inequality, we obtain
\begin{align*}
    \SC(w) 
        &\leq \sum_{i \in N} d(i,x_i) + \sum_{i \in N} d(y_i,o) \\
        &= \sum_{i: o \in A_i} \bigg( d(i,x_i) + d(y_i,o) \bigg) + \sum_{i: o \not\in A_i} \bigg( d(i,x_i) + d(y_i,o) \bigg) \\
        &\leq \sum_{i: o \in A_i}  d(i,x_i) + \sum_{i: o \not\in A_i} \bigg( d(i,x_i) + d(o_i,o) \bigg) \\
        &\leq \alpha \cdot \sum_{i: o \in A_i}  d(i,o_i) + \sum_{i: o \not\in A_i} \bigg( \alpha \cdot d(i,o_i) + d(i,o_i) + d(i,o) \bigg) \\
        &\leq \alpha \cdot \sum_{i: o \in A_i}  d(i,o) + \bigg(2 + \frac{1}{\alpha} \bigg) \cdot \sum_{i: o \not\in A_i} d(i,o) \\
        &\leq \max\left\{\alpha, 2 + \frac{1}{\alpha}\right\} \cdot \SC(o). 
\end{align*}
Consequently, the distortion is at most $\max\left\{\alpha, 2 + \frac{1}{\alpha}\right\}$.
\end{proof}

By optimizing over $\alpha$, we obtain the following corollary. 

\begin{corollary}
In general metric spaces, the distortion of {\sc $(1+\sqrt{2})$-Minisum-TAS-Distance} for the social cost objective is at most $1+\sqrt{2}$.
\end{corollary}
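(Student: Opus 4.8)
The plan is to specialize Theorem~\ref{thm:sc-general-upper} to a single, optimally chosen value of $\alpha$ and then verify that this choice makes the two arguments of the maximum coincide. Since the theorem already yields a distortion upper bound of $\max\{\alpha, 2 + 1/\alpha\}$ for every fixed $\alpha > 0$, the corollary reduces to the one-dimensional optimization problem of minimizing this expression over $\alpha$, and to evaluating it at the minimizer.

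First I would note that the two competing terms behave oppositely in $\alpha$: the term $\alpha$ is strictly increasing, while $2 + 1/\alpha$ is strictly decreasing on $(0,\infty)$. Hence the pointwise maximum of the two is minimized precisely at the point where they cross, so I would set $\alpha = 2 + 1/\alpha$. Clearing denominators turns this into the quadratic $\alpha^2 - 2\alpha - 1 = 0$, whose positive root is $\alpha = 1 + \sqrt{2}$. I would then substitute this value back and check directly that both arguments equal $1+\sqrt{2}$: rationalizing gives $\frac{1}{1+\sqrt{2}} = \sqrt{2}-1$, so $2 + \frac{1}{1+\sqrt{2}} = 2 + (\sqrt{2}-1) = 1+\sqrt{2}$, matching the first term. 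Applying Theorem~\ref{thm:sc-general-upper} with $\alpha = 1+\sqrt{2}$ therefore delivers the claimed bound.

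There is essentially no obstacle to overcome; the argument is a routine optimization of a two-term maximum, and the only care needed is the elementary algebra of solving the quadratic and rationalizing the denominator. The single conceptual point worth stating explicitly is \emph{why} equating the two terms is optimal, namely that they are monotone in opposite directions, since this is what justifies selecting $\alpha = 1+\sqrt{2}$ in a principled way rather than guessing and verifying.
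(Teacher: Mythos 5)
Your proposal is correct and matches the paper's own argument: the paper derives the corollary exactly by instantiating Theorem~\ref{thm:sc-general-upper} at the value of $\alpha$ where the two terms $\alpha$ and $2+\frac{1}{\alpha}$ cross, namely $\alpha = 1+\sqrt{2}$. Your added justification that equating the increasing and decreasing terms minimizes the maximum is a nice explicit touch, but it is the same routine optimization the paper performs implicitly.
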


We complement the upper bound of $1+\sqrt{2}$ by a nearly tight lower bound of $2$ for mechanisms that use all three types of information. 
We first show a lower bound of $\min\{3,\alpha\}$ on the distortion of such mechanisms; the following lemma will be used in several lower bound proofs, and also suggests that to beat the bound of $3$, a mechanism must use $\alpha < 3$.  

\begin{lemma}\label{lem:sc-general-alpha-all-three-types}
For the social cost objective, the distortion of any mechanism $M \in \ord \cap \dis \cap \tas$ is at least $\min\{3,\alpha\}$, for any $\alpha \geq 1$.
\end{lemma}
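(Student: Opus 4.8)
The plan is to construct a lower-bound instance where the given information $\text{Inf}(I)$ is consistent with two distinct underlying metrics whose optimal alternatives differ, forcing any deterministic mechanism to incur large cost on at least one of them. Since the mechanism sees only $\text{Inf}(I)$, if I can exhibit two instances $I_1, I_2$ with $\text{Inf}(I_1) = \text{Inf}(I_2)$ but with different optimal alternatives, the mechanism must pick the same winner $w$ on both, and at least one instance will suffer distortion equal to the cost ratio I engineer.

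**First I would** work with a minimal setup, likely just two alternatives $a$ and $b$, since this is the hardest case to distinguish and the cleanest for computing social costs. I would place the agents so that every agent prefers the same alternative, say $a$, and so that the $\alpha$-TAS of each agent is forced to a fixed set regardless of which of the two metrics is in play. The key is the threshold condition $d(i,x) \le \alpha \cdot d(i,o_i)$: I want to choose distances so that $b \notin A_i$ in both candidate instances (i.e., each agent's cost for $b$ exceeds $\alpha$ times her cost for her favorite), so the TAS information is $A_i = \{a\}$ for all $i$ and is identical across $I_1$ and $I_2$. The inter-alternative distance $d(a,b)$ must also match, which is automatic with only two alternatives once I fix it. With ordinal preferences and alternative distances held equal, the mechanism cannot tell the two instances apart.

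**The main obstacle** will be engineering the two metrics so that $a$ is optimal in one instance while $b$ is optimal (or nearly so) in the other, while simultaneously keeping all three information types identical and respecting the triangle inequality. The tension is that making $b$ far enough from every agent to stay out of the TAS tends to also make $b$ costly in the social-cost objective, which pushes the optimum toward $a$ in both instances. I expect to resolve this by splitting the agents into two groups and exploiting the slack in the threshold: agents can sit at distance just above $\frac{1}{\alpha} d(a,b)$-type thresholds from the alternatives, and by shifting agent positions between $I_1$ and $I_2$ (while preserving the ordinal ranking $a \succ b$ and the non-approval of $b$), I can flip which alternative minimizes $\SC$. The factor $\min\{3,\alpha\}$ strongly suggests two regimes: for small $\alpha$, the binding constraint is how far an agent can be from $b$ while keeping $b$ just barely outside the $\alpha$-TAS (yielding the $\alpha$ term), whereas for $\alpha \ge 3$ the construction degenerates to the classical metric-distortion lower bound of $3$ that already holds for mechanisms in $\ord \cap \dis$, so the TAS gives no additional power.

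**Finally I would** compute $\SC(a)$ and $\SC(b)$ explicitly in whichever instance the mechanism's winner is suboptimal and verify the ratio reaches $\min\{3,\alpha\}$ in the limit as the construction parameters approach their extreme admissible values. I would double-check that every distance I introduce satisfies the triangle inequality across $N \cup A$, and confirm that the ordinal preferences induced by the distances genuinely coincide in $I_1$ and $I_2$, so that no component of $\text{Inf}(I)$ can be used to break the symmetry. The bound is stated as ``at least $\min\{3,\alpha\}$ for any $\alpha \ge 1$,'' so I would present the argument parametrically in $\alpha$ and let the two regimes emerge naturally from the feasibility constraints of the construction.
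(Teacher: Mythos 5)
Your high-level strategy---exhibit input information consistent with several metrics and punish whatever the mechanism chooses---is the right general technique, and it is also what the paper does, but your concrete construction cannot work. If every agent ranks $a$ first, then in \emph{every} metric consistent with the ordinal information we have $d(i,a)\le d(i,b)$ for all $i$, hence $\SC(a)\le\SC(b)$; so $a$ is optimal in both of your instances, and the mechanism that simply outputs $a$ has distortion $1$. The ``flip'' you hope for---moving agents between $I_1$ and $I_2$ while preserving $a\succ_i b$ and non-approval of $b$ so that $b$ becomes optimal---is not a tension to be engineered around but a logical impossibility. Moreover, even if you repair this by letting the agents disagree (half prefer $a$, half prefer $b$), two alternatives are provably too few for this lemma: with only two alternatives the $\alpha$-TAS always carries real information (an agent approving only her favourite has distance ratio above $\alpha$; one approving both has ratio at most $\alpha$), and optimizing the adversary's placements under the approval and triangle-inequality constraints yields essentially $\max\left\{\frac{3\alpha-1}{\alpha+1},\,1+\frac{2}{\alpha}\right\}$, which is strictly below $\min\{3,\alpha\}$ for $\alpha\in(2,3]$ (at $\alpha=3$ it gives $2$, not $3$). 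Your fallback claim that for $\alpha\ge 3$ the construction ``degenerates to the classical lower bound of $3$'' fails for the same reason: in the classical two-alternative instance, the agents co-located with their favourite alternative approve only it for \emph{any} finite $\alpha$, so the TAS breaks exactly the symmetry that the classical argument needs.

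The paper resolves both issues by doing the opposite of what you propose. It uses $n$ agents and $n$ alternatives, all pairwise alternative distances equal to $2$, cyclic ordinal preferences ($a_i\succ_i a_{i+1}\succ_i\cdots$), and---crucially---every agent approves \emph{every} alternative, so the TAS is vacuous and the whole input is symmetric; the mechanism's choice is then without loss of generality some $a_i$. The adversary places agent $i$ at distance $1$ from all alternatives and every other agent at distance $1$ from a block of alternatives containing $a_{i-1}$ but at distance $\min\{3,\alpha\}$ from the rest, including $a_i$. Both ingredients of the bound come from consistency: approving everything is legitimate only if all distance ratios are at most $\alpha$ (the $\alpha$ term), and the triangle inequality caps any agent--alternative distance at $1+2=3$ (the $3$ term). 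Letting $n\to\infty$ gives $\SC(a_i)/\SC(a_{i-1})=\bigl(1+(n-1)\min\{3,\alpha\}\bigr)/n\to\min\{3,\alpha\}$. The lesson is that a lower bound that \emph{grows} with $\alpha$ requires making the approval sets uninformative by making them all-inclusive (feasible only when all ratios stay within $\alpha$), not by making them singletons, and this in turn requires many alternatives arranged symmetrically rather than two.
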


\begin{proof}
We consider an instance with $n$ agents and $n$ alternatives $A=\{a_1,\ldots,a_n\}$. The distance between any two alternatives is $2$. The agents have cyclic ordinal preferences over the alternatives such that for agent $i$:
\begin{align*}
    a_i \succ_i \ldots \succ_i a_n \succ_i a_1 \succ \ldots \succ_i a_{i-1}.
\end{align*}
In addition, all agents approve all alternatives, that is, $a \in A_i$ for every $a \in A$ and every agent $i$. 
Since the instance is completely symmetric, we can assume without loss of generality that the mechanism chooses alternative $a_i$ for some $i \in [n]$ as the winner when given as input this information about the instance. 

\begin{figure}[t]
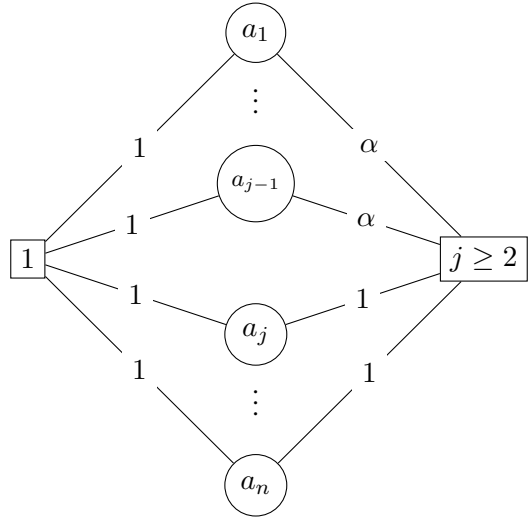

\centering
\tikz {
  \node (a1) [circle, draw] at (0,3) {$a_1$};
  \node (dots) at (0,2.2) {$\vdots$};
  \node (ai-1) [circle, draw] at (0,1)  {\footnotesize $a_{j-1}$};
  \node (ai) [circle, draw] at (0,-1)  {$a_j$};
  \node (dots) at (0,-1.8) {$\vdots$};
  \node (an) [circle, draw] at (0,-3)  {$a_n$};
  
  \node (i) [rectangle, draw] at (-3,0) {$1$};
  \node (j) [rectangle, draw]  at (3,0) {$j\geq 2$};
  
  \draw (i) edge node [midway, fill=white] {$1$} (a1) ;
  \draw (i) edge node [midway, fill=white] {$1$} (ai-1) ;
  \draw (i) edge node [midway, fill=white] {$1$} (ai) ;
  \draw (i) edge node [midway, fill=white] {$1$} (an) ;
  
  \draw (j) edge node [midway, fill=white] {$\alpha$} (a1) ;
  \draw (j) edge node [midway, fill=white] {$\alpha$} (ai-1) ;
  \draw (j) edge node [midway, fill=white] {$1$} (ai) ;
  \draw (j) edge node [midway, fill=white] {$1$} (an) ;
}
\caption{An example of the metric space used in the proof of Lemma~\ref{lem:sc-general-alpha-all-three-types} with $i=1$ to show a lower bound of $\alpha$, assuming that $\alpha \leq 3$. Agents are represented by rectangles, and alternatives by circles.}
\label{fig:thm:ordinal-list-lower:alpha}
\end{figure}

Observe that, besides agent $i$, all other $n-1$ agents prefer $a_{i-1}$ to $a_i$, interpreting $a_0$ as $a_n$. Given this, we aim to define a metric space that is consistent with the distances between the alternatives, as well as with the ordinal preferences of the agents, so as to maximize the social cost of $a_i$ while minimizing the social cost of $a_{i-1}$. In particular, we have the following metric space (see Figure~\ref{fig:thm:ordinal-list-lower:alpha} for an example with $i=1$):
\begin{itemize}
    \item[-] Agent $i$ is at distance $1$ from all alternatives.
    \item[-] Agent $j < i$ is at distance $1$ from all alternatives in $\{a_j \ldots, a_{i-1}\}$ and at distance $\min\{3,\alpha\}$ from the remaining alternatives in $\{a_i, \ldots, a_n, a_1 \ldots, a_{j-1}\}$.
    \item[-] Agent $j > i$ is at distance $1$ from the alternatives in $\{a_j, \ldots, a_n\}$ and at distance $\min\{3,\alpha\}$ from the remaining alternatives in $\{a_1, \ldots, a_{j-1}\}$; note that $a_i$ belong to the latter set of alternatives in this case.
    \item[-] The distance between two alternatives is determined by the length of the minimum path that connects them; hence, it is exactly $2$ (due to the distance $1$ of agent $i$ from all alternatives). 
\end{itemize}
Since all agents, beside $i$, are at distance $1$ from $a_{i-1}$ and at distance $min\{3,\alpha\}$ from $a_i$, we have
$$\frac{\SC(a_i)}{\SC(a_{i-1})} = \frac{1 + (n-1)\cdot \min\{3,\alpha\}}{n}.$$
As $n$ becomes arbitrarily large, we obtain a lower bound of $\min\{3,\alpha\}$ on the distortion. 
\end{proof}

We are now ready to show the lower bound of $2$ on the distortion of any mechanism that uses all three types of information. 

\begin{theorem}\label{thm:sc-general-all-three-types}
For the social cost objective, the distortion of any mechanism $M \in \ord \cap \dis \cap \tas$ is at least $2$. 
\end{theorem}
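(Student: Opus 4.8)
The plan is to prove the bound through a case analysis on the value $\alpha$ that the mechanism designer commits to, since a lower bound must hold for every fixed $\alpha \geq 1$. When $\alpha \geq 2$, Lemma~\ref{lem:sc-general-alpha-all-three-types} already delivers a lower bound of $\min\{3,\alpha\} \geq 2$, so that regime is free. All the actual work goes into the complementary range $\alpha \leq 2$, precisely where the approval sets carry the most information and the lemma degrades below $2$. The target is a single symmetric instance whose information admits two completions with flipped optima and with cost ratio $1 + \tfrac{2}{\alpha}$, which is at least $2$ exactly when $\alpha \leq 2$ (and equals $3$ at $\alpha=1$, matching the $\ord\cap\dis$ bound, while meeting the lemma cleanly at $\alpha=2$).

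For the construction I would use just two alternatives $a,b$ with $d(a,b)=2$ and $2k$ agents split into two equal groups: a group of $k$ agents preferring $a$ and approving only $a$, and a group of $k$ preferring $b$ and approving only $b$. This revealed information (ordinal preferences, the distance $d(a,b)=2$, and the approval pattern) is invariant under swapping $a\leftrightarrow b$ together with the two groups, so a deterministic mechanism must output one of $a,b$, and by this symmetry it suffices to force distortion $\geq 2$ when it outputs $a$; the mirror completion handles the case where it outputs $b$. The key realization driving the placement is that the triangle inequality forces $|d(i,a)-d(i,b)| \leq d(a,b)=2$, so an agent placed \emph{far} from a candidate is nearly equally far from the other and contributes almost equally to both social costs, diluting any ratio toward $1$. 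Hence the distances must be kept \emph{small}: I would place the $k$ agents preferring $b$ exactly at $b$ (contributing $2$ to $\SC(a)$ and $0$ to $\SC(b)$), and place the $k$ agents preferring $a$ at the point with $d(i,a)=\tfrac{2}{1+\alpha}$ and $d(i,b)=\tfrac{2\alpha}{1+\alpha}$ on the geodesic from $a$ to $b$, perturbed infinitesimally toward $a$ so that $d(i,b) > \alpha\, d(i,a)$ strictly and they approve only $a$.

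With these positions one computes, in the limit,
\begin{align*}
\SC(a) = k\Big(\tfrac{2}{1+\alpha} + 2\Big), \qquad \SC(b) = k\cdot \tfrac{2\alpha}{1+\alpha},
\end{align*}
so that $b$ is optimal and $\tfrac{\SC(a)}{\SC(b)} = \tfrac{2+\alpha}{\alpha} = 1 + \tfrac{2}{\alpha} \geq 2$ for $\alpha \leq 2$. Combining the two regimes yields distortion at least $2$ for every $\alpha$, proving the theorem. I expect the main subtlety to be twofold: first, arguing that the informational symmetry genuinely lets the adversary reuse the same instance against either output of the mechanism (rather than needing two unrelated instances), and second, verifying that the chosen agent positions are simultaneously consistent with the prescribed ordinal preferences \emph{and} the prescribed approval sets — the approval constraint $d(i,b) > \alpha\, d(i,a)$ for the $a$-preferring group sits exactly on its boundary at $d(i,a)=\tfrac{2}{1+\alpha}$, which is why the infinitesimal perturbation (or an explicit limiting argument) is needed to keep the information symmetric and thus the mechanism's choice genuinely blind.
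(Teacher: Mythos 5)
Your proof is correct, and it reaches the bound by a genuinely different (and simpler) construction than the paper's. Both arguments handle the regime $\alpha \geq 2$ identically via Lemma~\ref{lem:sc-general-alpha-all-three-types} (the paper phrases this as a contradiction forcing $\alpha < 2$). For the remaining regime, the paper builds a cyclic instance with $n$ agents and $2n$ alternatives whose pairwise distances are fixed in advance ($1+2\alpha$ within each type, $2\alpha$ across types), each agent approving only her own $a_i$; whichever $a_i$ or $b_i$ the mechanism selects, an agent-dependent completion makes $b_{i-1}$ optimal, giving $\SC(a_i)/\SC(b_{i-1}) \to (2+\alpha)/\alpha = 1+2/\alpha$ as $n \to \infty$. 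You get the same bound $1+2/\alpha$ from a minimal two-alternative instance with two mirror-image completions consistent with the same revealed information, placing the punished group at the boundary of the approval constraint (distance $\tfrac{2}{1+\alpha}$ from its approved alternative, perturbed to keep the approval strict); your placement is in fact the optimizer for that instance, and the two subtleties you flag (the boundary perturbation, and that both completions are consistent with one and the same input so the adversary can answer either output) are exactly the points that need care, and you handle both. As for what each route buys: your instance is essentially the first instance of the paper's own line lower bound (Theorem~\ref{thm:sc-line-ordinal-lower}), rescaled, so your proof additionally shows that distortion at least $2$ already holds \emph{on the line} whenever $\alpha \leq 2$ --- consistent with the paper's line upper bound of $2\sqrt{2}-1$, which is only attainable by choosing $\alpha = 1+\sqrt{2} > 2$; by contrast, the paper's construction for this theorem (four or more alternatives pairwise at distances $1+2\alpha$ and $2\alpha$) is not embeddable in a line. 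The paper's heavier construction appears to be a reuse of the template from Theorem~\ref{thm:sc-general-lower-ordinal}, where the many-alternative cyclic structure genuinely is needed to push the bound up to $2+1/\alpha$ in the absence of known alternative distances; for the present statement nothing is lost by your two-alternative shortcut.
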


\begin{proof}
Assume by contradiction that there exists mechanism with distortion strictly less than $2$, and consider any such mechanism.  
By Lemma~\ref{lem:sc-general-alpha-all-three-types}, the distortion of any deterministic mechanism is at least $\min\{3,\alpha\}$, even when the ordinal preferences of the agents are given and the distances between alternatives are fixed. So, the mechanism must use $\alpha < 2$.

\begin{figure}[t]
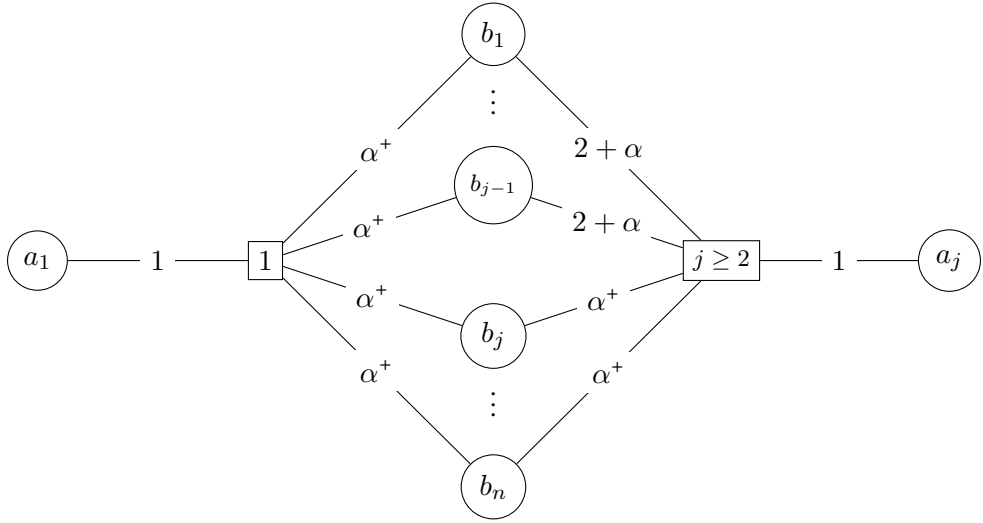

\centering
\tikz {
  \node (b1) [circle, draw] at (0,3) {$b_1$};
  \node (dots) at (0,2.2) {$\vdots$};
  \node (bi-1) [circle, draw] at (0,1)  {\footnotesize $b_{j-1}$};
  \node (bi) [circle, draw] at (0,-1)  {$b_j$};
  \node (dots) at (0,-1.8) {$\vdots$};
  \node (bn) [circle, draw] at (0,-3)  {$b_n$};
  
  \node (i) [rectangle, draw] at (-3,0) {$1$};
  \node (ai) [circle, draw] at (-6,0) {$a_1$};
  \node (j) [rectangle, draw]  at (3,0) {\footnotesize $j\geq 2$};
  \node (aj) [circle, draw] at (6,0) {$a_j$};

  \draw (i) edge node [midway, fill=white] {$1$} (ai) ;
  \draw (j) edge node [midway, fill=white] {$1$} (aj) ;

  \draw (i) edge node [midway, fill=white] {$\alpha^{\text{\footnotesize+}}$} (b1) ; 
  \draw (i) edge node [midway, fill=white] {$\alpha^{\text{\footnotesize+}}$} (bi-1) ; 
  \draw (i) edge node [midway, fill=white] {$\alpha^{\text{\footnotesize+}}$} (bi) ; 
  \draw (i) edge node [midway, fill=white] {$\alpha^{\text{\footnotesize+}}$} (bn) ; 
  
  \draw (j) edge node [midway, fill=white] {$2+\alpha$} (b1) ; 
  \draw (j) edge node [midway, fill=white] {$2+\alpha$} (bi-1) ; 
  \draw (j) edge node [midway, fill=white] {$\alpha^{\text{\footnotesize+}}$} (bi) ; 
  \draw (j) edge node [midway, fill=white] {$\alpha^{\text{\footnotesize+}}$} (bn) ; 

}
\caption{An example of the metric space used in the proof of Theorem~\ref{thm:sc-general-all-three-types} to show a lower bound of $1+2/\alpha$ for $i=1$. Distances between alternatives are not shown, but are assumed to be fixed. Agents are represented by rectangles, and alternatives by circles. The notation $\alpha^{\text{\footnotesize+}}$ is used to represent a value slight larger than $\alpha$ (instead of $\alpha + \delta$). It is not hard to verify that the specified distances satisfy the triangle inequality; for example, the distance of the edge between agent $j$ and alternative $b_1$ is $2+\alpha$, whereas the length of a path from $j$ to $b_1$ via any other $b$-type alternative is $3\alpha^{\text{\footnotesize+}}$, and the length of the path via $a_j$ is again $2+\alpha$. }
\label{fig:thm:sc-general-all-three-types}
\end{figure}

We consider an instance with $n$ agents and $2n$ alternatives $A=\{a_1,\ldots,a_n,b_1,\ldots,b_n\}$. Any pair of $a$-type alternatives are at distance $1+2\alpha$, any pair of $b$-type alternatives are at distance $1+2\alpha$, and any pair of alternatives consisting of one $a$-type and one $b$-type are at distance $2\alpha$.
Let $[X]$ be an arbitrary permutation of the elements of $X$. Each agent $i \in [n]$ has the ordinal preference
\begin{align*}
    a_i \succ_i b_i \succ_i \ldots \succ_i b_n \succ_i b_1 \succ \ldots \succ_i b_{i-1} \succ [\{a_1, \ldots, a_n\} \setminus \{a_i\}].
\end{align*}
In addition, agent $i$ reports that she only approves alternative $a_i$, that is, $A_i = \{a_i\}$.
Let $\delta>0$ be an infinitesimal.
Now, suppose that, for some $i \in [n]$, the mechanism chooses $a_i$ or $b_i$ as the winner. 
We consider the following metric space:
\begin{itemize}
    \item[-] Agent $i$ is 
    at distance $1$ from alternative $a_i$, 
    at distance $\alpha+\delta$ from each alternative in $\{b_1, \ldots, b_n\}$, 
    and at distance $1+2\alpha+2\delta$ from the remaining alternatives in $\{a_1, \ldots, a_n\}\setminus\{a_i\}$.

    \item[-] Agent $j < i$ is at distance $1$ from alternative $a_j$,    
    at distance $\alpha+\delta$ from every alternative in $\{b_j, \ldots, b_{i-1}\}$, 
    and at distance $2+\alpha$ from every alternative in $\{b_i \ldots, b_n, \ldots, b_{j-1},a_1, \ldots, a_n\}\setminus\{a_j\}$.
    
    \item[-] Agent $j > i$ is at distance $1$ from alternative $a_j$, 
    at distance $\alpha+\delta$ from every alternative in $\{b_j, \ldots, b_n\}$, 
    and at distance $2+\alpha$ from every alternative in $\{b_1, \ldots, b_{j-1},a_1, \ldots, a_n\}\setminus\{a_j\}$.
\end{itemize}
Clearly, the specified distances are consistent to the ordinal preferences of the agents (with ties broken as required). It is also not hard to verify that these distances, together with the fixed distances between alternatives, satisfy the triangle inequality; for an example of the metric space for the case $i=1$ see Figure~\ref{fig:thm:sc-general-all-three-types}. 
By interpreting $b_0$ as $b_n$, observe that every agent $j \neq i$ is at distance $\alpha+\delta$ from $b_{i-1}$, and distance $1+2\alpha$ from $b_i$ and $a_i$, and thus
$$\SC(a_i) = 1+(n-1)(2+\alpha),$$ 
$$\SC(b_i) = \alpha+\delta + (n-1)(2+\alpha),$$ 
and 
$$\SC(b_{i-1})=n(\alpha+\delta).$$ 
As $n$ becomes arbitrarily large and $\delta$ becomes arbitrarily small, we obtain a lower bound of $1+2/\alpha > 2$, since $\alpha < 2$, a contradiction.
\end{proof}

\subsection{Refined Lower Bounds for General Metric Spaces}

As we mentioned earlier, {\sc $\alpha$-Minisum-TAS-Distance} achieves a distortion of $1+\sqrt{2}$ without having access to the ordinal preferences of the agents. This begs the natural question of whether this is the best possible among all mechanisms with this information available in their inputs. We prove that this is indeed the case in the following theorem.

\begin{theorem} \label{thm:sc-general-lower-known}
For the social cost objective, the distortion of any mechanism $M \in \dis \cap \tas$ is at least $1+\sqrt{2}$.
\end{theorem}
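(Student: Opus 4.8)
The plan is to combine two complementary lower bounds, each parameterized by the threshold $\alpha$ that the mechanism commits to. From \Cref{lem:sc-general-alpha-all-three-types} we already know that \emph{any} mechanism, even one with all three types of information, has distortion at least $\min\{3,\alpha\}$; this quantity is increasing in $\alpha$ and already exceeds $1+\sqrt{2}$ once $\alpha\ge 1+\sqrt{2}$ (recall $1+\sqrt{2}<3$). So the real work is to produce a second bound that is \emph{decreasing} in $\alpha$ and that exploits the absence of ordinal information, and the target value is $2+\tfrac1\alpha$: since $\tfrac{1}{1+\sqrt 2}=\sqrt2-1$, the two curves $\alpha$ and $2+\tfrac1\alpha$ cross exactly at $\alpha=1+\sqrt2$ with common value $1+\sqrt2$. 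Hence, proving that every $M\in\dis\cap\tas$ has distortion at least $2+\tfrac1\alpha$ in the relevant range $\alpha\le 1+\sqrt2$ gives $\max\{\min\{3,\alpha\},\,2+\tfrac1\alpha\}\ge 1+\sqrt2$ for every choice of $\alpha$, which is the theorem.

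For the $2+\tfrac1\alpha$ bound I would reuse the cyclic scaffolding of \Cref{thm:sc-general-all-three-types}: $n$ agents, $2n$ alternatives $a_1,\dots,a_n,b_1,\dots,b_n$, each agent approving only her own alternative ($A_i=\{a_i\}$), and a cyclically symmetric matrix of inter-alternative distances. The input, consisting of the distances on $A$ and the sets $(A_i)_i$, is invariant under the simultaneous cyclic relabeling of agents and alternative indices, so a deterministic mechanism in $\dis\cap\tas$ returns a single fixed winner $w$. The key difference from \Cref{thm:sc-general-all-three-types} is that, having dropped \ref{enum1}, the adversary is no longer tied to any fixed ordinal profile: for each candidate winner it only has to exhibit \emph{some} metric reproducing the same alternative distances and the same approval sets in which that winner is costly. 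This extra freedom is exactly what lets us push the winner farther from the agents than was possible in \Cref{thm:sc-general-all-three-types}.

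Concretely, for the orbit representatives $w=a_1$ and $w=b_1$ (all other cases then follow by the cyclic symmetry), I would place each agent $i$ at distance $1$ from her approved alternative $a_i$, put a never-approved ``good'' alternative $g$ of $b$-type at distance $\alpha+\delta$ from \emph{every} agent, and place $w$ at distance $2\alpha+1$ from essentially all agents; this is legal precisely because $2\alpha+1>\alpha=\alpha\cdot d(i,a_i)$, so $w\notin A_i$ is consistent with the reported approval sets, and because $\alpha+\delta>\alpha$ keeps $g$ unapproved. Then $\SC(g)\approx n\alpha$ while $\SC(w)\approx n(2\alpha+1)$, and since $\opt\le\SC(g)$, as $n\to\infty$ and $\delta\to 0$ the distortion tends to $\tfrac{2\alpha+1}{\alpha}=2+\tfrac1\alpha$. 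Using a $b$-type point (approved by nobody, hence uniformly at distance $\alpha$ from \emph{all} agents) for $g$ is essential: if the good alternative were the first choice of some agent, and thus within distance $1$ of her, the fixed inter-alternative distances could not simultaneously accommodate $g$ at distance $\approx\alpha$ and $w$ at distance $\approx 2\alpha+1$ from the remaining agents, because $2\alpha>\alpha+1$ for $\alpha>1$.

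I expect the main obstacle to be the bookkeeping of the inter-alternative distance matrix. Because it is part of the input, it must be a \emph{single} symmetric matrix that is simultaneously realizable in all the per-winner metrics, even though one and the same alternative plays the role of a distance-$(\alpha+\delta)$ ``good'' point in some realizations and a distance-$(2\alpha+1)$ ``bad'' point in others. These constraints should pin the relevant inter-alternative distances down to essentially forced values (e.g.\ $d(g,w)=\alpha+1$), and the bulk of the proof will consist of choosing the symmetric matrix and verifying that, together with the prescribed agent--alternative distances, every triangle inequality holds and every approval set comes out exactly as specified, separately for $w=a_1$ and $w=b_1$. Once that verification goes through, cyclic symmetry together with the deterministic (hence input-determined) choice of $w$ completes the argument.
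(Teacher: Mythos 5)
Your high-level plan coincides with the paper's: invoke \Cref{lem:sc-general-alpha-all-three-types} to dispose of $\alpha \geq 1+\sqrt{2}$, then establish a $2+\frac{1}{\alpha}$ lower bound for the remaining range, with the two curves crossing at $1+\sqrt{2}$. The gap is in the second step: your construction is infeasible in the case $w=b_1$. Since $A_i=\{a_i\}$ exactly, $a_i$ is agent $i$'s nearest alternative, at distance $1$. If the winner $b_1$ is at distance $2\alpha+1$ from (essentially) every agent, the triangle inequality through agent $i$ forces the \emph{fixed} matrix entry $d(a_i,b_1)\geq 2\alpha$; if the good $b$-type $g=b_j$ is at distance $\alpha+\delta$ from every agent, it forces $d(a_i,b_j)\leq 1+\alpha+\delta$. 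Your WLOG argument requires the matrix to be invariant under the cyclic shift $a_i\mapsto a_{i+1},\,b_i\mapsto b_{i+1}$, so $d(a_i,b_k)$ depends only on $(k-i)\bmod n$; letting $i$ range over all agents, the first family of constraints makes \emph{every} $a$-to-$b$ entry at least $2\alpha$ and the second makes every such entry at most $1+\alpha+\delta$ --- impossible for $\alpha>1$. Nor is this repairable bookkeeping (in particular $d(g,w)=\alpha+1$ is not ``forced'', it is overdetermined): even dropping symmetry, every alternative must admit a realization in which it is the far winner, which pushes its entire matrix column up to $\geq 2\alpha$, while each realization needs some alternative whose column stays $\leq 1+\alpha+\delta$ to serve as the good one; since every alternative is a potential winner, no column can stay small. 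With singleton approvals, once a good column exists the winner can never be farther than $d(i,a_i)+d(a_i,w)\leq 2+\alpha+\delta$ from the agents, capping the ratio at $\frac{2+\alpha}{\alpha}=1+\frac{2}{\alpha}$ --- exactly the bound of \Cref{thm:sc-general-all-three-types}, which yields only $2$. The ``extra freedom from dropping ordinal information'' that you invoke is really the freedom of an \emph{unconstrained matrix}: that is what lets \Cref{thm:sc-general-lower-ordinal} (for $\ord\cap\tas$) place the winner at distance $1+2\alpha$, and it is precisely what is unavailable in $\dis\cap\tas$.

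The paper's proof avoids this trap by abandoning singleton approvals. It uses $2n+2$ alternatives $\{x_1,\ldots,x_n,z_1,\ldots,z_n,y,w\}$ with \emph{two-element} approval sets $A_i=\{x_i,z_i\}$, arranged in a spider metric whose $i$-th branch is a path $w$ -- (green slot) -- $x_i$ -- $z_i$ -- (red slot) -- $y$ with edge lengths approximately $\alpha,1,\alpha-1,1,\alpha$. The input is invariant under the mirror symmetry exchanging $(w,x_i,\text{green})$ with $(y,z_i,\text{red})$, so WLOG the mechanism picks from $\{x_1,\ldots,x_n,w\}$, and the adversary then places the agents at the red slots. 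The ambiguity exploited is not how far the winner can be pushed, but \emph{which of her two approved alternatives each agent actually sits near}: agent $i$ ends up at distance $1$ from $z_i$, $\alpha$ from $x_i$, roughly $\alpha$ from $y$, roughly $1+2\alpha$ from $w$, and roughly $3\alpha$ from every $x_j$ with $j\neq i$, so that $\SC(y)\approx n\alpha$ while $\SC(w)\approx n(1+2\alpha)$ and $\SC(x_i)\approx (3n-2)\alpha$, giving a distortion of at least $\min\{2+\frac{1}{\alpha},\,3-\frac{2}{n}\}>1+\sqrt{2}$ whenever $\alpha<1+\sqrt{2}$. This device --- approval sets of size at least two, so that the adversary retains, after seeing the winner, the choice of which approved alternative each agent is close to --- is the idea your proposal is missing, and without it the approach cannot get past distortion $2$.
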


\begin{proof}
Assume by contradiction that a mechanism with distortion strictly less than $1+\sqrt{2}$ exists, and consider any such mechanism. 
By \Cref{lem:sc-general-alpha-all-three-types}, any mechanism has distortion at least $\alpha$, even when the distances between alternatives are known. 
So, the mechanism must use $\alpha < 1+\sqrt{2}$. 

We consider instances with $n$ agents and $m=2n+2$ alternatives $\{x_1,...,x_n,z_1, ...,z_n,y,w\}$. The $\alpha$-TAS of agent $i$ is $A_i =\{x_i,z_i\}$. The metric space is presented in \Cref{fig:thm:list-lower}; in the figure, $\alpha^{\text{\footnotesize+}}$ is used to denote $\alpha$ plus some infinitesimal. 
Without loss of generality, we assume that the mechanism chooses one of $\{x_1,...,x_n,w\}$; the case where the mechanism chooses one of $\{z_1,...,z_n,y\}$ is symmetric. 

\begin{figure}[t]
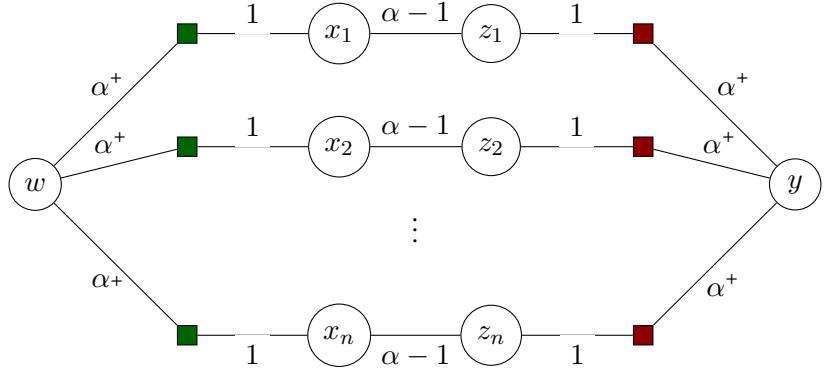

\centering

\tikz {
  \node (w) [circle, draw] at (0,0) {$w$};
  
  \node (g1) [rectangle, fill=DarkGreen, draw] at (2,2) {};
  \node (x1) [circle, draw] at (4,2) {$x_1$};
  \node (z1) [circle, draw] at (6,2) {$z_1$};
  \node (r1) [rectangle, fill=DarkRed, draw] at (8,2) {};

  \node (g2) [rectangle, fill=DarkGreen, draw] at (2,0.5) {};
  \node (x2) [circle, draw] at (4,0.5) {$x_2$};
  \node (z2) [circle, draw] at (6,0.5) {$z_2$};
  \node (r2) [rectangle, fill=DarkRed, draw] at (8,0.5) {};

  \node (gn) [rectangle, fill=DarkGreen, draw] at (2,-2) {};
  \node (xn) [circle, draw] at (4,-2) {$x_n$};
  \node (zn) [circle, draw] at (6,-2) {$z_n$}; 
  \node (rn) [rectangle, fill=DarkRed, draw] at (8,-2) {}; 

  \node (y) [circle, draw] at (10,0) {$y$};

  \node (dots) at (5,-0.5) {$\vdots$};
  
  \draw (w) edge node [above,xshift=-3pt] {$\alpha^{\text{\footnotesize+}}$} (g1) ;
  \draw (g1) edge node [above, fill=white] {$1$} (x1) ;
  \draw (x1) edge node [above, fill=white] {$\alpha-1$} (z1) ;
  \draw (z1) edge node [above, fill=white] {$1$} (r1) ;
  \draw (y) edge node [above, xshift=7pt] {$\alpha^{\text{\footnotesize+}}$} (r1) ;

\draw (w) edge node [above,xshift=-3pt] {$\alpha^{\text{\footnotesize+}}$} (g2) ;
  \draw (g2) edge node [above, fill=white] {$1$} (x2) ;
  \draw (x2) edge node [above, fill=white] {$\alpha-1$} (z2) ;
  \draw (z2) edge node [above, fill=white] {$1$} (r2) ;
  \draw (y) edge node [above, xshift=3pt] {$\alpha^{\text{\footnotesize+}}$} (r2) ;

  \draw (w) edge node [below,xshift=-3pt] {$\alpha{\text{\footnotesize+}}$} (gn) ;
  \draw (gn) edge node [below, fill=white] {$1$} (xn) ;
  \draw (xn) edge node [below, fill=white] {$\alpha-1$} (zn) ;
  \draw (zn) edge node [below, fill=white] {$1$} (rn) ;
  \draw (y) edge node [below,xshift=3pt] {$\alpha^{\text{\footnotesize+}}$} (rn) ;
}

\caption{The metric space used in the proof of Theorem~\ref{thm:sc-general-lower-known}. The notation $\alpha^{\text{\footnotesize+}}$ is used to represent a value slight larger than $\alpha$ (essentially $\alpha$ plus an infinitesimal). The red and green squares are potential agent positions, depending on the decision made by the mechanism when given that the $\alpha$-TAS of each agent $i$ is $A_i = \{x_i,z_i\}$.}
\label{fig:thm:list-lower}
\end{figure}

We place the agents at the red nodes in \Cref{fig:thm:list-lower}; for the case the case where the mechanism chooses the winner from the set $\{z_1,...,z_n,y\}$, we would place the agents at the green nodes in \Cref{fig:thm:list-lower}. Agent $i$ is at distance $1$ from $z_i$, at distance $1+\alpha-1 = \alpha$ from $x_i$, and at distance strictly larger than $\alpha$ from any other alternative; so, this placement of the agents is consistent with the $\alpha$-TAS. Since the distance of agent $i$ is essentially $\alpha$ from $y$, at least $1+2\alpha$ from $w$, and $3\alpha$ from any alternative $x_j$ with $j \neq i$, we have
$$\SC(x_i) = \alpha + 3(n-1)\alpha = (3n-2)\alpha,$$
$$\SC(w) = n(1+2\alpha),$$
and
$$\SC(y) = n\alpha.$$
So, the distortion is at least $3-2/n$ if any alternative $x_i$ is chosen, and at least $2 + 1/\alpha$ if $w$ is chosen. But, for $\alpha < 1+\sqrt{2}$, this leads to a lower bound of $2 + 1/\alpha > 1+\sqrt{2}$, a contradiction.
\end{proof}

We also show the same distortion lower bound of $1+\sqrt{2}$ for mechanisms that have access to the $\alpha$-TAS and the ordinal preferences of the agents, but not the distances between alternatives.

\begin{theorem} \label{thm:sc-general-lower-ordinal}
For the social cost objective, the distortion of any mechanism $M \in \ord \cap \tas$ is at least $1+\sqrt{2}$.
\end{theorem}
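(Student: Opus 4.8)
The plan is to mirror the lower bound of \Cref{thm:sc-general-lower-known}, but now exploiting the fact that a mechanism in $\ord \cap \tas$ does \emph{not} observe the distances between alternatives (\Cref{enum2}) rather than the ordinal preferences. First I would invoke \Cref{lem:sc-general-alpha-all-three-types}: since any mechanism in $\ord \cap \tas$ is in particular a mechanism in $\ord \cap \dis \cap \tas$ that ignores the alternative distances, its distortion is at least $\min\{3,\alpha\}$; as $1+\sqrt{2} < 3$, assuming for contradiction a distortion strictly below $1+\sqrt{2}$ forces the mechanism to use some $\alpha < 1+\sqrt{2}$. It then suffices to build a family of instances on which such a mechanism incurs distortion at least $2 + 1/\alpha$, which exceeds $1+\sqrt{2}$ precisely because $\alpha = 1+\sqrt{2}$ is the positive root of $\alpha = 2 + 1/\alpha$.

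The key difficulty relative to \Cref{thm:sc-general-lower-known} is that the ordinal preferences are now visible, so the adversary can no longer keep the location of the optimum hidden by permuting agents; any alternative ranked uniformly highly would be identified as a good consensus choice. To defeat an ordinally-aware mechanism I would therefore make the whole input symmetric. Concretely, I would take $n$ alternatives $a_1,\dots,a_n$ and $n$ agents with the cyclic ordinal preferences $a_i \succ_i a_{i+1} \succ_i \dots \succ_i a_{i-1}$ (indices mod $n$) together with cyclically-shifted $\alpha$-TAS, chosen so that the prospective optimum sits just outside each agent's approval threshold. Because both the ordinal profile and the TAS profile are invariant under the cyclic shift $a_i \mapsto a_{i+1}$, the entire input seen by a mechanism in $\ord \cap \tas$ coincides across all rotations of the instance, so a deterministic mechanism must return a fixed alternative $a_{i^*}$ independent of the rotation.

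The crux is then to produce a single metric realization of this cyclically-symmetric ordinal$+$TAS profile — using the freedom in the unobserved inter-alternative distances and per-agent scales — in which one designated alternative is a far ``trap'' while another is the optimum, reproducing the gap of \Cref{thm:sc-general-lower-known}: every agent lies at distance $\approx \alpha$ from the optimum (just beyond its approval set) and at distance $\approx 1 + 2\alpha$ from the trap. Applying the cyclic shift $i^{*}-1$ times to this base realization relocates the trap role onto the mechanism's fixed output $a_{i^*}$ while preserving the shift-invariant input, so the mechanism is forced to select the trap; letting $n \to \infty$ renders the single ``owner'' agent of the trap negligible and yields $\SC(a_{i^*})/\SC(\opt) \to (1+2\alpha)/\alpha = 2 + 1/\alpha$, the desired contradiction.

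I expect this explicit realization to be the main obstacle. Unlike \Cref{thm:sc-general-lower-known}, where hiding the ordinal let us freely reposition the agents, here each agent's ranking and approval set are pinned down by the prescribed cyclic profile, so the entire asymmetry separating the optimum from the trap must be injected through the hidden distances alone, subject to the triangle inequality on all agent–alternative and alternative–alternative pairs. In particular, pushing the ratio all the way up to $2 + 1/\alpha$ (rather than the weaker $\alpha$ obtained in \Cref{lem:sc-general-alpha-all-three-types} when every alternative is approved) requires the approval sets to be \emph{strict} subsets, so that the trap may sit at distance $1 + 2\alpha > \alpha$ from the non-owner agents without being approved by them; checking that such distances are simultaneously realizable, cyclically consistent, and metric is the delicate part of the argument.
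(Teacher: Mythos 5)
Your high-level skeleton coincides with the paper's: invoke \Cref{lem:sc-general-alpha-all-three-types} to force $\alpha < 1+\sqrt{2}$, then exhibit a symmetric instance family on which any fixed choice incurs distortion $2+1/\alpha > 1+\sqrt{2}$. The gap is in the construction, and it is not a deferred routine verification: the single-layer cyclic instance you sketch ($n$ alternatives, preferences $a_i \succ_i a_{i+1} \succ_i \cdots \succ_i a_{i-1}$, shifted approval sets, optimum just beyond every threshold, trap at distance $\approx 1+2\alpha$) is metrically \emph{infeasible} precisely in the range $\alpha \in (1, 1+\sqrt{2})$ that the mechanism will use. To see this, note first that cyclic consistency essentially forces the optimum to be $a_{i^*-1}$ when the trap is $a_{i^*}$: for any other choice, the $\Theta(n)$ agents lying cyclically between the optimum and the trap rank the trap \emph{above} the optimum, so for each of them either the trap is close or the optimum is far. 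Now take any agent $j \notin \{i^*-1, i^*\}$ whose ranking does not end at $a_{i^*}$ (all but one such agent), and any alternative $a_m$ that $j$ ranks after $a_{i^*}$. Ordinal consistency requires $d(j,a_m) \geq d(j,a_{i^*}) \approx 1+2\alpha$. But agent $m$ ranks $a_j$ \emph{before} the optimum, so your stipulation that every agent sits at distance $\approx \alpha$ (times its unit scale) from the optimum forces $d(m,a_j) \lesssim \alpha$, and the triangle inequality gives $d(j,a_m) \leq d(j,a_j) + d(a_j,m) + d(m,a_m) \approx 2+\alpha$, which is strictly less than $1+2\alpha$ whenever $\alpha > 1$. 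Hence the trap distance itself is capped at $\approx 2+\alpha$, the achievable ratio collapses to $(2+\alpha)/\alpha = 1+2/\alpha$, and a mechanism that uses $\alpha = 2 < 1+\sqrt{2}$ holds your entire family to distortion $2$ --- the bound of \Cref{thm:sc-general-all-three-types}, not $1+\sqrt{2}$. Per-agent rescaling does not rescue this: shrinking some agents' scales only shortens the offending shortcut paths.

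The paper escapes exactly this obstruction with a \emph{two-layer} instance on $2n$ alternatives: $n$ ``private'' alternatives $a_i$, each the unique top choice and singleton TAS $A_i = \{a_i\}$ of agent $i$, plus $n$ alternatives $b_i$ that are nobody's top choice and carry the cyclic structure in the middle of each ranking. The optimum ($b_{i-1}$) and the punished alternatives ($a_i$ or $b_i$) live in or next to the $b$-layer; since no agent is within distance less than $\alpha+\delta$ of any $b$-alternative, and each $a_k$ is close only to agent $k$, every agent--alternative--agent path between distinct agents has length at least $2\alpha$, so the far distances $1+2\alpha$ survive the triangle inequality and the ranked-after-the-trap constraints can be met. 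Completing your proof requires this (or an equivalent) structural change; cyclic symmetry of the input alone is not sufficient.
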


\begin{proof}
Assume by contradiction that a mechanism with distortion strictly less than $1+\sqrt{2}$ exists, and consider any such mechanism. 
By \Cref{lem:sc-general-alpha-all-three-types}, the distortion of any deterministic mechanism is at least $\min\{3,\alpha\}$, even when the ordinal preferences of the agents are given. So, the mechanism must use $\alpha < 1+\sqrt{2}$.

\begin{figure}[t]
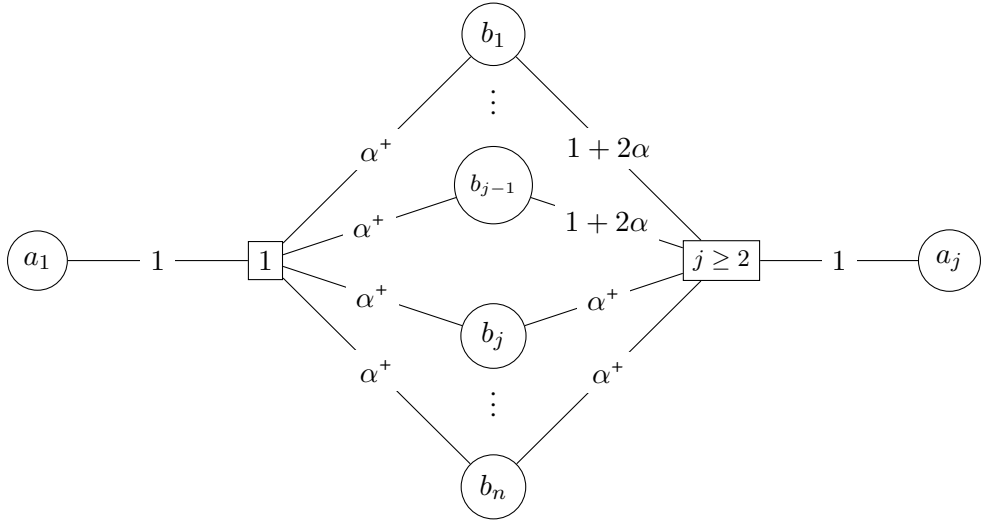

\centering
\tikz {
  \node (b1) [circle, draw] at (0,3) {$b_1$};
  \node (dots) at (0,2.2) {$\vdots$};
  \node (bi-1) [circle, draw] at (0,1)  {\footnotesize $b_{j-1}$};
  \node (bi) [circle, draw] at (0,-1)  {$b_j$};
  \node (dots) at (0,-1.8) {$\vdots$};
  \node (bn) [circle, draw] at (0,-3)  {$b_n$};
  
  \node (i) [rectangle, draw] at (-3,0) {$1$};
  \node (ai) [circle, draw] at (-6,0) {$a_1$};
  \node (j) [rectangle, draw]  at (3,0) {\footnotesize $j\geq 2$};
  \node (aj) [circle, draw] at (6,0) {$a_j$};

  \draw (i) edge node [midway, fill=white] {$1$} (ai) ;
  \draw (j) edge node [midway, fill=white] {$1$} (aj) ;

  \draw (i) edge node [midway, fill=white] {$\alpha^{\text{\footnotesize+}}$} (b1) ; 
  \draw (i) edge node [midway, fill=white] {$\alpha^{\text{\footnotesize+}}$} (bi-1) ; 
  \draw (i) edge node [midway, fill=white] {$\alpha^{\text{\footnotesize+}}$} (bi) ; 
  \draw (i) edge node [midway, fill=white] {$\alpha^{\text{\footnotesize+}}$} (bn) ; 
  
  \draw (j) edge node [midway, fill=white] {$1+2\alpha$} (b1) ; 
  \draw (j) edge node [midway, fill=white] {$1+2\alpha$} (bi-1) ; 
  \draw (j) edge node [midway, fill=white] {$\alpha^{\text{\footnotesize+}}$} (bi) ; 
  \draw (j) edge node [midway, fill=white] {$\alpha^{\text{\footnotesize+}}$} (bn) ; 
  
}
\caption{An example of the metric space used in the proof of \Cref{thm:sc-general-lower-ordinal} to show a lower bound of $2+1/\alpha$ for $i=1$. Agents are represented by rectangles, and alternatives by circles. The notation $\alpha^{\text{\footnotesize+}}$ is used to represent a value slightly larger than $\alpha$ (instead of $\alpha+\delta$).}
\label{fig:thm:sc-general-lower-ordinal}
\end{figure}

We consider an instance similar to the one used in the proof of Theorem~\ref{thm:sc-general-all-three-types} without the requirement to have fixed distances between alternatives (as this information is not used by the mechanism). There are $n$ agents and $2n$ alternatives $A=\{a_1,\ldots,a_n,b_1,\ldots,b_n\}$. 
Let $[X]$ be an arbitrary permutation of the elements of $X$. Each agent $i \in [n]$ has the ordinal preference
\begin{align*}
    a_i \succ_i b_i \succ_i \ldots \succ_i b_n \succ_i b_1 \succ \ldots \succ_i b_{i-1} \succ [\{a_1, \ldots, a_n\} \setminus \{a_i\}].
\end{align*}
In addition, agent $i$ reports that she only approves of alternative $a_i$, that is, $A_i = \{a_i\}$.
Let $\delta>0$ be an infinitesimal.
Now, suppose that, for some $i \in [n]$, the mechanism chooses $a_i$ or $b_i$ as the winner. 
We consider the following metric space:
\begin{itemize}
    \item[-] Agent $i$ is 
    at distance $1$ from alternative $a_i$, 
    at distance $\alpha+\delta$ from each alternative in $\{b_1, \ldots, b_n\}$, 
    and at distance $1+2(\alpha+\delta)$ from the remaining alternatives in $\{a_1, \ldots, a_n\}\setminus\{a_i\}$.

    \item[-] Agent $j < i$ is at distance $1$ from alternative $a_j$,    
    at distance $\alpha+\delta$ from every alternative in $\{b_j, \ldots, b_{i-1}\}$, 
    at distance $1+2\alpha$ from every alternative in $\{b_i, \ldots, b_n, \ldots, b_{j-1}\}$, 
    and at distance $1+2(\alpha+\delta)$ from every alternative in $\{a_1, \ldots, a_n\}\setminus\{a_j\}$.
    
    \item[-] Agent $j > i$ is at distance $1$ from alternative $a_j$, 
    at distance $\alpha+\delta$ from every alternative in $\{b_j, \ldots, b_n\}$, 
    at distance $1+2\alpha$ from every alternative in $\{b_1, \ldots, b_{j-1}\}$, 
    and at distance $1+2(\alpha+\delta)$ from every alternative in $a_1, \ldots, a_n\}\setminus\{a_j\}$.
\end{itemize}
An example of the metric space for the case $i=1$ is given in Figure~\ref{fig:thm:sc-general-lower-ordinal}. 
Interpreting $b_0$ as $b_n$, we can observe that every agent $j \neq i$ has value $\alpha+\delta$ for $b_{i-1}$, and value $1+2\alpha$ for $b_i$ and $a_i$, and thus
$$\SC(a_i) = 1+(n-1)(1+2\alpha+2\delta),$$ 
$$\SC(b_i) = \alpha+\delta + (n-1)(1+2\alpha),$$ 
and 
$$\SC(b_{i-1})=n(\alpha+\delta).$$ 
As $n$ becomes arbitrarily large and $\delta$ becomes arbitrarily small, we obtain a lower bound of $2+1/\alpha \geq 2+1/(1+\sqrt{2}) = 1+\sqrt{2}$, since $\alpha < 1+\sqrt{2}$.
\end{proof}

It is an interesting open problem whether there exists a mechanism that inputs only the $\alpha$-TAS and the ordinal preferences and achieves a distortion of $1+\sqrt{2}$ in general metric spaces. In the next subsection, we will show  that, for the line metric, there is a mechanism with these informational inputs that achieves the best possible distortion, even among mechanisms that use all three types of information.

Before we do that however, we show that any mechanism that uses only the $\alpha$-TAS (i.e., without having access to the distances between alternatives or the ordinal preferences) is bound to have a very high distortion.

\begin{theorem}\label{thm:sc-lower-only-approval-sets}
For the social cost objective, the distortion of any $M \in \tas$ is $\Theta(n)$.
\end{theorem}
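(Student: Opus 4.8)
The plan is to establish matching $O(n)$ upper and $\Omega(n)$ lower bounds on the best distortion achievable by any mechanism in $\tas$.

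For the upper bound, I would analyze the trivial mechanism that, using only the approval sets, outputs an arbitrary alternative $w$ from the approval set $A_1$ of a fixed agent; this set is nonempty since $o_1 \in A_1$ for any $\alpha \geq 1$. Let $o$ denote the optimal alternative. Two applications of the triangle inequality give $d(o,w) \leq d(o,1) + d(1,w)$, and since $w \in A_1$ we have $d(1,w) \leq \alpha\, d(1,o_1) \leq \alpha\, d(1,o)$; hence $d(o,w) \leq (1+\alpha)\, d(1,o) \leq (1+\alpha)\, \SC(o)$, where the last step uses $d(1,o) \leq \SC(o)$. Bounding each agent's cost by $d(i,w) \leq d(i,o) + d(o,w)$ and summing over all agents then yields $\SC(w) \leq \SC(o) + n\, d(o,w) \leq \big(1 + n(1+\alpha)\big)\, \SC(o)$. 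Choosing, say, $\alpha = 1$ gives distortion at most $1 + 2n = O(n)$.

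For the lower bound, the idea is to exhibit a family of $n$ instances $I_1, \dots, I_n$ on a line that a $\tas$ mechanism cannot distinguish because they share an identical approval profile, yet whose optima force any fixed output to be bad in at least one of them. I take $n$ agents and $n$ alternatives $a_1, \dots, a_n$, and arrange every instance so that each agent $i$ approves exactly her private favorite, i.e.\ $A_i = \{a_i\}$; this profile will be the same in all $I_k$. In instance $I_k$, I place agent $k$ together with $a_k$ far away at a large coordinate $R$, while the remaining agents and their favorites sit in a tiny, well-separated cluster near the origin, with intra-cluster gaps large enough relative to the approval radius to keep every $A_i$ a singleton, which is achievable for any fixed $\alpha$. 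Then any cluster alternative has social cost at most about $R + n$, whereas $a_k$ has social cost at least $(n-1)(R-1)$, since the $n-1$ cluster agents are all at distance $\approx R$ from it; letting $R \to \infty$, the ratio tends to $n-1$. Since all $I_k$ present the identical profile $(\{a_1\}, \dots, \{a_n\})$, a deterministic mechanism returns the same alternative $a_{j^\ast}$ on all of them, and evaluating on $I_{j^\ast}$ gives distortion at least $n-1 = \Omega(n)$, for every $\alpha \geq 1$. Combined with the upper bound, this gives $\Theta(n)$.

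The main obstacle is the lower bound, specifically engineering instances that are simultaneously (i) indistinguishable, i.e.\ share a single approval profile, and (ii) have optima that move across different alternatives. The tension is that the triangle inequality tends to bound distortion by $O(1)$ whenever all agents cluster near the optimum, so a linear gap requires a far-away ``outlier'' agent whose private favorite becomes a globally terrible alternative. The crucial observation that resolves this is that singleton (favorite-only) approvals already hide which agent is the outlier, so rotating the outlier role through all $n$ agents keeps the reported profile fixed while sweeping the bad alternative across all candidates, defeating any deterministic choice.
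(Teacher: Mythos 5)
Your proposal is correct and takes essentially the same approach as the paper: the same trivial ``output an approved alternative of a fixed agent'' mechanism with a triangle-inequality bound of $1+2n$ for the upper bound, and the same symmetric singleton-approval profile $(A_i=\{a_i\})_{i\in N}$ with the mechanism's chosen alternative placed far away alongside its lone approving agent for the $\Omega(n)$ lower bound. Your framing via a family of $n$ indistinguishable instances is just the explicit form of the paper's symmetry/without-loss-of-generality step, so the arguments coincide in substance.
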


\begin{proof}
Clearly, if we choose $\alpha=1$, the $\alpha$-TAS of each agent is a singleton that consists of her most-preferred alternative. To obtain an asymptotically linear upper bound, it suffices to consider the very simple mechanism that uses $\alpha=1$, and outputs the most-preferred alternative $o_i$ of an arbitrary agent $i$ as the winner $w$. 
Denoting by $o$ an optimal alternative, and using the triangle inequality, we have
\begin{align*}
    \SC(w) &= \sum_{j \in N} d(j,o_i) \\
    &\leq \sum_{j \in N} \bigg( d(j,o) + d(i,o) + d(i,o_i) \bigg) \\
    &\leq \sum_{j \in N} \bigg( d(j,o) + 2d(i,o) \bigg) \\
    &\leq (1+2n) \cdot \SC(o),
\end{align*}
and thus the distortion is $\Theta(n)$ when we only have access to the $\alpha$-TAS. 

For the lower bound, consider an arbitrary deterministic mechanism that has access only to the $\alpha$-TAS, and the following instance with $n$ agents and $n$ alternatives $\{a_1,\ldots,a_n\}$. 
Each agent $i$ approves only alternative $a_i$, that is, $A_i = \{a_i\}$. 
Since the given information is symmetric, the mechanism can choose any of the alternatives as the winner, say $a_1$. 
Let $\varepsilon > 0$ be an infinitesimal. We define the following metric space:
\begin{itemize}
    \item[-] Agent $1$ and alternative $a_1$ are both located at $1$ of the line of real numbers. 
    \item[-] Agent $i \in \{2,\ldots,n-1\}$ and alternative $a_i$ are located at a unique point within the $\varepsilon$-neighborhood of $0$, but not at $0$.
    \item[-] Agent $n$ and alternative $a_n$ are located at $0$.
\end{itemize}
Since every agent $i$ is located exactly where their most-preferred alternative $a_i$ is located, this metric space is consistent to the $\alpha$-TAS given as input, for any $\alpha$. As $\varepsilon$ tends to $0$, we have that $\SC(a_1) = n-1$, whereas $\SC(a_n) = 1$, leading to a lower bound of $n-1$.
\end{proof}


\subsection{Refined Results for the Line Metric}
When the metric space is a line, we will show refined tight bounds for mechanisms that use different types of information. The best possible distortion achievable here is $2\sqrt{2}-1 \approx 1.83$. The lower bound follows from the work of \citet{abramowitz2019awareness}, via a connection between their setting and ours, when there are only two alternatives; we provide a proof below using our terminology for completeness. More interestingly, by combining properties shown in previous work, we propose a mechanism, coined {\sc $\alpha$-Elimination-Weighted-Majority}, that achieves the distortion of $2\sqrt{2}-1 \approx 1.83$ as an upper bound for $\alpha = \sqrt{2}-1$, and is thus best possible. In fact, our mechanism does not need to have access to the distances between alternatives, only to the $\alpha$-TAS and the ordinal preferences of the agents. 

\begin{remark}\label{rem:ordinal-prefs-induce-positions}
Since the metric space is a line, we may assume that we have access to an ordering of the agents and alternatives on the line, which is unique up to permutations of identical agents and reversal; this ordering can be determined via the ordinal preferences of the agents \citep{elkind2014recognizing}.  
\end{remark}

Now, the mechanism works in two steps:
\begin{itemize}
\item {\em Elimination step:} 
It identifies the most-preferred alternative $x$ of the median agent, the alternative 
$\ell$ that is directly the left of $x$, and the alternative $r$ that is directly to the right of $x$. It then eliminates one of $\ell$ and $r$ by comparing the number $n(\ell,x)$ of agents that prefer $\ell$ to $x$ and the number $n(r,x)$ of agents that prefer $r$ to $x$ as in the work of \citet{anshelevich2017randomized}. In particular, if $n(\ell,x) \leq n(r,x)$, then it eliminates $\ell$, otherwise it eliminates $r$, and stores the non-eliminated alternative as $y$. 

\item 
{\em Weighted-majority step:} Afterwards, the mechanism runs a weighted majority between $x$ and $y$ by assigning a weight of $1$ to each agent $i$ such that $x, y \in A_i$, and a weight of $\frac{\alpha+1}{\alpha-1}$ to every other agent. This step will be shown to be equivalent to the algorithm used by \citet{abramowitz2019awareness} to show a bound of $2\sqrt{2}-1 \approx 1.83$ in their setting when there are two alternatives.
\end{itemize}
 See Mechanism~\ref{mech:elimination-weighted-majority} for a description using pseudocode. 
 
\newcommand\mycommfont[1]{\normalfont\textcolor{DarkGreen}{#1}}
\SetCommentSty{mycommfont}
\begin{algorithm}[t]
\SetNoFillComment
\caption{\sc $\alpha$-Elimination-Weighted-Majority}
\label{mech:elimination-weighted-majority}
{\bf Input:} Ordinal preferences, $\alpha$-TAS\;
{\bf Output:} Winner $w$\;
$x \gets $ most-preferred alternative of median agent\;
$\ell \gets $ alternative directly to the left of $x$\;
$r \gets $ alternative directly to the right of $x$\;
\tcp*[h]{Elimination step} \\
$n(\ell,x) \gets |\{i \in N: \ell \succ_i x \}|$\;
$n(r,x) \gets |\{i \in N: r \succ_i x \}|$\;
\uIf{$n(r,x) \geq n(\ell,x)$}{
    $y \gets r$\;
}
\Else{
    $y \gets \ell$\;
}
\For{$i \in N$}{
    \uIf{$x,y \in A_i$}{
        $w_i \gets 1$\;
    }
    \Else{
        $w_i \gets \frac{\alpha+1}{\alpha-1}$\;
    }
}
\tcp*[h]{Weighted-majority step} \\
$v_x \gets \sum_{i: x \succ_i y} w_i$\;
$v_y \gets \sum_{i: x \succ_i y} w_i$\;
\uIf{$v_x \geq v_y$}{
    $w \gets x$\;
}
\Else{
    $w \gets y$\;
}
\end{algorithm}

\begin{theorem}\label{thm:sc-line-ordinal-upper}
When the metric space is a line, for the social cost objective, the distortion of {\sc $\alpha$-Elimination-Weighted-Majority} is at most $\max\{\frac{3\alpha-1}{\alpha+1}, 1+\frac{2}{\alpha}\}$.
\end{theorem}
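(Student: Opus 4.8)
The plan is to reduce the analysis to a contest between just two alternatives and then bound the distortion of the weighted-majority tie-break using the threshold information. First I would invoke \Cref{rem:ordinal-prefs-induce-positions} to fix a left-to-right ordering of the agents and alternatives consistent with the line, and work with the two finalists $x$ (the median agent's top choice) and $y$ (the surviving neighbor, one of $\ell,r$). The crucial first step is to show that some optimal alternative $o$ can be taken in $\{x,y\}$. Since $\SC(\cdot)$ is a convex, piecewise-linear function of position on the line, the immediate neighbor of $x$ on each side weakly dominates every farther alternative on that side, so $o\in\{\ell,x,r\}$. It then remains to argue that the elimination rule discards a neighbor that is never strictly better than $x$: comparing $n(\ell,x)$ and $n(r,x)$ and keeping the more-supported side is exactly the rule of \citet{anshelevich2017randomized}, and I would adapt their guarantee that the discarded neighbor cannot host the social-cost median, yielding $o\in\{x,y\}$. (For an odd number of agents this is automatic, since the median agent's favorite is the alternative closest to the social-cost median and so $o=x$; the comparison is only needed when the number of agents is even and the optimal position forms an interval, where a neighbor of $x$ can genuinely be optimal.)

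With the reduction in hand, assume without loss of generality that the mechanism outputs $w=x$ while $o=y$ (otherwise the distortion is $1$); write $\Delta=d(x,y)$ and let $m$ be their midpoint, so every agent left of $m$ votes $x$ and every agent right of $m$ votes $y$. The second step is to translate $\alpha$-TAS membership into the proximity condition used by \citet{abramowitz2019awareness}. A weight-$1$ agent has $x,y\in A_i$, hence $\max\{d(i,x),d(i,y)\}\le\alpha\,d(i,o_i)\le\alpha\min\{d(i,x),d(i,y)\}$, so its distances to $x$ and $y$ lie within a factor $\alpha$, regardless of where $o_i$ is. The only subtlety is that a weight-$\tfrac{\alpha+1}{\alpha-1}$ agent need not strongly prefer one finalist, since its true favorite $o_i$ could be a third alternative. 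Here I would use crucially that $x$ and $y$ are \emph{adjacent}, so no alternative lies strictly between them: for every agent positioned in $[x,y]$ we have $d(i,o_i)=\min\{d(i,x),d(i,y)\}$, and the membership test coincides with the two-alternative one. The remaining (outer) agents are the only ones whose weight could be inflated by a nearer third alternative, and I would check that such agents, namely far-left $x$-voters and far-right $y$-voters, only increase $\SC(o)$ and/or hinder $x$ from winning, so they never help the worst case.

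The final step is the quantitative bound. I would classify each agent by position and record, as a function of its type, its vote, its weight ($1$ or $\tfrac{\alpha+1}{\alpha-1}$), its contribution $h(i)=d(i,x)-d(i,y)$ to $\SC(x)-\SC(y)$, and its contribution to $\SC(y)$; note $h\equiv-\Delta$ left of $x$, $h\equiv+\Delta$ right of $y$, and $h$ is linear and vanishes at $m$ in between. Writing $\SC(x)/\SC(y)=1+\big(\sum_i h(i)\big)/\SC(y)$, the task is to maximize this ratio subject to the winning constraint $\sum_{i:\,x\succ_i y}w_i\ge\sum_{i:\,y\succ_i x}w_i$. This is a linear program whose optimum is attained at vertex configurations, and the two that bind give the two terms of the stated maximum: placing the $y$-voters exactly at $y$ (weight $\tfrac{\alpha+1}{\alpha-1}$, zero cost to $\SC(y)$, $h=\Delta$) and balancing with $x$-voters at the decisive/balanced boundary on the $x$ side yields $1+\tfrac{2}{\alpha}$, while balancing the same $y$-voters with balanced (weight-$1$) $x$-voters near $m$ yields $\tfrac{3\alpha-1}{\alpha+1}$. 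The main obstacle is precisely this optimization: showing rigorously that no mixture of agent types beats $\max\{\tfrac{3\alpha-1}{\alpha+1},\,1+\tfrac{2}{\alpha}\}$, equivalently that the per-weight ``price'' of securing a winning vote is minimized by these two placements. Once this is established for the case $w=x,\ o=y$, the symmetric case $w=y,\ o=x$ follows by reflecting the line (which swaps $x$ and $y$ and leaves the mechanism invariant), completing the bound.
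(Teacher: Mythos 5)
Your reduction is essentially the one the paper uses: invoke the guarantee of \citet{anshelevich2017randomized} to conclude that the optimal alternative survives elimination and lies in $\{x,y\}$, push all agents outside $[x,y]$ to the endpoints, and use the adjacency of $x$ and $y$ to identify $\alpha$-TAS membership with the "preference strength at most $\alpha$ / more than $\alpha$" dichotomy. The problem is that your proof stops exactly where the quantitative content of the theorem lies. You reduce everything to the claim that a two-alternative weighted majority with weights $1$ and $\frac{\alpha+1}{\alpha-1}$ has distortion at most $\max\{\frac{3\alpha-1}{\alpha+1}, 1+\frac{2}{\alpha}\}$, exhibit the two configurations that attain the two terms, and then say that proving no other configuration does worse is ``the main obstacle.'' Exhibiting attaining configurations is a lower-bound computation; the theorem needs the matching upper bound over \emph{all} placements, and your LP sketch does not supply it (the objective $\SC(x)/\SC(y)$ is a ratio of linear functions of continuously many position variables, so ``optimum at a vertex'' requires an actual linear-fractional-programming or case-analysis argument, not just an appeal to LP structure). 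The paper closes this step by citation rather than computation: it shows that, on the reduced worst-case instance, the weighted-majority step of the mechanism coincides with the threshold-$\tau$ weighted majority of \citet{abramowitz2019awareness} with $\tau=\alpha$, and imports their distortion bound $\max\{\frac{3\tau-1}{\tau+1}, 1+\frac{2}{\tau}\}$ verbatim. So the architecture of your argument is right, but as written the central inequality is asserted, not proved.

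Two further points. First, your parenthetical claim that for an odd number of agents one automatically has $o=x$ (``the median agent's favorite is the alternative closest to the social-cost median'') is false: with agents at $-0.1$, $0$, $10$ and alternatives at $-1$ and $1.1$, the median agent's favorite is the alternative at $-1$, yet the optimum is the alternative at $1.1$. The elimination comparison of $n(\ell,x)$ against $n(r,x)$ is needed for every $n$, which is how the paper (following \citeauthor{anshelevich2017randomized}) argues it. Second, your handling of the ``outer'' agents is left as ``I would check''; the paper's argument here is short and worth knowing: for an agent $i$ left of $x$ one has $d(i,y)=d(i,x)+d(x,y)$, so $d(i,x)$ is a common additive term in numerator and denominator of a ratio that is at least $1$, and deleting it (i.e., moving $i$ onto $x$) can only increase the ratio; after the move the agent approves only $x$ and still votes for $x$ with larger weight, so the weighted-majority outcome is unaffected and the modified instance remains a valid worst case.
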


\begin{proof}
\citet{anshelevich2017randomized} first showed that one of the alternatives in $\{x,\ell,r\}$ must be the optimal alternative. In addition, they showed that $\ell$ cannot be better than $x$ in terms of social cost in case $n(\ell,x) \leq n(r,x)$, and $r$ cannot be better than $x$ in case if $n(\ell,x) > n(r,x)$. Hence, by appropriately eliminating one of $\ell$ or $r$ depending on whether $n(\ell,x) \leq n(r,x)$ or not, we have that the optimal alternative is one of $x$ or $y$ (the non-eliminated alternative among $\ell$ and $r$). Without loss of generality, we assume that $y=r$, and thus $y$ is at the right of $x$ on the line. We also assume that the mechanism chooses $x$ as the winner, in which case, to have a distortion larger than $1$, $y$ must be the optimal alternative; clearly, the case in which the mechanism chooses $y$ as the winner and $x$ is the optimal alternative can be handled with similar arguments. 

We partition the agents in the following sets:
\begin{itemize}
    \item $N_{xy} = \{i: x, y \in A_i \}$ is the set of agents that approve both $x$ and $y$;
    \item $N_{x\overline{y}} = \{i: x \in A_i, y \not\in A_i \}$ is the set of agents that approve only $x$;
    \item $N_{\overline{x}y} = \{i: x \not\in A_i , y \in A_i \}$ is the set of agents that approve only $y$;
    \item $N_{\overline{x} \succ \overline{y}} = \{i: x, y \not\in A_i, x \succ_i y\}$ is the set of agents that approve none of $x$ and $y$, but prefer $x$ over $y$.
    \item $N_{\overline{y} \succ \overline{x}} = \{i: x, y \not\in A_i, y \succ_i x \}$ is the set of agents that approve none of $x$ and $y$, but prefer $y$ over $x$.
\end{itemize}
Observe that the most-preferred alternative of any agent $i \in N_{\overline{x} \succ \overline{y}}$ must be at the left of $x$, and thus agent $i$ must also be at the left of $x$. Similarly, any agent $i \in N_{\overline{y} \succ \overline{x}}$ must be at the right of $y$. We argue that, in a worst-case instance, all agents lie in the interval $[x,y]$, and thus $N_{\overline{x} \succ \overline{y}} = N_{\overline{y} \succ \overline{x}} = \varnothing$. 
Let $i$ be an agent that is positioned at the left of $x$; clearly, agent $i \in N_{xy} \cup N_{x\overline{y}} \cup N_{\overline{x} \succ \overline{y}}$. 
The distortion can be written as 
\begin{align*}
\frac{\SC(x)}{\SC(y)}
= \frac
{\sum_{j \in N \setminus \{i\}} d(j,x) + d(i,x)}
{\sum_{j \in N \setminus \{i\}} d(j,y) + d(i,y)}.
\end{align*}
Since each agent $i$ is at the left of $x$, $d(i,y) = d(i,x) + d(x,y)$, and thus
\begin{align*}
\frac{\SC(x)}{\SC(y)}
&= \frac
{\sum_{j \in N \setminus \{i\}} d(j,x) + d(i,x)}
{\sum_{j \in N \setminus \{i\}} d(j,y) + d(i,x) + d(x,y)} 
\leq \frac
{\sum_{j \in N \setminus \{i\}} d(j,x)}
{\sum_{j \in N \setminus \{i\}} d(j,y) + d(x,y)}, 
\end{align*}
where the inequality follows since $d(i,x)$ appears in the numerator and the denominator of the ratio, and the distortion is, by definition, at least $1$. Hence, we can obtain a worse instance in terms of the distortion by moving any potential agent that are positioned at the left of $x$ to $x$; in this new instance, such agents can only approve $x$, and are thus all moved to the set $N_{x\overline{y}}$, leaving $N_{\overline{x} \succ \overline{y}}$ empty. A similar argument can be used to show that, in a worst-case instance, there are no agents at the right of $y$, and thus $N_{\overline{y} \succ \overline{x}} = \varnothing$.

Now, the distortion bound of $2\sqrt{2}-1$ follows by the corresponding bound shown by \citet{abramowitz2019awareness}. In their model, when there are two alternatives $a$ and $b$, an agent $i$ that prefers alternative $a$ over $b$ is said to have a preference strength of $\frac{d(i,b)}{d(i,a)}$. \citeauthor{abramowitz2019awareness} considered a weighted majority mechanism that assigns weight $1$ to all agents with preference strength at most $\tau$ and weight $\frac{\tau+1}{\tau-1}$ to all other agents (with preference strength strictly larger than $\tau$), for some $\tau \geq 1$, and showed that this mechanism achieves a distortion of at most $\max\{\frac{3\tau-1}{\tau+1}, 1+\frac{2}{\tau}\}$. 

In the worst-case instance of our mechanism, since all agents are positioned in the interval $[x,y]$, we have:
\begin{itemize}
    \item For any agent $i \in N_{xy}$, the preference strength of $i$ is at most $\alpha$. In particular, if $i$ prefers $x$ over $y$, $$d(i,y) \leq \alpha d(i,x) \Leftrightarrow \frac{d(i,y)}{d(i,x)} \leq \alpha.$$ Otherwise, $$d(i,x) \leq \alpha d(i,y) \Leftrightarrow \frac{d(i,x)}{d(i,y)} \leq \alpha.$$
    
    \item For any agent $i \in N_{x\overline{y}}$, the preference strength of $i$ is strictly larger than $\alpha$ since 
    $$d(i,y) > \alpha d(i,x) \Leftrightarrow \frac{d(i,y)}{d(i,x)} > \alpha.$$
    
    \item For any agent $i \in N_{\overline{x}y}$, the preference strength of $i$ is strictly larger than $\alpha$ since 
    $$d(i,x) > \alpha d(i,y) \Leftrightarrow \frac{d(i,x)}{d(i,y)} > \alpha.$$
\end{itemize}
Consequently, in the worst-case instance, the weighted majority that our mechanism runs in Step 3 (which assigns a weight of $1$ to each agent in $N_{xy}$ and a weight of $\frac{\alpha+1}{\alpha-1}$ to any other agent) is exactly the same as the weighted majority mechanism of \citet{abramowitz2019awareness} with $\tau = \alpha$, and thus the distortion is at most 
$\max\{\frac{3\alpha-1}{\alpha+1}, 1+\frac{2}{\alpha}\}$.
\end{proof}

By optimizing over alpha, we obtain the following corollary.  

\begin{corollary}
When the metric space is a line, for the social cost objective, the distortion of {\sc Elimination-Weighted-Majority} is at most $2\sqrt{2}-1 \approx 1.83$.
\end{corollary}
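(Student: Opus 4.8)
The plan is to treat this statement purely as a one-variable optimization of the bound established in the immediately preceding theorem. By Theorem~\ref{thm:sc-line-ordinal-upper}, for every choice of $\alpha \geq 1$ the mechanism {\sc $\alpha$-Elimination-Weighted-Majority} has social-cost distortion at most $g(\alpha) := \max\{\frac{3\alpha-1}{\alpha+1},\, 1+\frac{2}{\alpha}\}$ on the line. Since the mechanism designer is free to select $\alpha$, it suffices to minimize $g$ over $\alpha \geq 1$ and report the resulting value; the named mechanism {\sc Elimination-Weighted-Majority} is simply the instantiation at the minimizing $\alpha$.

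First I would record the monotonicity of the two competing branches. Writing $f_1(\alpha) = \frac{3\alpha-1}{\alpha+1}$ and $f_2(\alpha) = 1+\frac{2}{\alpha}$, a one-line derivative computation gives $f_1'(\alpha) = \frac{4}{(\alpha+1)^2} > 0$, so $f_1$ is strictly increasing, whereas $f_2$ is strictly decreasing in $\alpha$. Consequently the pointwise maximum $g = \max\{f_1,f_2\}$ first decreases (where it equals $f_2$) and then increases (where it equals $f_1$), so its minimum over $\alpha \geq 1$ is attained exactly at the point where the two branches coincide. This monotonicity observation is the only conceptual content of the argument: it is what guarantees that equating the two expressions yields the global minimum of the max rather than an arbitrary crossing.

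Next I would solve $f_1(\alpha) = f_2(\alpha)$. Clearing denominators in $\frac{3\alpha-1}{\alpha+1} = \frac{\alpha+2}{\alpha}$ gives $\alpha(3\alpha-1) = (\alpha+1)(\alpha+2)$, which simplifies to the quadratic $\alpha^2 - 2\alpha - 1 = 0$, with roots $1 \pm \sqrt{2}$. The only root satisfying $\alpha \geq 1$ (which is also needed so that the weight $\frac{\alpha+1}{\alpha-1}$ used by the mechanism is well-defined and positive) is $\alpha = 1+\sqrt{2}$.

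Finally I would substitute this value back in to read off the bound. A short rationalization shows $f_1(1+\sqrt{2}) = \frac{2+3\sqrt{2}}{2+\sqrt{2}} = 2\sqrt{2}-1$, and likewise $f_2(1+\sqrt{2}) = 1 + \frac{2}{1+\sqrt{2}} = 1 + 2(\sqrt{2}-1) = 2\sqrt{2}-1$, so $g(1+\sqrt{2}) = 2\sqrt{2}-1$, which is precisely the claimed distortion. I do not anticipate any genuine obstacle here: the entire proof is elementary algebra and calculus applied to the guarantee of Theorem~\ref{thm:sc-line-ordinal-upper}, and the only point worth stating explicitly is the monotonicity of $f_1$ and $f_2$ that justifies balancing the two branches.
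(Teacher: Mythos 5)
Your proposal is correct and matches the paper's approach exactly: the paper derives this corollary from Theorem~\ref{thm:sc-line-ordinal-upper} simply ``by optimizing over $\alpha$,'' which is precisely the balancing computation you carry out, with the correct minimizer $\alpha = 1+\sqrt{2}$ and value $2\sqrt{2}-1$. Note that your value of $\alpha$ also agrees with the paper's matching lower-bound argument (Theorem~\ref{thm:sc-line-ordinal-lower}), whereas the prose preceding the mechanism's definition misstates the optimal threshold as $\alpha=\sqrt{2}-1$ --- an apparent typo, since that choice would make the weight $\frac{\alpha+1}{\alpha-1}$ negative.
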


Next, we present a matching lower bound of $2\sqrt{2}-1$ on the distortion of any mechanism that uses all three types of information, which establishes the tightness of the distortion proven above. As mentioned before, this bound already follows by the work of \citet{abramowitz2019awareness}, whose model coincides with ours when there are only two alternatives.

\begin{theorem}\label{thm:sc-line-ordinal-lower}
When the metric space is a line, for the social cost objective, the distortion of any mechanism $M \in \ord \cap \dis \cap \tas$ is at least $2\sqrt{2}-1$.
\end{theorem}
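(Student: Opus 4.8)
The plan is to reduce to the case of two alternatives on the line and to reuse the tight construction of \citet{abramowitz2019awareness}. With only two alternatives $a,b$, all three information types collapse to very little: the ordinal preference of an agent $i$ reveals only which of $a,b$ she prefers; the known distance $d(a,b)$ merely fixes the scale, and the distortion is scale-invariant; and her $\alpha$-TAS reveals exactly whether her less-preferred alternative lies within a factor $\alpha$, i.e.\ whether her \emph{preference strength} (the ratio of her distance to the less-preferred over her distance to the preferred alternative) is at most $\alpha$, in which case $A_i$ contains both alternatives, or exceeds $\alpha$, in which case $A_i$ is a singleton. Thus, on two-alternative line instances, every mechanism $M \in \ord \cap \dis \cap \tas$ sees precisely the data of the threshold model of \citeauthor{abramowitz2019awareness} with $\tau = \alpha$: for each agent, her preferred alternative together with a single weak/strong label. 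Their lower bound therefore applies, and it remains to reproduce it in our terminology.

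First I would fix the threshold $\alpha$ used by the mechanism and place $a,b$ on the line at distance $2$, so that each agent's position determines its label: a ``strong-$a$'' agent occupies $\left(-\tfrac{2}{\alpha-1}, \tfrac{2}{\alpha+1}\right)$, a ``weak-$a$'' agent occupies $\left[\tfrac{2}{\alpha+1}, 1\right) \cup \left(-\infty, -\tfrac{2}{\alpha-1}\right]$, and symmetrically for $b$. The key point is that an agent can be moved anywhere within the interval dictated by its label without changing the information the mechanism receives. I would then build two instances $I_1,I_2$ that induce identical reports (the same preferred alternative and weak/strong label for every agent) but in which the optimal alternative differs: in $I_1$ alternative $a$ is optimal and $\SC(b)/\SC(a)$ is large, while in $I_2$ alternative $b$ is optimal and $\SC(a)/\SC(b)$ is equally large. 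Since $M$ is deterministic and $I_1,I_2$ are information-equivalent, it must return the same alternative on both, and is hence suboptimal on at least one of them; its distortion is therefore at least the common cost ratio.

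It remains to choose the agents' labels and positions so as to push this ratio as high as possible. Distributing the agents among the four label classes and positioning each within its permitted interval, the worst ratio one can force is $\max\left\{\tfrac{3\alpha-1}{\alpha+1},\, 1+\tfrac{2}{\alpha}\right\}$, matching the upper bound of \Cref{thm:sc-line-ordinal-upper}. The branch $1+\tfrac{2}{\alpha}$ is obtained already by an all-strong configuration with the strong-$a$ agents placed at $a$ and the strong-$b$ agents at $b$ (a short computation on the segment $[a,b]$), whereas the branch $\tfrac{3\alpha-1}{\alpha+1}$ requires exploiting the positional freedom more fully, including placements outside $[a,b]$, where moving an agent changes the two costs by the same additive amount while preserving its label. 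Finally, since $\min_{\alpha>1}\max\left\{\tfrac{3\alpha-1}{\alpha+1}, 1+\tfrac{2}{\alpha}\right\} = 2\sqrt{2}-1$, attained at $\alpha = 1+\sqrt{2}$, the forced ratio is at least $2\sqrt{2}-1$ no matter which $\alpha$ the mechanism uses, giving the claimed bound.

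The main obstacle is the last step: designing the two information-equivalent instances and the exact agent positions that realise the $\tfrac{3\alpha-1}{\alpha+1}$ branch in the range $\alpha > 1+\sqrt{2}$, and verifying in each instance both line-consistency (the chosen positions induce the claimed labels and the stated distances) and that the intended alternative is genuinely optimal. The all-strong case is routine; the difficulty is that a naive symmetric, mirror-image construction confined to the segment caps the forced ratio strictly below $2\sqrt{2}-1$ for intermediate values of $\alpha$, so one must use genuinely asymmetric instances and off-segment placements. This is precisely the technical content of \citeauthor{abramowitz2019awareness}, which we are invoking.
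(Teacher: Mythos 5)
Your skeleton is essentially the paper's own: its proof of \Cref{thm:sc-line-ordinal-lower} is exactly the two-alternative construction of \citet{abramowitz2019awareness} rewritten in TAS language, with your two branches appearing as two concrete instances --- one with singleton approval sets ($A_1=\{a\}$, $A_2=\{b\}$) forcing distortion at least $1+\frac{2}{\alpha}$ and hence forcing any good mechanism to use $\alpha>1+\sqrt{2}$, and one with $A_1=A_2=\{a,b\}$ forcing distortion at least $\frac{3\alpha-1}{\alpha+1}$, which exceeds $2\sqrt{2}-1$ precisely when $\alpha>1+\sqrt{2}$. Your reduction of the information available to $M\in\ord\cap\dis\cap\tas$ on two-alternative line instances to ``preferred alternative plus weak/strong label,'' the information-equivalence (adversary) argument, your interval computations for the labels, and the final observation that $\min_{\alpha}\max\left\{\frac{3\alpha-1}{\alpha+1},1+\frac{2}{\alpha}\right\}=2\sqrt{2}-1$ at $\alpha=1+\sqrt{2}$ are all correct and match the paper.

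However, as written your proposal establishes neither branch. For the easy branch, the placement you give (``strong-$a$ agents at $a$ and strong-$b$ agents at $b$'') yields $\SC(a)=\SC(b)$, i.e.\ a forced ratio of $1$, not $1+\frac{2}{\alpha}$; the adversary must instead push the strong agent on the chosen side to the boundary of its permitted interval. With alternatives at $0$ and $1$ and the mechanism choosing $a$, the paper places the strong-$a$ agent at $\frac{1}{\alpha+1}-\varepsilon$ and the strong-$b$ agent at $1$, giving $\SC(a)\approx\frac{\alpha+2}{\alpha+1}$ and $\SC(b)\approx\frac{\alpha}{\alpha+1}$. For the hard branch you correctly diagnose that off-segment placements are required, but then invoke \citeauthor{abramowitz2019awareness} wholesale rather than exhibiting an instance; since the paper itself remarks that the bound ``already follows'' from their work this is defensible, but the missing construction is short and a self-contained proof should include it: both agents report $A_i=\{a,b\}$ with $a\succ_1 b$ and $b\succ_2 a$; if the mechanism picks $a$, place agent $1$ at $\frac{1}{2}-\varepsilon$ and agent $2$ at $\frac{\alpha}{\alpha-1}$, outside the segment, so that $d(2,a)/d(2,b)=\alpha$ and both alternatives remain approved. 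Then $\SC(a)\approx\frac{3\alpha-1}{2(\alpha-1)}$ and $\SC(b)\approx\frac{\alpha+1}{2(\alpha-1)}$, a ratio of $\frac{3\alpha-1}{\alpha+1}>2\sqrt{2}-1$ for $\alpha>1+\sqrt{2}$. (This is what the paper does; the position $\frac{\alpha}{1+\alpha}$ printed there for the second agent appears to be a typo for $\frac{\alpha}{\alpha-1}$, as only the latter is off-segment and matches the stated social costs.) In short: right approach and right branch structure, but one branch is computed incorrectly and the other is cited rather than proved.
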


\begin{proof}
Let $\varepsilon >0$ be an infinitesimal, and assume by contradiction that a mechanism with distortion strictly less than $2\sqrt{2}-1$ exists. Consider any such mechanism. We consider the following two instances with two alternatives $a$ and $b$, and two agents $1$ and $2$. 

\begin{itemize}
\item[-] First instance: $A_1=\{a\}$, $a \succ_1 b$, $A_2 = \{b\}$, and $b \succ_2 a$. Given this information, the mechanism can choose any of the two alternatives as the winner, say $a$. However, the positions of the agents might be the following ones: The two alternatives are at $0$ and $1$, and the two agents are at $\frac{1}{\alpha+1}-\varepsilon$ and $1$; it is not hard to verify that the distances are consistent with the information provided to the mechanism. Then, the social cost of $a$ is approximately $\frac{2+\alpha}{1+\alpha}$, whereas the social cost of $b$ is $\frac{\alpha}{1+\alpha}$, leading to a distortion of at least $1+\frac{2}{\alpha}$. Hence, to achieve a distortion strictly less than $2\sqrt{2}-1$, the mechanism must use $\alpha > 1+\sqrt{2}$.

\item[-] Second instance: $A_1=\{a,b\}$, $a \succ_1 b$, $A_2 = \{a,b\}$, and $b \succ_2 a$. Given this information, the mechanism can again choose any of the two alternative as the winner, say $a$. However, the positions of the agents might be the following ones: The two alternatives are at $0$ and $1$, and the two agents are at $1/2-\varepsilon$ and $\frac{\alpha}{1+\alpha}$; it is not hard to verify that the distances are consistent with the information provided to the mechanism. Then, the social cost of $a$ is approximately $\frac{3\alpha-1}{2(\alpha-1)}$, whereas the social cost of $b$ is $\frac{1+\alpha}{2(\alpha-1)}$, leading to a distortion of at least $\frac{3\alpha-1}{1+\alpha}$. For $\alpha > 1+\sqrt{2}$, $\frac{3\alpha-1}{1+\alpha} > 2\sqrt{2}-1$, a contradiction.
\end{itemize}
This completes the proof.
\end{proof}

We next consider the case of mechanisms that have access to the $\alpha$-TAS and the distances between alternatives, but not the ordinal preferences. 
We prove that the {\sc $2$-Minisum-TAS-Distance} mechanism (recall its definition in Mechanism~\ref{mech:minisum-tas-distance}) achieves a distortion of $2$ when the metric is a line, and this is the best possible among all mechanisms that use this type of information.

\begin{theorem}\label{thm:sc-line-locations-upper}
When the metric space is a line, the distortion of {\sc $\alpha$-Minisum-TAS-Distance} is at most $\max\{\alpha,1+\frac{2}{\alpha}\}$.
\end{theorem}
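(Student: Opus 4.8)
The plan is to reuse the skeleton of the proof of Theorem~\ref{thm:sc-general-upper} and sharpen it using the linear order. Let $o$ be an optimal alternative and $w$ the alternative returned by {\sc $\alpha$-Minisum-TAS-Distance}, and for each agent $i$ let $x_i, y_i \in A_i$ be the approved alternatives closest to $w$ and to $o$, respectively. Writing $f(x) = \sum_{i \in N} \min_{j \in A_i} d(j,x)$ for the minisum objective, the defining property of $w$ (namely $f(w) \le f(o)$) together with the triangle inequality yields, exactly as before, $\SC(w) \le \sum_{i} d(i,x_i) + \sum_{i} d(y_i,o)$. My goal is to bound the right-hand side by $\max\{\alpha, 1+2/\alpha\} \cdot \SC(o)$, and the whole improvement over the general-metric factor $2+1/\alpha$ must come from the one-dimensional geometry.

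First I would dispose of the easy regime agent by agent. If $o \in A_i$ then $y_i = o$, so $d(y_i,o)=0$, while $d(i,x_i) \le \alpha\, d(i,o_i) \le \alpha\, d(i,o)$; such an agent contributes at most $\alpha\, d(i,o)$, which is the first branch of the maximum. If $o \notin A_i$, I orient the line so that $o$ lies to the right of $i$ and examine which side of $i$ the approval set $A_i$ occupies. When $A_i$ contains an alternative on the same side as $o$, the closest such alternative $y_i$ lies in the segment between $i$ and $o$, so $d(y_i,o) \le d(i,o)$; combined with $d(i,x_i) \le \alpha\, d(i,o_i) < d(i,o)$ (using that $o \notin A_i$ forces $\alpha\, d(i,o_i) < d(i,o)$), the agent contributes less than $2\,d(i,o) \le \max\{\alpha, 1+2/\alpha\}\, d(i,o)$, since $\max\{\alpha, 1+2/\alpha\} \ge 2$ for every $\alpha \ge 1$.

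The remaining case is the delicate one: $o \notin A_i$ and all of $A_i$ lies strictly on the opposite side of $i$ from $o$. On a line this forces $y_i = o_i$ and $d(y_i,o) = d(i,o) + d(i,o_i)$. If in addition $w$ lies on the same side of $i$ as $o$, then the nearest approved alternative to $w$ is again $o_i$, so $x_i = o_i$ and the contribution is $d(i,o) + 2\,d(i,o_i) < (1 + 2/\alpha)\,d(i,o)$, exactly the second branch. The genuine obstacle is the sub-case where $w$ and $o$ fall on opposite sides of $i$ while $A_i$ sits entirely on $w$'s side: here $d(i,x_i)$ can be as large as $\alpha\,d(i,o_i)$ and the per-agent estimate degrades back to $2+1/\alpha$. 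I expect this to be the crux of the argument.

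To resolve it, the plan is to stop charging $d(x_i,w)$ against $d(y_i,o)$ for these agents and instead exploit the global minimality of $f$: an agent of this type contributes a large term $\min_{j \in A_i} d(j,w)$ to $f(w)$ precisely when $w$ is pushed far onto the side away from $o$, so summing the optimality inequality $f(w) \le f(o)$ over such agents bounds how far $w$ can drift and lets their aggregate contribution be absorbed into $(1+2/\alpha)\,\SC(o)$. Alternatively, mirroring the worst-case reduction used in the proof of Theorem~\ref{thm:sc-line-ordinal-upper}, I would argue that relocating any such agent toward $o$ only increases the ratio $\SC(w)/\SC(o)$, thereby removing this configuration from any worst-case instance. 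Either way, taking the maximum of the two per-agent branches then gives the claimed bound $\max\{\alpha, 1+2/\alpha\}$.
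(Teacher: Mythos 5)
Your skeleton and the first three cases (agents with $o \in A_i$; agents with $o \notin A_i$ whose set $A_i$ has an element on the $o$-side; agents with $A_i$ and $w$ on the far side of $i$ but $x_i = o_i$) are sound, but the proof is incomplete exactly where you say it is, and neither of your two patches closes the gap. Strategy (b) fails for a concrete reason: the sets $A_i$ are not free reports but are induced by the metric, so relocating an agent toward $o$ changes her $\alpha$-TAS, hence changes the input to {\sc $\alpha$-Minisum-TAS-Distance}, and the mechanism may then no longer output $w$ at all. The analogous relocation in the proof of \Cref{thm:sc-line-ordinal-upper} is legitimate there because the comparison is between two fixed alternatives $x$ and $y$ and the move can be checked to preserve the weighted-majority outcome; no such invariance holds, or is argued, for the minisum rule. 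Strategy (a) is not an argument as stated: ``summing the optimality inequality $f(w) \le f(o)$ over such agents'' is not a defined operation, and that single global inequality is precisely what produced the skeleton bound $\SC(w) \le \sum_i d(i,x_i) + \sum_i d(y_i,o)$ in the first place, so re-invoking it just reproduces the skeleton, whose per-agent charge for the bad agents is $\left(2+\frac{1}{\alpha}\right)d(i,o)$ --- strictly above the target $\max\left\{\alpha,1+\frac{2}{\alpha}\right\}$ whenever $\alpha < 1+\sqrt{2}$, in particular for the optimized choice $\alpha = 2$.

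The idea you are missing is that the paper never compares $w$ with $o$ at this point; it uses \emph{local} optimality of $w$ against the alternative $w'$ adjacent to $w$ on the $o$-side. Since on a line each $A_i$ is an interval with endpoints $\ell_i, r_i$, one sets $L = \{i : r_i \le w\}$ and $R = \{i : \ell_i > w\}$; moving the candidate from $w$ to $w'$ decreases the minisum objective by $d(w,w')$ for every agent in $R$, increases it by $d(w,w')$ for every agent in $L$, and leaves it unchanged for everyone else, so minimality of $w$ forces $|R| \le |L|$. This counting fact enables a cross-agent charging argument that your per-agent accounting cannot replicate: pairing each $i \in R$ injectively with some $j \in L$, one shows $d(i,w) + d(j,w) \le d(i,o) + \left(1+\frac{2}{\alpha}\right)d(j,o)$ (either the slack $d(w,o)$ contained in $d(j,o)$ when $j$ is left of $w$, or the bound $d(j,w) \le \frac{1}{\alpha}d(j,o)$ when $j$ is right of $w$, pays for the excess of $d(i,w)$ over $d(i,o)$), while every agent with $w \in A_i$ is bounded directly by $d(i,w) \le \alpha\, d(i,o_i) \le \alpha\, d(i,o)$ rather than through the lossy skeleton. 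Some compensation scheme of this kind is unavoidable: your own analysis shows the bad agents alone can cost $\left(2+\frac{1}{\alpha}\right)d(i,o)$, so a correct proof must exhibit other agents that compensate, and it is the comparison of $w$ with its neighbor --- not with $o$ --- that guarantees such agents exist in sufficient number.
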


\begin{proof}
Since the metric space is a line, the sets $A_i$ define contiguous intervals. Let $\ell_i$ and $r_i$ denote the leftmost and rightmost alternatives in the $\alpha$-TAS $A_i$ of agent $i$, respectively. Let $o$ be an optimal alternative, and assume, without loss of generality, that $o$ is to the right of the alternative $w$ chosen by the mechanism. 

Consider the set $L = \{i: r_i \leq w\}$ of all agents for whom $A_i$ ends before $w$ or at $w$ itself, and the set $R = \{i: \ell_i > w\}$ of all agents whose intervals $A_i$ begin after $w$. Then, we claim that $|R| \leq |L|$. To see this, consider how the quantity $\sum_{i \in N} \min_{j \in A_i} d(j,w)$ compares to $\sum_{i \in N} \min_{j \in A_i} d(j,w')$, where $w'$ is the alternative directly to the right of $w$. When moving from $w$ to $w'$, this quantity decreases by $d(w,w')$ for each agent in $R$, increases by $d(w,w')$ for each agent in $L$, and stays the same for any other agent $i$ (since $w$ and $w'$ belong to $A_i$). Thus, by the definition of the mechanism, it must be that $|R|\leq |L|$, as otherwise $w'$ would be selected instead of $w$. 

We now proceed to prove the distortion bound. 
Since $|R|\leq |L|$, let $f:R\rightarrow L$ be a one-to-one mapping. 
We make the following observations:
\begin{itemize}
    \item[-] For every agent $i\in R$, let $j=f(i)\in L$. If $j$ is to the left of $w$, then 
    \begin{align*}
        d(j,w)+d(i,w) 
        &\leq d(j,w) + d(i,o) + d(w,o) \\
        &= d(j,o)+d(i,o).
    \end{align*}
    If, instead, $j$ is to the right of $w$, then it must be that $w$ is the closest alternative to $j$, by the definition of $L$. Therefore, since $o\not\in A_j$, we have that $d(j,w) \leq \frac{1}{\alpha}d(j,o)$. Combining this with the triangle inequality, we obtain
    \begin{align*}
    d(j,w)+d(i,w) 
    &\leq 2d(j,w)+d(j,o)+d(i,o) \\
    &\leq \left(1+\frac{2}{\alpha}\right) d(j,o) + d(i,o).
    \end{align*}
    \item[-] For every agent $i$ with $w\in A_i$, we have that 
    $$d(i,w) \leq \alpha d(i,o_i)\leq \alpha d(i,o).$$
    Notice that this must be true for every agent $i \not\in R\cup L$.
\end{itemize}
Combing the above, we obtain
\begin{align*}
\SC(w) 
    &=  \sum_{i \in L} d(i,w) + \sum_{i \in R} d(i,w) + \sum_{i \not\in R\cup L} d(i,w) \\
    &\leq  \left(1 + \frac{2}{\alpha} \right) \sum_{i \in L} d(i,o) + \sum_{i \in R} d(i,o) + \alpha \sum_{i \not\in R\cup L} d(i,o)\\
    &\leq \max\left\{\alpha, 1 + \frac{2}{\alpha}\right\} \cdot \SC(o). 
\end{align*}
Consequently, the distortion is at most $\max\left\{\alpha, 1 + \frac{2}{\alpha}\right\}$.
\end{proof}

By optimizing over $\alpha$, we obtain an upper bound of $2$.

\begin{corollary}
When the metric space is a line, the distortion of {\sc $2$-Minisum-TAS-Distance} is at most $2$.    
\end{corollary}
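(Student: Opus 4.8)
The plan is to obtain the corollary as an immediate consequence of \Cref{thm:sc-line-locations-upper} by choosing $\alpha$ to minimize the upper bound $\max\{\alpha, 1 + \frac{2}{\alpha}\}$ over $\alpha \geq 1$. First I would observe that the first term $\alpha$ is strictly increasing in $\alpha$, while the second term $1 + \frac{2}{\alpha}$ is strictly decreasing in $\alpha$. Consequently, the pointwise maximum of the two is a unimodal function of $\alpha$, minimized precisely at the crossover point where the two terms coincide.

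Next I would solve for that crossover point by setting the two expressions equal:
\begin{align*}
\alpha = 1 + \frac{2}{\alpha} \quad\Longleftrightarrow\quad \alpha^2 - \alpha - 2 = 0 \quad\Longleftrightarrow\quad (\alpha - 2)(\alpha + 1) = 0.
\end{align*}
The only root with $\alpha \geq 1$ is $\alpha = 2$. Substituting this value back into the bound gives $\max\{2, 1 + \frac{2}{2}\} = \max\{2, 2\} = 2$, which is exactly the distortion achieved by {\sc $2$-Minisum-TAS-Distance}. Since any other choice of $\alpha$ makes at least one of the two terms strictly larger than $2$, this value of $\alpha$ is optimal for the stated bound.

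There is no real obstacle here: the entire argument is a one-variable optimization over a maximum of a monotone-increasing and a monotone-decreasing function, and the only substantive content — the distortion bound $\max\{\alpha, 1 + \frac{2}{\alpha}\}$ as a function of $\alpha$ — has already been established in \Cref{thm:sc-line-locations-upper}. The corollary therefore amounts to plugging in the balancing value $\alpha = 2$ and simplifying.
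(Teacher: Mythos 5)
Your proof is correct and matches the paper's approach exactly: the paper derives this corollary from \Cref{thm:sc-line-locations-upper} simply "by optimizing over $\alpha$," and your explicit solution of $\alpha = 1 + \frac{2}{\alpha}$ yielding $\alpha = 2$ (hence the bound $\max\{2,2\}=2$) is precisely that optimization spelled out.
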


Below we prove the lower bound that establishes that {\sc $2$-Minisum-TAS-Distance} is best possible.

\begin{theorem}\label{thm:sc-line-locations-lower}
When the metric space is a line, for the social cost objective, the distortion of any mechanism $M \in \dis \cap \tas$ is at least $2$.
\end{theorem}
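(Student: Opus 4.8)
The plan is to argue by contradiction: assume some $M \in \dis \cap \tas$ achieves distortion strictly less than $2$. The target matches the upper bound $\max\{\alpha, 1 + 2/\alpha\}$ of \Cref{thm:sc-line-locations-upper}, which equals $2$ exactly at $\alpha = 2$, so I would split into the two regimes $\alpha < 2$ and $\alpha \geq 2$ and force a cost ratio of at least $2$ in each. The single most important feature to exploit is that a mechanism in $\dis \cap \tas$ sees only the pairwise distances between alternatives and each $\alpha$-TAS, but \emph{not} the ordinal preferences; hence I can build instances that are identical in this information yet have different optimal alternatives, so the deterministic $M$ must commit to the same winner and be caught being suboptimal in one of them.

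For the regime $\alpha < 2$, I would reuse the first instance from the proof of \Cref{thm:sc-line-ordinal-lower}: two alternatives $a, b$ at distance $1$ and two agents with $A_1 = \{a\}$ and $A_2 = \{b\}$. Given only the distances and the TAS, $M$ outputs a fixed winner $w$; placing the agent whose TAS is $\{w\}$ just inside the boundary of its approval region (at distance $\tfrac{1}{\alpha+1} - \varepsilon$ from its approved alternative) and the other agent exactly on its approved alternative makes $w$ suboptimal by a factor approaching $1 + 2/\alpha$. Since $\dis \cap \tas$ is a weaker information class than $\ord \cap \dis \cap \tas$, this lower bound carries over verbatim, and $1 + 2/\alpha > 2$ for $\alpha < 2$ yields the contradiction.

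For the regime $\alpha \geq 2$, the two-agent instance no longer suffices, and here I would use a construction genuinely specific to the absence of ordinal information: a \emph{single} agent and two alternatives $a, b$ at distance $1$, with $A_1 = \{a, b\}$. The mechanism sees only $d(a,b) = 1$ and $A_1 = \{a,b\}$, and must output some $w \in \{a,b\}$. Whichever it picks, I place the agent at distance $\tfrac{\alpha}{\alpha+1}$ from $w$ and $\tfrac{1}{\alpha+1}$ from the other alternative; at this point the other alternative is exactly a factor $\alpha$ nearer, so it remains approved (the TAS is unchanged at $\{a,b\}$) while $w$ is suboptimal by exactly $\alpha \geq 2$. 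A mechanism with ordinal access would learn which alternative is nearer and escape this trap, which is precisely why $\ord \cap \dis \cap \tas$ can beat $2$ on the line while $\dis \cap \tas$ cannot.

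The main obstacle is the bookkeeping that makes indistinguishability rigorous: in each case I must verify that the candidate placements induce \emph{exactly} the same alternative distances and the same approval sets, and that each placement is consistent with its declared $\alpha$-TAS. The two delicate thresholds are the opposite alternative sitting at \emph{exactly} factor $\alpha$ (keeping it inside the set while pushing the ratio up to $\alpha$ in the single-agent case) and the position $\tfrac{1}{\alpha+1}$ keeping the excluded alternative just \emph{outside} the set in the two-agent case. Once these consistency checks are in place, the two regimes meet at $\alpha = 2$, so no choice of $\alpha$ lets $M$ dip below distortion $2$, completing the contradiction.
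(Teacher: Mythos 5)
Your proposal is correct and takes essentially the same approach as the paper's proof, which also plays two instance families against each other: one where all agents approve both alternatives (yielding distortion at least $\alpha$, hence forcing $\alpha < 2$) and one with singleton approval sets where an agent sits just inside its approval boundary (yielding distortion at least $1+2/\alpha > 2$). Your organization into regimes $\alpha < 2$ and $\alpha \geq 2$, the normalization $d(a,b)=1$ instead of $1+\alpha$, and the use of a single agent in the both-approved instance are only cosmetic differences.
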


\begin{proof}
Assume by contradiction that a mechanism with distortion strictly less than $2$ exists, and consider any such mechanism. We consider the following two instances with alternative locations $0$ and $1+\alpha$. 

In the first instance, all agents include both alternatives in their $\alpha$-approval sets. As there is no way of distinguishing the two alternatives, the mechanism can choose any of them as the winner. If it chooses the one at $0$, then the agents may be located at $\alpha$, leading to a distortion of $\alpha$. 
If the mechanism chooses the alternative at $1+\alpha$, then the agents may be located at $1$, leading again to a distortion of $\alpha$. Overall, the distortion of the mechanism is at least $\alpha$. Therefore, the mechanism must require that $\alpha < 2$ to achieve distortion strictly less than $2$.

In the second instance, there are two agents with $A_1 = \{0\}$ and $A_2=\{1+\alpha\}$. Consider the following two cases depending on the decision made by the mechanism:
\begin{itemize}
    \item The winner is the alternative at $0$: 
    The first agent is located at $1-\varepsilon$ (so that the distance from $0$ is $1-\varepsilon$ while the distance from $1+\alpha$ is $\alpha+\varepsilon$, leading to the alternative at $0$ being the only one in $A_1$) and the second agent is located at $1+\alpha$. Hence, the social cost of $0$ is approximately $1 + 1+\alpha = 2+\alpha$, whereas the social cost of $1+\alpha$ is approximately $\alpha$, leading to a distortion of at least $1 + 2/\alpha$. 

    \item The winner is the alternative at $1+\alpha$: 
    The first agent is located at $0$, and the second agent is located at $\alpha+\varepsilon$ (so that the distance from $0$ is $\alpha+\varepsilon$ while the distance from $1+\alpha$ is $1-\varepsilon$, leading to the alternative at $1+\alpha$ being the one in $A_2$). Hence, the social cost of $0$ is approximately $\alpha$, whereas the social cost of $1+\alpha$ is approximately $1+\alpha + 1 = 2+\alpha$, leading to a distortion of at least $1+2/\alpha$. 
\end{itemize}
Overall, the distortion of the mechanism is least $1+2/\alpha$ in the second instance. Since $\alpha < 2$ from the first instance, we have that $1+2/\alpha > 2$, a contradiction.  
\end{proof}

Finally, we remark that the proof of  \Cref{thm:sc-lower-only-approval-sets} already uses an instance that is on the line metric, so we have the following straightforward corollary.

\begin{corollary}\label{cor:sc-lower-only-approval-sets}
For the line metric, for the social cost objective, the distortion of any mechanism $M \in \tas$ is $\Theta(n)$.
\end{corollary}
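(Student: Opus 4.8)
The plan is to observe that both directions of Theorem~\ref{thm:sc-lower-only-approval-sets} already transfer to (or operate entirely within) the line metric, so that no new mechanism and no new hard instance need to be constructed. In other words, I would argue that the corollary is obtained simply by specializing the general-metric argument to the line.

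For the upper bound, I would note that the mechanism exhibited in the proof of Theorem~\ref{thm:sc-lower-only-approval-sets} — namely, set $\alpha = 1$ so that each $\alpha$-TAS is the singleton $\{o_i\}$, then return the most-preferred alternative $o_i$ of an arbitrary fixed agent $i$ — makes no use of any structural property of the ambient metric. Its analysis invokes only the triangle inequality, which holds on the line as on any metric space, so the same chain of inequalities yields $\SC(w) \leq (1+2n)\cdot\SC(o)$ and hence distortion $O(n)$ when the metric is restricted to a line.

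For the lower bound, I would point out that the worst-case instance built in that proof is \emph{already} realized on the real line: alternative $a_1$ and agent $1$ are placed at the point $1$, alternative $a_n$ and agent $n$ at the point $0$, and each remaining pair $(i,a_i)$ at a distinct point within the $\varepsilon$-neighborhood of $0$. Because every agent sits exactly at its own (unique) approved alternative, the reported singleton $\alpha$-TAS are consistent with this line embedding for every value of $\alpha$, and the computation $\SC(a_1)\to n-1$ against $\SC(a_n)\to 1$ as $\varepsilon\to 0$ carries over verbatim, giving the $\Omega(n)$ bound on the line.

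Combining the matching $O(n)$ and $\Omega(n)$ bounds yields distortion $\Theta(n)$ for any $M \in \tas$ on the line. I do not expect any genuine obstacle here, since this is a direct specialization; the only point meriting a sentence of care is the verification that the lower-bound instance is a bona fide line metric, which holds because every agent and alternative is assigned a point on $\mathbb{R}$ and the distances are the induced absolute differences, so the triangle inequality is automatic.
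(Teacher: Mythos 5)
Your proposal is correct and matches the paper's own reasoning exactly: the paper notes that the lower-bound instance in Theorem~\ref{thm:sc-lower-only-approval-sets} is already a line metric (each agent co-located with its unique approved alternative on $\mathbb{R}$), while the $O(n)$ upper bound via the $\alpha=1$ ``return an arbitrary agent's top choice'' mechanism uses only the triangle inequality and hence restricts to the line. Nothing further is needed.
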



\section{Maximum Cost}


In this section we consider the other most natural objective, i.e., to minimize the maximum cost of any agent, abbreviated as $\text{MC}$. For mechanisms in $\ord \cap \dis \cap \tas$ we show a tight bound of $1+\sqrt{2}$. In the real line metric, we show that there in fact exist mechanisms in either $\ord \cap \tas$ and $\dis \cap \tas$ that achieve the $1+\sqrt{2}$ distortion guarantee, without having access to the third type of information. We also show that the best distortion achievable by any mechanism in $\tas$ is $3$. 

\subsection{Results for General Metric Spaces} \label{sec:mc-general}

We start by proving an upper bound of $1+\sqrt{2}$ for general metric spaces. This is achieved by a novel mechanism in $\ord \cap \dis \cap \tas$, which we refer to as {\sc $\alpha$-Most-Compact-Set}, for $\alpha =1+\sqrt{2}$. In fact, the mechanism requires less information than a typical mechanism in $\ord \cap \dis \cap \tas$, as it only requires access to the most-preferred alternatives of the agents, rather than their complete ordinal preferences. 

Before we define the mechanism, we will prove the following useful lemma relating the choices of any mechanism with the distortion for the maximum cost objective. To differentiate between alternatives that lie at minimum distance from an agent $i$ and alternatives that appear first in the preference ranking of the agent (which might be different due to ties), we will refer to the former as \emph{agent $i$'s min-distance alternative} (rather than agent $i$'s most-preferred alternative). In other words, a min-distance alternative for agent $i$ is an alternative $x$ for which $d(i,x) \leq d(i,x')$ for any $x' \in A$.

\begin{lemma}\label{lem:max-cost-upper-reference-lemma}
Consider a mechanism $M$ that chooses the winner to be an alternative $w$ that satisfies at least one of the following two conditions:
\begin{enumerate}
    \item  $w \in A_j$ for all $j \in N$.
    \item  $w$ is a min-distance alternative for some agent $i \in N$, and $o \notin A_i$, where $o$ is an alternative that minimizes the maximum cost.
\end{enumerate}
Then, for the maximum cost objective, the distortion of $M$ is at most $\max\{\alpha,2+\frac{1}{\alpha}\}$.
\end{lemma}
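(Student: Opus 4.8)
The plan is to bound $\MC(w)/\MC(o)$ directly by a case analysis on which of the two conditions the winner $w$ satisfies, where $o$ denotes an alternative minimizing the maximum cost. The two facts I would rely on throughout are that the most-preferred alternative $o_j$ of any agent $j$ is a min-distance alternative, so that $d(j,o_j)\le d(j,o)$, and that the $\alpha$-TAS encodes exactly the membership test $x\in A_j \iff d(j,x)\le \alpha\cdot d(j,o_j)$; in particular the hypothesis $o\notin A_i$ yields the reverse inequality $d(i,o) > \alpha\cdot d(i,o_i)$.

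First, under condition (1), where $w\in A_j$ for every agent $j$, I would bound each agent's cost for $w$ separately. For each $j$, the $\alpha$-TAS property gives $d(j,w)\le \alpha\cdot d(j,o_j)$, and since $o_j$ is a min-distance alternative we have $d(j,o_j)\le d(j,o)\le \MC(o)$. Taking the maximum over $j$ yields $\MC(w)\le \alpha\cdot\MC(o)$, i.e.\ a distortion of at most $\alpha$.

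Second, under condition (2), where $w$ is a min-distance alternative for some agent $i$ with $o\notin A_i$, the crucial step is to extract a strong upper bound on $d(i,w)$. Since $w$ is min-distance for $i$, we have $d(i,w)=d(i,o_i)$, and since $o\notin A_i$ we have $d(i,o)>\alpha\cdot d(i,o_i)=\alpha\cdot d(i,w)$, hence $d(i,w)<\frac{1}{\alpha}\,d(i,o)$. I would then route the triangle inequality for an arbitrary agent $j$ through the pair $(o,i)$:
\begin{align*}
d(j,w)\ \le\ d(j,o)+d(o,i)+d(i,w)\ \le\ \MC(o)+\MC(o)+\tfrac{1}{\alpha}\,\MC(o),
\end{align*}
using $d(j,o)\le\MC(o)$, $d(i,o)\le\MC(o)$, and the bound just derived. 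Taking the maximum over $j$ gives $\MC(w)\le\left(2+\frac{1}{\alpha}\right)\MC(o)$, i.e.\ a distortion of at most $2+\frac{1}{\alpha}$.

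Combining the two cases gives a distortion of at most $\max\{\alpha,2+\frac{1}{\alpha}\}$, as claimed. I do not expect a genuine obstacle: the only subtlety, and the heart of the argument, is the idea in condition (2) of passing the triangle inequality through the specific agent $i$ for whom $w$ is min-distance, since this is exactly where the hypothesis $o\notin A_i$ can be converted into the factor-$\frac{1}{\alpha}$ savings on $d(i,w)$. Everything else is a direct application of the triangle inequality together with the defining inequalities of the $\alpha$-TAS.
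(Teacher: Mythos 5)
Your proof is correct and follows essentially the same route as the paper's: condition (1) is handled by chaining $d(j,w)\le \alpha\, d(j,o_j)\le \alpha\, d(j,o)$, and condition (2) by the triangle inequality $d(j,w)\le d(j,o)+d(o,i)+d(i,w)$ combined with $d(i,w)<\frac{1}{\alpha}d(i,o)$ extracted from $o\notin A_i$. The only cosmetic difference is that the paper fixes the agent $t$ attaining $\MC(w)$ before applying the triangle inequality, whereas you bound an arbitrary agent and take the maximum afterwards, which is equivalent.
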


\begin{proof}
For the first condition, if $w \in A_j$ for all $j \in N$, then $d(j,w) \leq \alpha \cdot d(j,o_j) \leq \alpha \cdot d(j,o)$, where $o_j$ is the most-preferred alternative of agent $j$. This implies that $\MC(w) \leq \alpha \cdot \MC(o)$.
For the second condition, let $t$ be an agent that is at maximum distance from $w$. We have that
\begin{align*}
\MC(w) = d(t,w) &\leq d(t,o) + d(o,i) + d(i,w) \\
&< d(t,o) + d(i,o) + \frac{1}{\alpha}\cdot d(i,o) \\
&\leq \left(2+\frac{1}{\alpha}\right)\MC(o),
\end{align*}
where the first inequality is due to the triangle inequality, and the second inequality follows from the fact that $o \notin A_i$. 
\end{proof}

Now, our mechanism, coined {\sc $\alpha$-Most-Compact-Set}, works as follows: 
It first checks if there exists some alternative $x$ that is approved by all agents, and, if this is indeed the case, it then chooses $x$ as the winner $w$. 
Otherwise, for each agent $i$, it computes the maximum distance $\rho_i$ of the most-preferred alternative $o_i$ of $i$ to any other alternative in $A_i$; then it chooses the most-preferred alternative $o_i$ of the agent $i$ with minimum $\rho_i$. Essentially, in the otherwise case, the mechanism aims to choose the most-preferred alternative of an agent in order to minimize the radius to all other alternatives approved by the agent. See Mechanism~\ref{mech:most-compact-set} for a description using pseudocode. 

\begin{algorithm}[t]
\SetNoFillComment
\caption{\sc $\alpha$-Most-Compact-Set}
\label{mech:most-compact-set}
{\bf Input:} Most-preferred alternatives of agents, distances between alternatives, $\alpha$-TAS\;
{\bf Output:} Winner $w$\;
\uIf{$\forall i \in N, \exists x \in A_i$ }{
    $w \gets x$\;
}
\Else{
    \For{$i \in N$}{
	$\rho_i \gets \max_{x \in A_i} d(x,o_i)$\;
    }
    $w \in \arg\min_{i \in N} \rho_i$ \; 
}
\end{algorithm}

\begin{theorem}\label{thm:mc-general-upper}
In general metric spaces, for the max cost objective, the distortion of {\sc $\alpha$-Most-Compact-Set} is at most $\max\{\alpha, 2+\frac{1}{\alpha}\}$.
\end{theorem}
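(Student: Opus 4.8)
The plan is to apply the reference \Cref{lem:max-cost-upper-reference-lemma} by showing that the winner $w$ chosen by {\sc $\alpha$-Most-Compact-Set} always satisfies at least one of the two conditions in that lemma. The mechanism has two branches, and I would analyze them separately. In the first branch, the mechanism finds some alternative $x$ approved by every agent and sets $w = x$; this immediately means $w \in A_j$ for all $j \in N$, which is exactly Condition~1. So this case is trivial and requires no further work.

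The real content is the \emph{otherwise} branch, where no alternative is approved by all agents. Here I would argue that $w$ satisfies Condition~2 of the lemma. Let $i$ be the agent achieving the minimum $\rho_i$, so that $w = o_i$ is $i$'s most-preferred (min-distance) alternative; the first part of Condition~2 is immediate. The key is to establish that $o \notin A_i$, where $o$ is a cost-minimizing alternative for the max-cost objective. The natural approach is by contradiction: suppose $o \in A_i$. Since we are in the otherwise branch, $o$ is \emph{not} approved by all agents, so there exists some agent $k$ with $o \notin A_k$. The plan is to derive a contradiction with the minimality of $\rho_i$, presumably by showing that $\rho_k < \rho_i$ (or at least $\rho_k \le \rho_i$ with $k$ a valid alternative choice), which would mean the mechanism should have selected $o_k$ rather than $o_i$.

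The main obstacle I anticipate is relating $\rho_i$ and $\rho_k$ through the geometry of the approval sets and the optimal alternative $o$. The quantity $\rho_i = \max_{x \in A_i} d(x, o_i)$ is the radius of agent $i$'s approval set around its center $o_i$. If $o \in A_i$, then $d(o, o_i) \le \rho_i$, and since $o_i \in A_i$ is at distance at most $\alpha \, d(i,o_i)$ from $i$, one can try to bound $\rho_i$ from below in terms of $d(i,o_i)$ or $d(i,o)$. Meanwhile, for agent $k$ with $o \notin A_k$, the approval threshold forces $d(k, o) > \alpha \, d(k, o_k)$, which should let me bound $\rho_k$ from above since every alternative in $A_k$ is within $\alpha \, d(k,o_k)$ of $k$, hence within $2\alpha\, d(k,o_k)$ of $o_k$ by the triangle inequality. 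The delicate step will be chaining these inequalities — involving the triangle inequality among $i$, $k$, $o$, $o_i$, $o_k$ — to contradict $\rho_i \le \rho_k$. If a direct contradiction on the $\rho$ values proves too tight, the fallback is to argue more carefully that whenever $o \in A_i$ for the selected $i$, some other agent's most-preferred alternative would have yielded a strictly smaller radius, again violating the arg-min choice.

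Once $o \notin A_i$ is established in the otherwise branch, \Cref{lem:max-cost-upper-reference-lemma} immediately gives the distortion bound of $\max\{\alpha, 2 + \frac{1}{\alpha}\}$, completing the proof. The overall structure is therefore short: dispatch the first branch via Condition~1, and concentrate all effort on verifying Condition~2 in the second branch through a minimality argument on $\rho_i$.
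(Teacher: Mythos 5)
Your handling of the first branch and of the sub-case $o \notin A_i$ is exactly right and matches the paper. The gap is in your claim that the remaining case cannot occur: it is simply \emph{not true} that $o \in A_i$ (for the arg-min agent $i$) contradicts the minimality of $\rho_i$, and no chaining of triangle inequalities will produce that contradiction, because the configuration is perfectly realizable. Concretely, take two far-apart clusters: agent $1$ with $A_1 = \{o_1, o\}$ and $d(o_1,o)$ small (so $\rho_1$ is tiny and $o \in A_1$), and agent $2$ with $A_2 = \{o_2, c\}$ where $d(o_2,c)$ is large (so $\rho_2 > \rho_1$) and $o \notin A_2$. No alternative is universally approved, the mechanism selects $w = o_1$, and both conditions of \Cref{lem:max-cost-upper-reference-lemma} fail: $w$ is a min-distance alternative only for agent $1$, who approves $o$. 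Intuitively, an agent whose approval set contains the optimum can very well be the agent with the \emph{most} compact set --- that is not evidence of suboptimal choice by the mechanism, and indeed in such instances the mechanism's output is fine; it just cannot be certified by the lemma alone. So your plan of ``dispatch everything through the lemma'' cannot be completed, and the fallback you describe is the same false claim.

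What the paper does instead in the case $o \in A_i$ is a short direct bound in which the minimality of $\rho_i$ is used to \emph{transfer} a radius estimate rather than to derive a contradiction. Since $o \in A_i$, we have $d(o,o_i) \leq \rho_i$, so for the agent $t$ farthest from $w = o_i$,
\begin{align*}
\MC(w) = d(t,o_i) \leq d(t,o) + d(o,o_i) \leq \MC(o) + \rho_i \leq \MC(o) + \rho_j,
\end{align*}
where $j$ is \emph{any} agent with $o \notin A_j$ (such an agent exists precisely because we are in the branch with no universally approved alternative). Then, letting $y \in A_j$ attain $\rho_j$,
\begin{align*}
\rho_j = d(y,o_j) \leq d(j,o_j) + d(j,y) \leq \frac{1}{\alpha}\MC(o) + \MC(o),
\end{align*}
using that $o \notin A_j$ forces $d(j,o_j) < \frac{1}{\alpha} d(j,o) \leq \frac{1}{\alpha}\MC(o)$, and that $y \in A_j$ gives $d(j,y) \leq \alpha\, d(j,o_j) < d(j,o) \leq \MC(o)$. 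Combining yields $\MC(w) \leq \left(2+\frac{1}{\alpha}\right)\MC(o)$. Your instinct that the minimality of $\rho_i$ and the agent not approving $o$ are the key ingredients was correct; they just enter additively in an upper bound, not as two sides of a contradiction.
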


\begin{proof}
If there is an alternative $x$ such that $w \in A_j$ for all $j \in N$, the mechanism chooses it as the winner $w$. In that case, by \Cref{lem:max-cost-upper-reference-lemma}, the distortion of {\sc $\alpha$-Most-Compact-Set} is at most $\max\{\alpha, 2+\frac{1}{\alpha}\}$. If such an alternative does not exist, then the mechanism chooses the most-preferred alternative of some agent $i$, in particular the agent with the smallest $\rho_i$, i.e., the smallest maximum distance from any of her approved alternatives to her most-preferred alternative $o_i$. If $o \notin A_i$, then by \Cref{lem:max-cost-upper-reference-lemma}, the distortion of the mechanism is again at most $\max\{\alpha, 2+\frac{1}{\alpha}\}$. Therefore we will consider the case when $o \in A_i$. By definition of $\rho_i$, this implies that $d(o,o_i) \leq \rho_i$. 

Since there is no alternative $x$ such that $x \in A_i$ for all $i \in N$, there must exist at least one agent $j$ such that $o\not\in A_j$. Also, let $t$ be the agent that has the maximum distance from $w$. 
We have that
\begin{align*}
\MC(w) = d(t,o_i) 
&\leq d(t,o) + d(o,o_i) 
 \leq \MC(o) + \rho_i 
 \leq \MC(o) + \rho_j.
\end{align*}
where the last inequality follows from that fact that $i$ was chosen to be an agent that minimizes $\rho_k=\max_{x \in A_k}d(x,o_k)$ by the mechanism. 
Let $y$ be an alternative in $A_j$ with maximum distance from $o_j$. Then,
\begin{align*}
\rho_j 
= d(y,o_j) 
\leq d(j,o_j) + d(j,y) 
\leq \left(1 + \frac{1}{\alpha} \right) \MC(o),
\end{align*}
where the first inequality is due to the triangle inequality, and the second one is follows since $d(j,o_j) \leq d(j,o) \leq \MC(o)$, and since $o\not\in A_j$ means that $d(j,o_j) < \frac{1}{\alpha} d(j,o) \leq \frac{1}{\alpha} \MC(o)$. 
Putting everything together, we obtain 
\begin{align*}
\MC(w) \leq \left(2+\frac{1}{\alpha}\right) \MC(o).
\end{align*}
Overall, the distortion of the mechanism is at most $\max\{\alpha, 2+\frac{1}{\alpha}\}$, as desired.
\end{proof}

By optimizing over $\alpha$, we obtain the following corollary. 

\begin{corollary}
In general metric spaces, for the max cost objective, the distortion of {\sc $(1+\sqrt{2})$-Most-Compact-Set} is at most $1+\sqrt{2}$.
\end{corollary}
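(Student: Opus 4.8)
The plan is to derive this directly from \Cref{thm:mc-general-upper}, which already bounds the distortion of {\sc $\alpha$-Most-Compact-Set} by $\max\{\alpha, 2+\tfrac{1}{\alpha}\}$ for an arbitrary choice of the threshold parameter $\alpha \geq 1$. All that remains is to select the value of $\alpha$ that minimizes this bound and to verify that this optimal value is exactly $1+\sqrt{2}$, which is precisely the instantiation used by {\sc $(1+\sqrt{2})$-Most-Compact-Set}.

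To carry out the optimization, I would observe that the two quantities inside the maximum behave oppositely as functions of $\alpha$: the term $\alpha$ is strictly increasing, while the term $2+\tfrac{1}{\alpha}$ is strictly decreasing on $[1,\infty)$. Hence the pointwise maximum is unimodal in $\alpha$ and is minimized precisely where the two branches cross, i.e., where $\alpha = 2 + \tfrac{1}{\alpha}$. Multiplying through by $\alpha$ yields the quadratic $\alpha^2 - 2\alpha - 1 = 0$, whose unique root with $\alpha \geq 1$ is $\alpha = 1+\sqrt{2}$.

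Finally, I would substitute $\alpha = 1+\sqrt{2}$ back into both branches to confirm that they agree: trivially the first branch is $\alpha = 1+\sqrt{2}$, and rationalizing the second gives $2 + \tfrac{1}{1+\sqrt{2}} = 2 + (\sqrt{2}-1) = 1+\sqrt{2}$ as well, since $(1+\sqrt{2})(\sqrt{2}-1)=1$. Thus $\max\{\alpha, 2+\tfrac{1}{\alpha}\} = 1+\sqrt{2}$ at this choice, and applying \Cref{thm:mc-general-upper} with this $\alpha$ yields the claimed distortion bound. There is no genuine obstacle here: the statement is a routine corollary obtained by a one-variable optimization of an already-established bound, so the only things worth checking are that the crossing point lies in the admissible range $\alpha \geq 1$ (it does, as $1+\sqrt{2} > 1$) and that it is therefore a legitimate threshold for the mechanism.
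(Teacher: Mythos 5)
Your proposal is correct and matches the paper's approach exactly: the paper obtains this corollary by the same one-variable optimization of the bound $\max\{\alpha, 2+\tfrac{1}{\alpha}\}$ from \Cref{thm:mc-general-upper}, with the crossing point $\alpha^2-2\alpha-1=0$ giving $\alpha=1+\sqrt{2}$. Your explicit verification that both branches equal $1+\sqrt{2}$ at this value is exactly the (omitted) routine computation the paper relies on.
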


\noindent Next, we prove that the bound of $1+\sqrt{2}$ is the best possible for the maximum cost objective, even when all three types of information are available, and even when the metric space is as simple as a line.  

\begin{theorem}\label{thm:mc-lower}
For the max cost, the distortion of any mechanism $M \in \ord\cap\dis\cap\tas$ is at least $1+\sqrt{2}$, even on a line metric. 
\end{theorem}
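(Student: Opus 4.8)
The plan is to use a \emph{pair of indistinguishable instances} on the line: two instances $I_1, I_2$ with the same $\text{Inf}(\cdot)$ (identical ordinal preferences, identical distance between the alternatives, and identical $\alpha$-TAS), but whose cost-minimizing alternative differs. Since any deterministic $M \in \ord \cap \dis \cap \tas$ must return the same alternative on both, it is forced into a suboptimal choice in at least one of them, and the distortion is then at least the resulting ratio. Fixing the mechanism also fixes its threshold $\alpha$, so it suffices to exhibit, for every $\alpha \geq 1$, such a pair witnessing distortion at least $1+\sqrt{2}$.

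For the geometric family I would place two alternatives $c_1, c_2$ at positions $0$ and $D$ on the line (so their known distance is $D$), together with two agents placed symmetrically. In $I_1$ I put one agent at $-1$ and one at $+1$; in $I_2$ I place them at $D+1$ and $D-1$. A direct check gives that in $I_1$ the alternative $c_1$ has maximum cost $1$ while $c_2$ has maximum cost $D+1$, and in $I_2$ the roles swap, so the optimal alternative is $c_1$ in $I_1$ and $c_2$ in $I_2$; whichever alternative $M$ outputs is the bad one in one of the two instances, giving distortion $D+1$. After relabeling agents, both instances induce the same ordinal data (one agent prefers $c_1$, one prefers $c_2$) and the same inter-alternative distance $D$, so everything reduces to choosing $D$ as large as possible subject to the two instances also having \emph{identical} $\alpha$-TAS.

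The analysis then splits on $\alpha$, and this is where the threshold $1+\sqrt{2}$ emerges. For $\alpha < 1+\sqrt{2}$ I make each agent approve only its own nearest alternative; matching the TAS across both twins requires $D > \alpha-1$ and $D < 1+\tfrac{1}{\alpha}$, a range that is nonempty precisely when $\alpha < 1+\sqrt{2}$, and pushing $D \to 1+\tfrac{1}{\alpha}$ yields distortion $D+1 \to 2+\tfrac{1}{\alpha} > 1+\sqrt{2}$. For $\alpha \geq 1+\sqrt{2}$ I instead let each agent approve \emph{both} alternatives; now TAS-consistency requires $1+\tfrac{1}{\alpha} \leq D \leq \alpha-1$ (together with $D < 2$ to keep the preferences as stated), again nonempty exactly when $\alpha \geq 1+\sqrt{2}$, and taking $D = \min\{\alpha-1,\, 2-\varepsilon\}$ gives distortion $\min\{\alpha,\, 3-\varepsilon\} \geq 1+\sqrt{2}$. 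The main obstacle will be exactly this bookkeeping: I must check that the approval sets, the ordinal preferences, and the known distance $D$ all coincide between $I_1$ and $I_2$ \emph{simultaneously}, and it is the approval-set constraints (whether the far agent lies within a factor $\alpha$ of its nearest alternative) that force the two regimes to meet precisely at $\alpha = 1+\sqrt{2}$.
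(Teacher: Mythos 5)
Your proposal is correct and follows essentially the same strategy as the paper's proof: split on the mechanism's threshold at $\alpha = 1+\sqrt{2}$, use two agents and two alternatives on a line, report singleton approval sets when $\alpha < 1+\sqrt{2}$ and full approval sets when $\alpha \geq 1+\sqrt{2}$, and exploit the symmetry of the reported information (your two mirrored, indistinguishable instances $I_1, I_2$ are just an explicit form of the paper's ``given this information, $M$ may choose either alternative, say $a_1$'' step). In the small-$\alpha$ regime your construction is, up to mirroring and rescaling by $\alpha$, identical to the paper's instance and gives the same bound $2+\frac{1}{\alpha}$. In the large-$\alpha$ regime your construction differs in one detail, and the difference is to your advantage: the paper places the near agent at distances $(1,\alpha)$ from the two alternatives (the approval boundary), which makes the optimum's max cost equal to $\max\{\alpha,\frac{\alpha+1}{\alpha-1}\} = \alpha$ when $\alpha > 1+\sqrt{2}$ --- note that the paper's parenthetical claim $\frac{\alpha+1}{\alpha-1}\ge\alpha$ in fact holds only for $\alpha \le 1+\sqrt{2}$, so its instance as written certifies only a ratio of $\frac{\alpha+1}{\alpha-1}$, which falls below $1+\sqrt{2}$ in this regime. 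Your placement, with both agents at distance $1$ from their favorite alternatives and $D = \min\{\alpha-1,\, 2-\varepsilon\}$, keeps the optimum's max cost at $1$ and cleanly yields distortion $\min\{\alpha,\, 3-\varepsilon\} \ge 1+\sqrt{2}$; the consistency constraints you list ($1+\frac{1}{\alpha} \le D \le \alpha-1$, together with $D<2$ for strict preferences) do check out, so your bookkeeping closes and your version of the second case is the more robust of the two.
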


\begin{proof}
Consider any mechanism $M$ that has access to all three types of information. 
To show the lower bound, we will consider the following two cases depending on the value of $\alpha$ that $M$ uses. 

\medskip
\noindent 
\textbf{Case 1: $\alpha \leq 1 + \sqrt{2}$.}
Consider an instance $I_1$ with two agents $\{1,2\}$ and two alternatives $\{a_1,a_2\}$. We have $A_1=\{a_1\}$, $A_2=\{a_2\}$, $a_1 \succ_1 a_2$, and $a_2 \succ_2 a_1$. The distance between the two alternatives is equal to $1+\alpha+\varepsilon$, for some infinitesimal $\varepsilon > 0$. Given this information, $M$ may choose any of the two alternatives as the winner, say $a_1$. We define the following line metric (see~\cref{fig:mc-lower-a}):
\begin{itemize}
    \item[-] Alternative $a_1$ is at $0$;
    \item[-] Agent $1$ is at $1$;
    \item[-] Alternative $a_2$ is at $1+\alpha + \varepsilon$;
    \item[-] Agent $2$ is at $1+2\alpha+\varepsilon$.
\end{itemize}
It is not hard to verify that these positions of the agents and the alternatives are consistent to the information given to the mechanism for any $\alpha \leq 1 + \sqrt{2}$: 
Agent $2$ approves only alternative $a_2$ as $d(2,a_1) = 1+2\alpha + \varepsilon > \alpha^2 = \alpha \cdot d(2,a_2)$; all other information is clearly consistent. 
We have that $\MC(a_1)=1+2\alpha+\varepsilon$ and $\MC(a_2)=\alpha+\varepsilon$, thus obtaining a lower bound of $2+1/\alpha \geq 1+\sqrt{2}$ when $\varepsilon$ tends to $0$.

\begin{figure}[t]
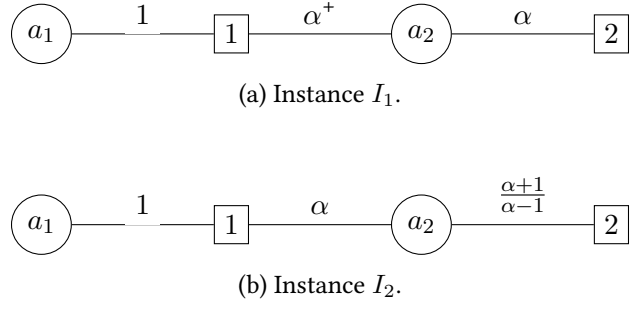

\centering
\begin{subfigure}[t]{\linewidth}
\centering
\tikz {
  \node (a1) [circle, draw] at (0,0) {$a_1$};
  \node (1) [rectangle, draw] at (2.5,0) {$1$};
  \node (a2) [circle, draw] at (5,0) {$a_2$};  
  \node (2) [rectangle, draw] at (7.5,0) {$2$};
  
  \draw (a1) edge node [above, fill=white] {$1$} (1) ; 
  \draw (1) edge node [above, fill=white] {$\alpha^{\text{\footnotesize+}}$} (a2) ; 
  \draw (a2) edge node [above, fill=white] {$\alpha$} (2) ;   
}
\caption{Instance $I_1$.}
\label{fig:mc-lower-a}
\end{subfigure}
\\[20pt]
\begin{subfigure}[t]{\linewidth}
\centering
\tikz {
  \node (a1) [circle, draw] at (0,0) {$a_1$};
  \node (1) [rectangle, draw] at (2.5,0) {$1$};
  \node (a2) [circle, draw] at (5,0) {$a_2$};  
  \node (2) [rectangle, draw] at (7.5,0) {$2$};
  
  \draw (a1) edge node [above, fill=white] {$1$} (1) ; 
  \draw (1) edge node [above, fill=white] {$\alpha$} (a2) ; 
  \draw (a2) edge node [above, fill=white] {$\frac{\alpha+1}{\alpha-1}$} (2) ;   
}
\caption{Instance $I_2$.}
\label{fig:mc-lower-b}
\end{subfigure}
\caption{The line metrics used in the proof of \Cref{thm:mc-lower}. Agents are represented by rectangles and alternatives are represented by circles. A value above an edge represents the distance between the edge's endpoints.}
\label{fig:mc-lower}
\end{figure}

\medskip
\noindent 
\textbf{Case 2: $\alpha > 1 + \sqrt{2}$.} 
Consider an instance $I_2$ with two agents $\{1,2\}$ and two alternatives $\{a_1,a_2\}$. We have $A_1=A_2=\{a_1,a_2\}$, $a_1 \succ_1 a_2$ and $a_2 \succ_2 a_1$. The distance between the two alternatives is equal to $1+\alpha$. Given this information, $M$ may choose any of the two alternatives as the winner, say $a_1$. We define the following line metric (see~\cref{fig:mc-lower-b}):
\begin{itemize}
    \item[-] Alternative $a_1$ is at $0$;
    \item[-] Agent $1$ is at $1$;
    \item[-] Alternative $a_2$ is at $\alpha+1$; 
    \item[-] Agent $2$ is at $\alpha \frac{\alpha+1}{\alpha-1}$.
\end{itemize}
It is not hard to verify that these positions of the agents and the alternatives are consistent to the information given to the mechanism for any $\alpha > 1 + \sqrt{2}$: 
Agent $1$ is at distance $1$ from $a_1$ and distance $\alpha$ from $a_2$. 
Agent $2$ is at distance $\alpha \frac{\alpha+1}{\alpha-1} - (\alpha+1) = \frac{\alpha+1}{\alpha-1}$ from $a_2$ and distance $\alpha+1$ from $a_1$; observe that the ratio of these distances is exactly $\alpha$. 
We have that $\MC(a_1) = \alpha \frac{\alpha+1}{\alpha-1}$ and $\MC(a_2) = \frac{\alpha+1}{\alpha-1}$ (since  $\frac{\alpha+1}{\alpha-1} \geq \alpha$ for any $\alpha > 1+\sqrt{2}$), thus obtaining a lower bound of $\alpha > 1+\sqrt{2}$. 
\end{proof}

We conclude the exposition of our results for general metric spaces by showing that the very simple mechanism {\sc Any-Approved} $\in \tas$ that chooses any alternative in the $\alpha$-TAS of some agent achieves a distortion bound of $2+\alpha$ in terms of the maximum cost objective. By setting $\alpha = 1$, we obtain a distortion bound of $3$, which is the best possible for all mechanisms in $\tas$, and also matches the bound of $3$ that is best possible for any mechanism in $\ord\cap \dis$. 

\begin{theorem}\label{MC-upper-only-sets}
In general metric spaces, for the max cost objective, the distortion of {\sc Any-approved} is at most $2+\alpha$.
\end{theorem}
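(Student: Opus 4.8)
The plan is to track the chosen winner back to the agent whose approval set produced it, and then connect an arbitrary agent to it by routing through the optimal alternative via the triangle inequality. Let $o$ denote an alternative minimizing the maximum cost, so that $d(j,o) \leq \MC(o)$ for every agent $j$. Since {\sc Any-approved} returns some $w \in A_i$ for some agent $i$, the defining property of the $\alpha$-TAS (\Cref{enum3}) gives $d(i,w) \leq \alpha \cdot d(i,o_i)$, and because $o_i$ is a most-preferred alternative of $i$ we also have $d(i,o_i) \leq d(i,o) \leq \MC(o)$; hence $d(i,w) \leq \alpha \cdot \MC(o)$.

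Next I would bound the maximum cost of $w$ by considering an agent $t$ that realizes it. Applying the triangle inequality twice, routing first through $o$ and then through $i$,
\begin{align*}
\MC(w) = d(t,w) &\leq d(t,o) + d(o,w) \\
&\leq d(t,o) + d(o,i) + d(i,w).
\end{align*}
Each of the three summands is now controlled: $d(t,o) \leq \MC(o)$ since $t$ is an agent, $d(o,i) = d(i,o) \leq \MC(o)$ for the same reason, and $d(i,w) \leq \alpha \cdot \MC(o)$ by the previous paragraph. Combining these bounds yields $\MC(w) \leq (2+\alpha)\MC(o)$, which is exactly the claimed distortion guarantee.

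I do not anticipate a genuine obstacle here, as the argument reduces to a short triangle-inequality chain; the only thing requiring care is the choice of the two intermediate points $o$ and $i$. The essential insight is that, because $w$ is approved by agent $i$, the winner cannot lie too far from $i$ (at most $\alpha \cdot \MC(o)$), while $i$ is within $\MC(o)$ of $o$ and every other agent is also within $\MC(o)$ of $o$; the factor $2+\alpha$ is precisely the sum of these three contributions. Finally, setting $\alpha = 1$ in this bound recovers the distortion of $3$ asserted for mechanisms in $\tas$, which matches the best bound attainable in $\ord \cap \dis$.
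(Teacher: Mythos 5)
Your proof is correct and follows essentially the same route as the paper's: both bound $\MC(w)=d(t,w)$ by the triangle-inequality chain through $o$ and $i$, giving $d(t,o)+d(i,o)+d(i,w)\leq 2\MC(o)+\alpha\cdot d(i,o_i)\leq(2+\alpha)\MC(o)$. The only cosmetic difference is that you bound $d(i,w)\leq\alpha\cdot\MC(o)$ up front rather than inline, which changes nothing of substance.
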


\begin{proof}
Consider an arbitrary instance and let $w$ be the alternative chosen by $M$ on this instance. By definition, $w$ is in the $\alpha$-TAS of some $i$. 
Let $j$ be an agent that has the maximum distance from $w$. 
Let $o$ be an optimal alternative, i.e., the one that minimizes the maximum cost. 
Let $o_i$ be the alternative that is closest to the position of agent $i$. 
We have that
\begin{align*}
\MC(w) = d(j,w) &\leq d(j,o) + d(i,o) + d(i,w) \\
& \leq 2\MC(o) + \alpha \cdot d(i, o_i)\\
& \leq (2+\alpha) \MC(o),
\end{align*}
where the first inequality holds by the triangle inequality, the second inequality follows from the definition of $o$ and the fact that $w \in A_i$, 
and the last inequality follows since $d(i,o_i) \leq d(i,o)$ by the definition of $o_i$.
\end{proof}

The tight lower bound for all mechanisms in $\tas$ is shown below. 

\begin{theorem}\label{thm:max-tas-lower}
For the max cost objective, the distortion of any mechanism $M \in \tas$ is at least $3$, even when the metric space is a line.
\end{theorem}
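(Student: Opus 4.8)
The plan is to argue against an arbitrary deterministic $M \in \tas$ that has committed to some threshold $\alpha \ge 1$. Since $M$ observes only the profile of approval sets $(A_i)_i$, I would fix a profile $P$, let $w = M(P)$ be its (single, deterministic) output, and then exhibit a line metric that realises $P$ and makes $\MC(w) \ge (3-o(1))\cdot \min_x \MC(x)$. Two structural observations drive the design. First, by the first case of Lemma~\ref{lem:max-cost-upper-reference-lemma}, if $w \in A_j$ for every agent $j$ then $\MC(w) \le \alpha\,\MC(o)$; hence for $\alpha < 3$ the profile must be chosen so that \emph{no} alternative is approved by all agents, as otherwise $M$ escapes with distortion $\le \alpha < 3$. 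Second, to reach a ratio of $3$ I want a consistent embedding with a cluster of radius $R$ around the optimal alternative $o$ (so $\MC(o)=R$) and some agent at distance $3R$ from $w$; since such a far agent has its nearest alternative at distance $\le R$ and $3R > \alpha R$ for $\alpha<3$, that agent automatically does \emph{not} approve $w$, which conveniently supplies the required non-unanimity of $w$ for free.

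I would first dispose of the extreme thresholds. For $\alpha \ge 3$, take two alternatives $a,b$ and a single agent approving $\{a,b\}$; the profile is symmetric, so WLOG $M$ outputs $a$, and placing the agent at distance $1$ from $b$ and $\alpha$ from $a$ on the line (consistent, with $b$ the favourite) yields $\MC(a)=\alpha$, $\MC(b)=1$, and distortion $\alpha \ge 3$. For $\alpha = 1$ the approval sets are exactly the favourites, and the symmetric matching profile $A_i=\{a_i\}$ already works: $M$ must output some $a_k$, WLOG $a_1$, and I can realise the $a_i$'s and agents on a line with $a_2$ central, all agents within $1+o(1)$ of $a_2$, agent $1$ nudged just off $a_1$ so that $a_1$ stays its unique favourite, and one agent sitting at distance $3$ from $a_1$, giving $\MC(a_1)\to 3$ against $\MC(a_2)\to 1$.

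The substantive case is $1 < \alpha < 3$. Here singleton/matching profiles no longer suffice (a short calculation caps their distortion at $2+\tfrac1\alpha<3$, because the favourite-holder of the chosen alternative must sit near it and hence cannot be too far from the optimum). The right per-target gadget instead uses an agent $s$ sitting at the \emph{edge} of the good cluster, at distance $R$ from $o$, whose threshold just barely reaches $w$: with $w$ at distance $2R$ from $o$ on the far side of $s$, we get $d(s,w)=R \le \alpha R$, so $s$ approves $w$ without being anywhere near it, while a second agent at distance $R$ from $o$ on the opposite side realises $\MC(w)=3R$ against $\MC(o)=R$. The remaining task is to un-unanimise every \emph{other} alternative without inflating the optimum; the key trick is to make the good region a tight cluster of several alternatives and place $o$ at the cluster's edge, so that an agent favouring a cluster-neighbour of $o$ fails to approve $o$ (its threshold does not cover the slightly-farther $o$) yet still lies within $R$ of $o$, leaving $\MC(o)=R$ intact.

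The hard part will be assembling these gadgets into a \emph{single} profile that punishes \emph{every} possible output $w$ simultaneously, subject to the line constraint. On a line the approval sets must be contiguous intervals (as already noted in the proof of Theorem~\ref{thm:sc-line-locations-upper}), which rules out the fully symmetric profiles (all $2$-subsets, or all complements of singletons) that would make every alternative interchangeable, since no single ordering realises those as intervals. I therefore expect the construction to be only reversal-symmetric, or else to exploit that a $\tas$ mechanism sees neither the ordering of the alternatives nor which element of each $A_i$ is the favourite, so that the adversary may choose, as a function of the deterministic winner $w$, a different consistent line embedding that places $w$ at an extreme with all other alternatives in a tight cluster. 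Verifying that one fixed interval profile is simultaneously (i) free of any unanimously-approved alternative and (ii) realisable, for each candidate winner, by such a ``pull-$w$-to-the-end'' embedding with a distance-$3$ witness, is the crux of the argument and the step I expect to require the most care.
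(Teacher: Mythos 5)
Your adversarial framework (fix a profile, let the deterministic mechanism commit to a winner $w$, then choose a consistent line embedding) is the right one, and your $\alpha\geq 3$ case is correct. But the proof has a genuine gap exactly where you flag it: for $1<\alpha<3$ you never exhibit the single fixed profile that simultaneously (i) has no unanimously approved alternative and (ii) admits, for \emph{every} possible winner, a realising embedding with distortion approaching $3$. That assembly step is not a routine verification to be deferred---it is the entire content of the theorem. Moreover, your per-target gadget does not survive it as stated: with $o$ at $0$, $s$ at $R$, $w$ at $2R$ and $t$ at $-R$, agent $s$ approves both $o$ and $w$ while agent $t$ approves $o$, so $o$ \emph{is} unanimously approved, and ``un-unanimising'' it by clustering changes the profile, which must then be re-checked against all other candidate winners. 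Separately, your $\alpha=1$ gadget is internally inconsistent: an agent at distance $3$ from $a_1$ yet within $1+o(1)$ of $a_2$ forces $d(a_1,a_2)\geq 2-o(1)$, while agent $1$ sitting just off $a_1$ and within $1+o(1)$ of $a_2$ forces $d(a_1,a_2)\leq 1+o(1)$; the standard repair places agent $1$ near the midpoint of $a_1,a_2$ (slightly closer to $a_1$) and agent $2$ beyond $a_2$.

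The paper closes precisely this gap with one uniform construction, needing no case analysis on $\alpha$: two agents, five alternatives, $A_1=\{a_1,a_2\}$, $A_2=\{a_3,a_4\}$, and $a_5$ approved by nobody. The two approval sets are disjoint, so nothing is unanimously approved, and the profile is completely symmetric in the eyes of a $\tas$ mechanism (which sees neither distances nor rankings), so without loss of generality the winner is $a_5$ or $a_1$. If it is $a_5$, collapsing both agents with their approved pairs to nearly one point and placing $a_5$ at distance $1$ gives unbounded distortion. If it is $a_1$, the adversary declares $a_1$ to be the \emph{far} element of agent $1$'s pair: agent $1$ sits at distance $1$ from $a_2$ and exactly $\alpha$ from $a_1$, agent $2$ mirrors this with $a_3,a_4$ on the other side, and $a_5$ sits in the middle at distance $\alpha+\varepsilon$ from both agents. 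Then $a_5$ is optimal with $\MC(a_5)\approx\alpha$ while $\MC(a_1)\approx 3\alpha$, giving the bound $3$ for every $\alpha\geq 1$. The two ideas your proposal is missing are exactly these: make the optimum an alternative that \emph{no one} approves (so unanimity never obstructs it), and exploit the mechanism's inability to distinguish the near member from the far member of each agent's approval pair, so that whichever approved alternative it picks can be relabelled as the far one.
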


\begin{proof}
Consider an arbitrary mechanism $M \in \tas$, and an instance with two agents $\{1,2\}$, and five alternatives $\{a_1,\ldots,a_5\}$. That $\alpha$-TAS of the agents are $A_1=\{a_1,a_2\}$ and $A_2=\{a_3,a_4\}$. Let $\varepsilon$ be an infinitesimal. 
If $M$ chooses alternative $a_5$ as the winner, then the distortion can easily be seen to be unbounded as the metric space might be the following: 
\begin{itemize}
    \item[-] Agent $1$ and alternatives $a_1, a_2$ are at $0$; 
    \item[-] Agent $2$ and alternatives $a_3,a_4$ are at $\varepsilon$;
    \item[-] Alternative $a_5$ is at $1$.
\end{itemize}
Hence, the optimal max cost is $\varepsilon$, whereas the max cost of the mechanism is $1$.

\begin{figure}[t]
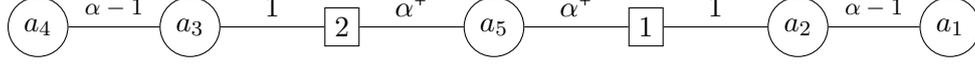

\centering
\tikz {
  \node (a4) [circle, draw] at (-6,0) {$a_4$};
  \node (a3) [circle, draw] at (-4,0) {$a_3$};  
  \node (2) [rectangle, draw] at (-2,0) {$2$};
  \node (a5) [circle, draw] at (0,0) {$a_5$};  
  \node (1) [rectangle, draw] at (2,0) {$1$};
  \node (a2) [circle, draw] at (4,0) {$a_2$};  
  \node (a1) [circle, draw] at (6,0) {$a_1$};
  
  \draw (a4) edge node [above, fill=white] {\footnotesize$\alpha-1$} (a3) ; 
  \draw (a3) edge node [above, fill=white] {$1$} (2) ; 
  \draw (2) edge node [above, fill=white] {$\alpha^{\text{\footnotesize+}}$} (a5) ; 

  \draw (a1) edge node [above, fill=white] {\footnotesize$\alpha-1$} (a2) ; 
  \draw (a2) edge node [above, fill=white] {$1$} (1) ; 
  \draw (1) edge node [above, fill=white] {$\alpha^{\text{\footnotesize+}}$} (a5) ;   
}
\caption{The line metric used in the proof of \Cref{thm:max-tas-lower}.}
\label{fig:max-tas-lower}
\end{figure}

So, suppose that $M$ selects one of the alternatives that is approved by one of the agents, say $w=a_1$. 
We define the following line metric (see also Figure~\ref{fig:max-tas-lower}):
\begin{itemize}
    \item[-] Agent 1 is at $\alpha+\varepsilon$;
    \item[-] Alternative $a_2$ is at $\alpha + 1 + \varepsilon$;
    \item[-] Alternative $a_1$ is at $2\alpha+\varepsilon$
    \item[-] Agent 2 is at $-\alpha-\varepsilon$;
    \item[-] Alternative $a_3$ is at $-\alpha - 1 - \varepsilon$;
    \item[-] Alternative $a_4$ is at $-2\alpha-\varepsilon$;
    \item[-] Alternative $a_5$ is at $0$.
\end{itemize}
It is not hard to verify that the positions of the agents and the alternatives on the line are consistent to the $\alpha$-TAS: 
Agent $1$ is at distance $1$ from $a_2$, $\alpha$ from $a_1$, and strictly larger than $\alpha$ from any other alternative. Similarly, agent $2$ is at distance $1$ from $a_3$, $\alpha$ from $a_4$, and strictly larger than $\alpha$ from any other alternative. The cost of the mechanism is $\MC(w) = d(2,a_1) \approx 3\alpha$, whereas the optimal cost if $\MC(a_5) = d(1,a_5) = d(2,a_5) \approx \alpha$, leading to a distortion of at least $3$.  
\end{proof}


\subsection{Refined Upper Bounds for the Line Metric} \label{sec:mc-line}

The lower bounds of $1+\sqrt{2}$ for mechanisms in $\ord\cap\dis\cap\tas$ (\Cref{thm:mc-lower}) and $3$ for mechanisms in $\tas$ (\Cref{thm:max-tas-lower}) already apply when the metric space is a line. For both cases, tight upper bounds can be achieved for general spaces (\Cref{thm:mc-general-upper} and \Cref{MC-upper-only-sets}). In this section, we show that, when the metric space is a line, the bound of $1+\sqrt{2}$ can be achieved by several mechanisms in $\ord \cap \tas$ and $\dis \cap \tas$ (i.e., without having access to the third type of information). Whether such mechanisms exist for general metric spaces  is an intriguing open problem. 

From a technical perspective, these mechanisms, although different in nature, establish the same kind of properties that are needed for the distortion guarantees to hold. For this reason, we first present all these mechanisms, and then prove their distortion bounds with one, unified proof. Before that, we start with a simple known observation for the maximum cost objective on the line.

\begin{observation}\label{obs:within-agent-interval}
Any mechanism that always chooses the winner to be an alternative that lies in the interval defined by the leftmost and rightmost agents has distortion at most $2$.
\end{observation}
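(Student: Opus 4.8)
The plan is to establish that on the line, any alternative $w$ lying between the extreme agents is at distance at most twice the optimal maximum cost from every agent. Let the leftmost agent be at position $p_L$ and the rightmost agent at position $p_R$, so that all agents lie in the interval $[p_L, p_R]$, and suppose the chosen winner $w$ also lies in $[p_L, p_R]$. Let $o$ be an alternative minimizing the maximum cost, and write $\MC(o) = \max_i d(i,o)$.

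First I would observe that $\MC(o)$ is bounded below by the distance from $o$ to each of the two extreme agents, and more usefully that the length of the agent span satisfies $d(p_L, p_R) \leq \MC(o) + \MC(o) = 2\,\MC(o)$. To see this, note that $o$ lies somewhere on the line, and the farther of the two extreme agents from $o$ has distance at least $d(p_L,p_R)/2$; hence $\MC(o) \geq d(p_L,p_R)/2$, which rearranges to $d(p_L,p_R) \leq 2\,\MC(o)$. This is the key geometric fact specific to the line: the optimal maximum cost cannot be smaller than half the span of the agents.

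Next I would bound $\MC(w)$. Since $w \in [p_L, p_R]$ and every agent also lies in $[p_L, p_R]$, the distance from $w$ to any agent $i$ is at most the full length of the interval, i.e. $d(i,w) \leq d(p_L, p_R)$. Taking the maximum over all agents gives $\MC(w) \leq d(p_L, p_R) \leq 2\,\MC(o)$, which yields the claimed distortion bound of $2$.

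I do not expect any serious obstacle here; the statement is genuinely simple once the span-versus-optimum inequality is isolated. The only point requiring a little care is justifying $\MC(o) \geq d(p_L,p_R)/2$ cleanly — this follows because for any point $o$ on the line, $d(o,p_L) + d(o,p_R) \geq d(p_L,p_R)$ by the triangle inequality, so at least one of the two summands is at least $d(p_L,p_R)/2$, and that summand is at most $\MC(o)$ since $p_L$ and $p_R$ are agents. The rest is immediate from the containment $w, i \in [p_L,p_R]$.
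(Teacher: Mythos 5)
Your proof is correct and follows essentially the same route as the paper's: bound $\MC(w)$ by the span $d(p_L,p_R)$ of the agents, and bound $\MC(o)$ from below by half that span. The paper states these two inequalities without elaboration, while you supply the (correct) triangle-inequality justification for $\MC(o) \geq \frac{1}{2}d(p_L,p_R)$; the substance is identical.
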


\begin{proof}
Let $\ell$ and $r$ be the positions of the leftmost and rightmost agent, respectively. Clearly, the max cost of the mechanism is $\MC(w) \leq d(\ell,r)$, whereas the optimal max cost is $\MC(o) \geq \frac{1}{2}\cdot d(\ell,r)$, and thus the distortion is at most $2$.
\end{proof}

We now present the mechanisms, starting with a member of the class $\ord \cap \tas$, to which we refer as \textsc{Max-TAS-Leftmost}. 
The mechanism computes the set $S^{\tas}$ of alternatives that are members of the most $\alpha$-TAS, and chooses the winner $w$ to be the leftmost alternative in this set; note that is latter is possible on the line given the ordinal preferences, see \Cref{rem:ordinal-prefs-induce-positions}. See Mechanism~\ref{mech:max-tas-min-rank} for a description using pseudocode. The next two mechanisms we consider are members of $\dis \cap \tas$. The first of them is the \textsc{$\alpha$-Minisum-TAS-Distance} that we defined and used in \Cref{sec:sc-general}, which chooses the winner $w$ to minimize the sum of distances from all $\alpha$-TAS; recall its definition in Mechanism~\ref{mech:minisum-tas-distance}.  The second one is a mechanism to which we refer as \textsc{$\alpha$-Minimax-TAS-Distance}. This mechanism is the natural translation of \textsc{$\alpha$-Minisum-TAS-Distance} to the maximum cost objective, as it chooses $w$ to minimize the maximum distance from any $\alpha$-TAS. See Mechanism~\ref{mech:minimax-tas-distance} for its description.

\subsubsection*{Distortion Upper Bounds}

We now show a proof that applies to all the mechanisms defined above at the same time, by establishing a series of claims. For brevity, we will refer to the mechanisms as $M_1$ for {\sc Max-TAS-Leftmost}, $M_2$ for {\sc $\alpha$-Minisum-$\tas$-Distance}, and $M_3$ for {\sc $\alpha$-Minimax-$\tas$-Distance}.

\begin{theorem}\label{thm:mc-line-upper}
When the metric space is a line, for the maximum cost objective, the distortion of any mechanism $M \in \{M_1,M_2,M_3\}$ is at most $\max\{\alpha,2+\frac{1}{\alpha}\}$. 
\end{theorem}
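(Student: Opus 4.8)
The plan is to reduce the entire statement to the dichotomy already captured by Lemma~\ref{lem:max-cost-upper-reference-lemma}, augmented by Observation~\ref{obs:within-agent-interval}. Concretely, I would show that for each of $M_1,M_2,M_3$ the winner $w$ always falls into at least one of three buckets: (i) $w$ is approved by every agent; (ii) $w$ is a min-distance (nearest) alternative of some agent $i$ with $o\notin A_i$, where $o$ minimizes the maximum cost; or (iii) $w$ lies in the interval spanned by the leftmost and rightmost agents. In bucket (i), Lemma~\ref{lem:max-cost-upper-reference-lemma}(1) gives $\MC(w)\le\alpha\cdot\MC(o)$; in bucket (ii), Lemma~\ref{lem:max-cost-upper-reference-lemma}(2) gives $\MC(w)\le(2+\tfrac1\alpha)\MC(o)$; and in bucket (iii), Observation~\ref{obs:within-agent-interval} gives $\MC(w)\le 2\,\MC(o)$. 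Since $2<2+\tfrac1\alpha\le\max\{\alpha,2+\tfrac1\alpha\}$, every bucket yields a bound of at most $\max\{\alpha,2+\tfrac1\alpha\}$, which is exactly the claim. This is precisely what lets a single proof cover all three mechanisms: each one only needs to be shown to land in one of these buckets.

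Throughout I would use the line structure from Remark~\ref{rem:ordinal-prefs-induce-positions}: agents and alternatives carry left-to-right positions, and each $\alpha$-TAS $A_i$ is a contiguous interval of alternatives around agent $i$'s position, reaching distance $\alpha\cdot d(i,o_i)$ to each side. The two ``easy'' buckets come out uniformly. If some alternative is approved by all agents, each mechanism selects one: for $M_2$ and $M_3$ the objective $\sum_i\min_{j\in A_i}d(j,x)$ (resp.\ $\max_i\min_{j\in A_i}d(j,x)$) attains its minimum value $0$ exactly at alternatives lying in every $A_i$, and for $M_1$ a universally approved alternative maximizes the coverage count $|\{i:x\in A_i\}|=n$; this puts $w$ in bucket (i). If instead $w$ lies between the extreme agents, we are already in bucket (iii). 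Hence the only remaining situation is when no alternative is universally approved and $w$ lies strictly outside the agent span, say to the left of the leftmost agent (the right case being symmetric).

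The crux is this last situation, and it is where the per-mechanism arguments enter: I would exhibit an agent $i$ whose nearest alternative is $w$ itself and whose interval $A_i$ excludes $o$, landing us in bucket (ii). For $M_2$ and $M_3$ I would argue from optimality of $w$. Since every approval interval contains its agent, and all agents lie to the right of $w$, the minisum slope $|\{i:w>r_i\}|-|\{i:w<\ell_i\}|$ (in the notation of Theorem~\ref{thm:sc-line-locations-upper}) is nonpositive unless every interval already contains $w$; but the latter is exactly universal approval, which is excluded here. Pushing this, optimality pins $w$ at the nearest alternative of the extreme agent looking toward it, supplying the witness, and the minimax case is analogous. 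For $M_1$ I would use coverage monotonicity: while $w$ stays left of all agents, shifting one alternative to the right cannot decrease the number of covering intervals, so the leftmost-maximum-coverage rule forces $w$ to be the nearest alternative of the leftmost agent that reaches it.

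I expect the main obstacle to be exactly this outside-the-span analysis, i.e.\ turning ``$w$ lies outside the agent span and is not universally approved'' into a concrete nearest-alternative witness for condition~(2) of Lemma~\ref{lem:max-cost-upper-reference-lemma}. The subtlety is twofold. First, an interval endpoint records an agent's \emph{farthest} approved alternative, not its \emph{nearest}, so one must genuinely locate $w$ at an agent's closest alternative rather than at a convenient interval boundary. Second, for $M_1$ the link between coverage-maximality and any single agent's top choice is indirect, and one must rule out the benign-looking configurations where the most-covered alternative is nobody's favorite (these, on inspection, turn out to be universally-approved or in-span cases). Carefully enumerating these boundary configurations on the line and checking that each collapses into bucket (i), (ii), or (iii) is the technical heart of the unified proof.
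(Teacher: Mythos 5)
Your three-bucket decomposition is exactly the skeleton of the paper's proof: universal approval is dispatched by condition (1) of \Cref{lem:max-cost-upper-reference-lemma}, a winner inside the agent span by \Cref{obs:within-agent-interval}, and the remaining case is meant to produce a witness for condition (2). Your sketch of the nearest-alternative part of that witness also mirrors the paper's argument (its \Cref{claim:best-for-in}): compare $w$ with the alternative adjacent to it on the agents' side and show that each of $M_1,M_2,M_3$ would have strictly preferred that alternative otherwise.

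There is, however, a genuine gap: bucket (ii) has \emph{two} requirements, and you only ever argue for one of them. Besides $w$ being a min-distance alternative of some agent $i$, you must show $o \notin A_i$. You assert this once (``whose interval $A_i$ excludes $o$'') and never return to it; your ``main obstacle'' paragraph discusses only the nearest-alternative witness and the $M_1$ coverage issue. The exclusion of $o$ is not automatic from the geometry of the configuration: nothing you establish prevents the extreme agent's threshold set from reaching all the way back to $o$, and if it does, condition (2) of the lemma simply does not apply. The paper needs a second, separate optimality-based contradiction for this (its \Cref{claim:o-notin-Ai}): if $o \in A_{i_n}$, then every alternative in $[o,w]$ --- in particular the alternative $\ell(w)$ adjacent to $w$ on the agents' side --- belongs to $A_i$ for every agent whose min-distance alternative is $w$, while every other agent is strictly closer to $\ell(w)$ than to $w$; hence every approver of $w$ approves $\ell(w)$, and each of $M_1,M_2,M_3$ would have selected $\ell(w)$ instead, a contradiction. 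Without this step the case analysis does not close. A secondary flaw: your supporting claim that ``every approval interval contains its agent'' is false (an agent all of whose approved alternatives lie on one side of her has an approval interval not containing her position). What you actually need --- and what is true because the $\alpha$-TAS are distance-defined --- is that any $A_j$ containing an alternative on the far side of $w$, away from the agents, must contain $w$ itself; consequently every non-approver's set lies entirely on the agents' side of $w$, which is what makes the ``move one alternative toward the agents'' comparison go through for both the minisum and minimax objectives.
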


\begin{proof}
Assume by contradiction that the distortion of $M$ is larger than $\max\{\alpha,2+\frac{1}{\alpha}\}$. Consider an arbitrary instance, and let $i_1$ and $i_n$ be the positions of the leftmost and rightmost agent on the line respectively (breaking ties arbitrarily). Let $w$ be the alternative chosen by $M$ for the given instance, and let $o$ be an optimal alternative that minimizes the maximum cost. 
Clearly, $w \neq o$, as, otherwise, the distortion of $M$ would be $1$. 
Additionally, $w \notin [i_1,i_n]$, as, otherwise, by \Cref{obs:within-agent-interval} the distortion of $M$ would be at most $2$. 
Without loss of generality, we assume that $w > i_n$, and thus $o < w$. 
Finally, there exists some agent $j \in N$ such that $w \not\in A_j$, as, otherwise, the distortion of $M$ would be most $\max\{\alpha,2+\frac{1}{\alpha}\}$ by \Cref{lem:max-cost-upper-reference-lemma}. 

For any alternative $x$, let $\ell(x)$ denote the alternative directly to the left of $x$ on the line, which does not coincide with $x$ (breaking ties arbitrarily). 
We state and prove the following useful claim.

\begin{algorithm}[t]
\SetNoFillComment
\caption{\sc Max-TAS-Leftmost}
\label{mech:max-tas-min-rank}
{\bf Input:} Ordinal preferences, $\alpha$-TAS\;
{\bf Output:} Winner $w$\;

\For{$x \in A$}{
    $n_x \gets |\{i \in N: x \in A_i\}|$\;
}
$S^{\tas} \gets \arg\max_{x \in A}  n_x$\;
$w \gets \text{leftmost alternative in } S^\tas$;
\end{algorithm}

\begin{algorithm}[t]
\SetNoFillComment
\caption{\sc $\alpha$-Minimax-TAS-Distance}
\label{mech:minimax-tas-distance}
{\bf Input:} Distances between alternatives, $\alpha$-TAS\;
{\bf Output:} Winner $w$\;
$w \in \arg\min_{x \in A} \bigg\{ \max_{i \in N} \min_{j \in A_i} d(j,x) \bigg\}$\;
\end{algorithm}

\begin{claim}\label{claim:best-for-in}
$w$ is a min-distance alternative for agent $i_n$.
\end{claim}

\begin{proof}
Assume by contradiction that $w$ is not a min-distance alternative for $i_n$. Then $\ell(w)$ has to be min-distance alternative for agent $i_n$; note that such an alternative exists since $o < w$. Then, for every agent $i \leq i_n$ we have that $w \in A_i \Rightarrow \ell(w) \in A_i$, since $w > \ell(w)$. 
This is enough to establish the claim for the case when $M=M_1$. Indeed, $\ell(w)$ appears in at least as many $\alpha$-TAS as $w$, 
and $\ell(w) < w$, contradicting the choice of $w$ by the mechanism.

Next we prove the claim for for $M \in \{M_2,M_3\}$. As we established in the first paragraph, all agents that find $w$ acceptable also find $\ell(w)$ acceptable and thus for those we have $d(\ell(w),A_j) = d(w,A_j)= 0$. Now consider any agent $j \in N$ such that $w \notin A_j$ and let $S^{\neg w}$ be the set of those agents. Since $w$ is not acceptable for all agents, we have that $|S^{\neg w}| \geq 1$. For any agent $j \in S^{\neg w}$, let $x_r^{j}$ be the rightmost alternative smaller than $w$ in $A_j$ (breaking ties arbitrarily), and observe that $x_r^j \leq \ell(w)$. We have that $d(x_r^j,\ell(w)) < d(x_r^j,w)$. Since this holds for every agent $j \in S^{\neg w}$, this means that $\ell(w)$ has both a smaller total and maximum distance from the $\alpha$-TAS, contradicting the choice of $w$ by the mechanism. 
\end{proof}

Given \Cref{claim:best-for-in}, by \Cref{lem:max-cost-upper-reference-lemma}, it suffices to prove that $o \notin A_{i_n}$, as, otherwise, the distortion would be upper bounded by $\max\{2+\frac{1}{\alpha}\}$, leading to a contradiction. This is established by the following claim, the proof of which is similar to that of \Cref{claim:best-for-in}.

\begin{claim}\label{claim:o-notin-Ai}
$o \notin A_{i_n}$.
\end{claim}

\begin{proof}
Assume by contradiction that $o \in A_{i_n}$. Let $S^w$ be the set of agents for which $w$ is a min-cost alternative, and notice that $i_n \in S^w$ be \Cref{claim:best-for-in}. Notice that for every agent $i \in S^w$, and for any alternative $x \in [o, w]$ we have that $x \in A_i$; in particular, $\ell(w) \in A_i$ for those agents. For agents $i \notin S^w$, it holds that $d(i,\ell(w)) < d(i,w)$, as (a) either $i \geq \ell(w)$ and $\ell(w)$ is a min-cost alternative for $i$, or (b) $i < \ell(w)$. Overall, we have that for any agent $j \in N$, $w \in A_j \Rightarrow \ell(w) \in A_j$, similarly to \Cref{claim:best-for-in}. Again, for the case of $M=M_1$, since $\ell(w) < w$, this contradicts the choice of $w$ as the winner by the mechanism. The proof for $M \in \{M_2,M_3\}$ is identical to that of \Cref{claim:best-for-in}.
\end{proof}
This completes the proof. 
\end{proof}

By optimizing over $\alpha$, we obtain the following corollary, establishing that all these mechanisms are best possible when the metric space is a line. 

\begin{corollary}\label{cor:mc-line-upper}
When the metric space is a line, for the maximum cost objective, the distortion of any mechanism $M \in \{M_1,M_2,M_3\}$ using $\alpha=1+\sqrt{2}$ is at most $1+\sqrt{2}$. 
\end{corollary}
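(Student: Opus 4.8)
The plan is to invoke Theorem~\ref{thm:mc-line-upper}, which already establishes that the distortion of any mechanism $M \in \{M_1, M_2, M_3\}$ on the line is at most $\max\{\alpha, 2 + \frac{1}{\alpha}\}$ for every admissible choice of the threshold parameter $\alpha \geq 1$. Since the mechanism designer is free to select $\alpha$, the best achievable guarantee among these mechanisms is exactly $\min_{\alpha \geq 1} \max\{\alpha, 2 + \frac{1}{\alpha}\}$, so the entire task reduces to a one-dimensional optimization in $\alpha$. All the metric and approval-set reasoning has been discharged by the theorem; nothing further about the line structure is needed here.

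To carry out the optimization, I would observe that the first term $\alpha$ is strictly increasing in $\alpha$ while the second term $2 + \frac{1}{\alpha}$ is strictly decreasing in $\alpha$. Hence their pointwise maximum is minimized precisely at the value where the two expressions coincide, i.e. where $\alpha = 2 + \frac{1}{\alpha}$. Clearing the denominator gives the quadratic $\alpha^2 - 2\alpha - 1 = 0$, whose roots are $1 \pm \sqrt{2}$; the unique root satisfying $\alpha \geq 1$ is $\alpha = 1 + \sqrt{2}$.

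Finally, I would verify that this crossing value indeed yields the claimed bound by substituting $\alpha = 1 + \sqrt{2}$ into both terms. Rationalizing, $\frac{1}{1+\sqrt{2}} = \sqrt{2} - 1$, so $2 + \frac{1}{\alpha} = 2 + (\sqrt{2}-1) = 1 + \sqrt{2} = \alpha$, which confirms $\max\{\alpha, 2 + \frac{1}{\alpha}\} = 1 + \sqrt{2}$. There is no genuine obstacle in this argument, since the substance of the result lives entirely in Theorem~\ref{thm:mc-line-upper} and the corollary is merely a calibration of $\alpha$; the only point requiring the slightest care is to select the admissible root $1 + \sqrt{2}$ rather than the spurious root $1 - \sqrt{2} < 1$, and to note that the optimum is attained at the crossing because the two monotone branches force the minimax there.
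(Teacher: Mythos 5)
Your proposal is correct and matches the paper's own (implicit) argument: the paper simply states ``by optimizing over $\alpha$'' in Theorem~\ref{thm:mc-line-upper}, which is exactly your calculation of the crossing point $\alpha = 2 + \frac{1}{\alpha}$, yielding $\alpha = 1+\sqrt{2}$ and the bound $\max\{\alpha, 2+\frac{1}{\alpha}\} = 1+\sqrt{2}$. Nothing further is needed.
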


\begin{remark}
One can obtain the distortion bound of \Cref{thm:mc-line-upper} using a mechanism in $\dis \cap \tas$ by appropriately modifying a mechanism of \citet{anshelevich2022distortion}. \citeauthor{anshelevich2022distortion} considered a different setting, that of \emph{distributed metric social choice}, where the agents are partitioned into districts, and local decisions within districts are then aggregated at a global level. In their setting, they define a mechanism called $\alpha$-\textsc{Acceptable-Rightmost-Leftmost} ($\alpha$-ARL), which first selects, for each district, the rightmost alternative that the agents therein find ``$\alpha$-acceptable'' and then, the leftmost among these alternatives is chosen as the winner. If we consider districts of size $1$ in their setting, $\alpha$-acceptability reduces to acceptability according to the $\alpha$-TAS in our setting. Note that their mechanism requires the ordering of the alternatives on the line to be able to identify the rightmost and leftmost of them. We can infer this information from the distances between alternatives on the line, and then appropriately modify their mechanism. Theorem~4.8 from their work would then yield the upper bound of our \Cref{thm:mc-line-upper}. Still, we believe that the mechanisms that we consider in this section are much more natural for our setting and much easier to describe.
\hfill $\qed$
\end{remark}


\section{Future Directions} \label{sec:open}
The main problem that our work leaves open is that of identifying the best possible distortion bound in terms of the social cost for general metric spaces and mechanisms in $\ord \cap \dis \cap \tas$, for which we showed an upper bound of $1+\sqrt{2}$ and a lower bound of $2$. Another interesting open question is to show whether $1+\sqrt{2}$ can be achieved with a mechanism in $\ord \cap \tas$ for general metric spaces. For the maximum cost objective, we were able to obtain tight bounds for the general class $\ord \cap \dis \cap \tas$, so the open question is whether these bounds can be achieved by mechanisms in $\ord \cap \tas$ and $\dis \cap \tas$, as we proved to be the case on the real line. For $\dis \cap \tas$, the $(1+\sqrt{2})$-\textsc{Minimax-TAS-Distance} mechanism, which we showed to achieve the tight bound of $1+\sqrt{2}$ on the line, seems like a very natural candidate. Unfortunately, however, we can construct counterexamples where the distortion of this mechanism is lower-bounded by $3$ in general metric spaces. For $\ord \cap \tas$, \textsc{Max-TAS-Leftmost} is obviously not even well-defined in general metric spaces, as the notion of a leftmost alternative is meaningless. That being said, we can design mechanisms in $\ord \cap \tas$ that are based on very similar principles and are not line-specific, which still achieve a distortion of $1+\sqrt{2}$ on the line. Unfortunately, once considered in general metric spaces, these mechanisms also fall short, as similar examples to the aforementioned one show a lower bound of $3$ on their distortion. 

More generally, our paper advocates the study of the interplay between information and efficiency in the metric social choice setting, building on a growing literature that has recently gained significant momentum. As a next step, one could consider eliciting additional cardinal information, for example, by using multiple approval thresholds $\alpha_1, \ldots, \alpha_k$, in conjunction with the other types of information that we use here, and quantify the effect on the distortion. More generally, it would make sense to consider different types of information structures and combinations between them; note, for example, that our $\alpha$-\textsc{Most-Compact-Set} mechanism for the maximum cost uses the distances between alternatives, the $\alpha$-TAS, and information about the top-ranked alternative of each agent. Finally, a very meaningful avenue is to consider how randomization might aid in achieving even more improved distortion bounds, in the presence of the $\alpha$-TAS. 

\bibliographystyle{plainnat}
\bibliography{references}

\end{document}